\newtheorem{thm}{Theorem}
\newtheorem{fct}{Fact}
\newtheorem{dfn}[thm]{Definition}
\newtheorem{lem}[thm]{Lemma}
\newtheorem{obs}[thm]{Observation}
\newcommand{\je}[1]{\textcolor{black}{#1}}
\begin{document}

\title{Quantum complexity  phase  transitions 
in monitored random circuits}

\newcommand{\fu}{Dahlem Center for Complex Quantum Systems, Freie Universität Berlin, Berlin 14195, Germany}

\author{Ryotaro Suzuki}
\email{ryotaro.suzuki@fu-berlin.de}
\affiliation{\fu}

\author{Jonas Haferkamp} 
\affiliation{\fu}
\affiliation{School of Engineering and Applied Sciences,
Harvard University, Cambridge, MA 02318, USA}

\author{Jens Eisert}
\affiliation{\fu}

\author{Philippe Faist}
\affiliation{\fu}

\begin{abstract}
Recently, the dynamics of quantum systems that involve both unitary evolution and quantum measurements
have attracted attention 
due to the exotic phenomenon of 
measurement-induced phase transitions. The latter refers to a sudden change in a property of a state of $n$ qubits, such as its
entanglement entropy, depending on the rate at which individual qubits are measured.
At the same time, quantum complexity 
emerged as a key quantity for the identification of complex behaviour in quantum many-body dynamics.
In this work, we investigate the dynamics of the quantum state complexity in monitored random circuits, where $n$ qubits evolve according to a random unitary circuit
and are individually measured with a fixed probability at each time step.
We find that the 
evolution
of the exact quantum state complexity undergoes a phase transition
when changing the measurement rate.
Below a critical measurement rate, the complexity grows at least linearly in time until 
{saturating to a value $e^{\Omega(n)}$.}
Above, 
the complexity does not exceed 
$\operatorname{poly}(n)$.
%
In our proof, 
we make use of percolation theory to 
{find}
paths along which
an exponentially long quantum computation can be run below the critical rate, and to identify events where the state complexity is reset to zero above the critical rate.
We lower bound the exact state complexity in the former regime using recently developed
techniques 
{from} algebraic geometry.
Our results combine quantum complexity growth, phase transitions, and computation with measurements
to help understand the behavior of monitored random circuits and to make progress towards determining
the computational power of measurements in many-body systems.
\end{abstract}

\maketitle


\section{Introduction}


The evolution of quantum complexity in many-body quantum systems offers a new approach to understand
phenomena in quantum computation, quantum many-body systems, and black hole 
physics~\cite{aaronson2016complexityquantumstatestransformations}:
Complexity is able to capture the long-time behaviour of the quantum dynamics beyond the point
where many physical quantities, such as the entanglement entropy,
equilibrate to their limiting value~\cite{SusskindEntanglement,PhysRevLett.127.020501}.
Quantum complexity might be viewed as a 
measure of the time 
 for which a
suitably chaotic system has been evolving~\cite{SusskindEntanglement}:  Brown and Susskind conjectured
that complexity grows linearly in time for generic quantum dynamics of an $n$-qubit system until
saturating at times exponential in $n$~\cite{brown2018second}.  In contrast, the entanglement
entropy typically saturates after a time linear in $n$.
Versions of this conjecture have been proven in the context of
random circuits~\cite{brandao2021models,haferkamp2022linear,li2022short}.
Many recent results at the interface of quantum complexity and many-body systems have
mainly been driven by the central role that quantum complexity appears to play in the
\emph{anti-de-Sitter space/conformal field theory} (AdS/CFT) correspondence~\cite{PhysRev90.126007,PhysRevLett.116.191301,ChapmanMarrochioMyers,brown2018second,BigComplexity}:
The quantum complexity of the quantum state in a CFT is believed to
correspond to some physical property, such as the volume, of a wormhole contained in the
corresponding AdS space~\cite{SusskindEntanglement, susskind2016computational, Susskind2018PiTP_three,Belin2022PRL_anything}.
Overall, quantum complexity is a measure of the intricacy of the entanglement that is present
in an $n$-qubit state; its physical and operational interpretations in the context of
many-body physics are still being uncovered~\cite{brown2018second,PhysRevB.91.195143, PhysRevResearch.2.013323, PhysRevA.106.062417}.

To study the evolution of complexity of a CFT, 
one often resorts to the simpler 
model of
{local random quantum circuits}, in which the evolution of an $n$-qubit system is modeled by applying 2-qubit gates chosen at random on neighboring qubits.
Local random circuits are expected to reproduce a number of interesting features of chaotic
systems~\cite{fisher2023random, BlackHoles,HaydenPreskill, PhysRevX.8.041019, PhysRevB.102.064305}
while being technically more convenient to analyze than chaotic Hamiltonian 
dynamics~\cite{ActuallyChaoticHamiltonians}.
In the model of random quantum circuits, 
the quantum complexity 
has been proven to grow sublinearly in time until saturating at
times exponential in $n$~\cite{brandao2016local,roberts2017chaos,brandao2021models},
using the toolbox of unitary $t$-designs~\cite{PhysRevA.80.012304,2007JMP....48e2104G}, where the quantum complexity is lower-bounded by $\Omega(\tau^a)$ with time $\tau$ and a positive number $a<1$.
%
%
The linear growth of the exact circuit complexity 
for local random quantum circuits 
was eventually proved 
in Ref.~\cite{haferkamp2022linear} by exploiting
geometric arguments, where the exact complexity is lower-bounded by $\Omega(\tau)$ with time $\tau$.
More precisely,
{the toolbox of algebraic geometry 
enables a quantification
of the dimension of the set of all possible unitaries 
that can be achieved with a fixed
number of gates in a specific circuit layout.  This dimension, called
\emph{accessible dimension}, yields a lower bound the exact quantum complexity of the
random circuit.  The main result of Ref.~\cite{haferkamp2022linear} is a consequence of
the fact that the accessible dimension grows linearly in time until
saturating at a time exponential in $n$ (cf.\@ also simplified proofs in
Ref.~\cite{li2022short}).  We heavily rely upon}
this powerful mathematical toolkit 
in this work.

A different line of research at the interface of computational complexity theory
and many-body physics concerns 
{complexity phase transitions}.
The latter refer to situations where the complexity of solving a particular task undergoes
a sudden and drastic change when a parameter in the problem is 
varied.
Such complexity phase transitions 
\je{have initially been} discussed 
when studying the hardness regime of the random $k$-SAT
problem~\cite{DBLP:conf/ijcai/CheesemanKT91,DBLP:conf/aaai/MitchellSL92}.
More recently, {the complexity of 
classically simulating quantum circuits 
\je{has been} found
to undergo a 
transition}
for \emph{instantaneous 
quantum polynomial} (IQP)
circuits~\cite{bremner2011classical,fujii2016computational,park2022complexity},
linear optical circuits~\cite{aaronson2011computational,deshpande2018dynamical,
PhysRevLett.129.150604,PhysRevResearch.4.L032021},
random quantum circuits~\cite{bouland2019complexity,PhysRevX.12.021021},
and dual-unitary  circuits~\cite{PhysRevB.100.064309,PhysRevLett.123.210601,PhysRevB.101.094304,%
PhysRevLett.126.100603,Suzuki2022computationalpower}. 
Such 
transitions are of particular interest for
{drawing 
\je{and delineating the}
boundaries between the power of classical and quantum computing \cite{PhysRevA.62.062311}.}

\begin{figure*}[tbh]
\centering
\includegraphics[width=158mm]{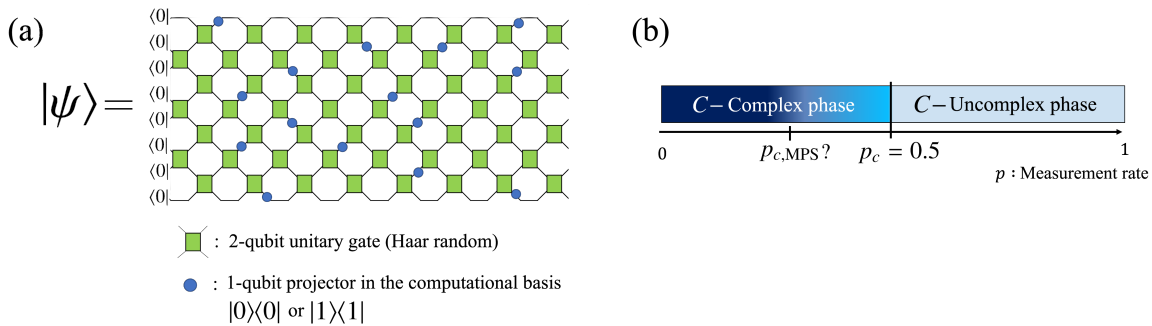}
    \caption{Our setup and the summary of the result.
    (a)~We consider monitored random circuits consisting of two-qubit Haar-random gates (green boxes)
    arranged in staggered layers, where at each time step the individual qubits undergo a measurement
    in the computational 
    basis (thick blue points) with probability $p$.
    The state vector $\ket\psi$ is obtained by applying the circuit onto the computational basis
    state vector $\ket{0^n}$ and conditioning on all the measurement outcomes.
    (b)
    We find that 
    the complexity phase diagram of the monitored random circuit
    exhibits a phase transition at the critical measurement
    rate $p_c=0.5$. The measure of complexity,
    $C(\ket\psi)$, is defined as the
    minimal number of two-qubit gates required to prepare $\ket\psi$ exactly, in any circuit layout.
    In the $C$-complex phase ($p<p_c$), the complexity $C$ 
    grows {at least} linearly 
    {until saturating to a value that is exponential in the system size}.
    In 
    the $C$-uncomplex phase ($p>p_c$), the quantity $C$ {saturates to a value $\operatorname{poly}(n)$ after a time no more than $O(\log (n))$}.
    {The result regarding the uncomplex phase agrees with earlier numerical results on the area law 
    \cite{AreaReview}
    of R\'enyi-0 entropy in the regime $p>p_c$ in Ref.~\cite{PhysRevX.9.031009}.}
    Moreover, another earlier numerical  result~\cite{
 PhysRevB.101.104301} points 
 to a description of the state vector $\ket\psi$
    in terms of a \emph{matrix-product state} (MPS)~\cite{mpsGarcia2007} with $\operatorname{poly}(n)$ bond dimension
    in the region $p \geq p_{c,\mathrm{MPS}}$,
    with $p_{c,\mathrm{MPS}} < 0.5$.  Given that a robust measure 
    of complexity would
    saturate after a time $\sim \operatorname{poly}(n)$ in this region, it is likely the region
    $p_{c,\mathrm{MPS}} < p < p_c$  yields examples
    of states generated by monitored random circuits whose exact complexity grows to large values yet remain
    close in trace norm to a state of low complexity.}
    \label{fig:intro}
\end{figure*}

Moreover,
the effect of {measurements} on the dynamics of a complex many-body quantum system
has drawn significant interest in the many-body
physics community.  A common model combining measurements and unitary evolution is a \emph{monitored random quantum circuit} on $n$ qubits and with measurement rate $p\in[0,1]$:
At each time step, randomly chosen two-qubit gates are applied between neighboring qubits;
furthermore, each individual qubit undergoes a measurement in the computational basis with a
probability $p$.  This simple model has recently attracted substantial attention from the condensed matter
physics community because such circuits may exhibit \emph{measurement-induced phase transitions}%
~\cite{li2018quantum,  PhysRevX.9.031009, chan2019unitary, li2019measurement, PhysRevB.101.104301, PhysRevX.10.041020, PhysRevB.101.104302, PhysRevLett.125.030505,  PhysRevX.11.011030, lavasani2021measurement,  PRXQuantum.2.010352, PhysRevResearch.3.023200,  PhysRevResearch.4.043212,
PhysRevX.12.041002,
10.21468/SciPostPhys.7.2.024, PhysRevResearch.2.013022, PhysRevB.102.054302, PhysRevLett.127.140601, PhysRevLett.127.140601, PhysRevLett.128.010603, kelly2210coherence, niroula2023phase}. 
The latter are an exotic type of phase transition that depends on the rate $p$ at which
measurements are 
performed:  The state's entanglement entropy then commonly
transitions from a scaling in the area of the region considered \cite{AreaReview} at high $p$
(the {area law phase}) to a scaling in the volume of the region at low $p$
(the {volume law phase}).%


The goal of our work is to combine the ideas of (i)~complexity growth in many-body systems,
(ii)~complexity phase transitions, and (iii)~measurement-induced phase transitions,
to prove the existence of a sharp transition in the evolution of quantum complexity
in monitored quantum circuits depending on rate at which measurements are applied.
We thereby introduce the distinct notion of quantum state complexity into the study of monitored quantum circuits.
%

Specifically, we prove rigorously that the growth of the exact state complexity
in a monitored random circuit on $n$ qubits
makes a sharp transition at a critical rate $p_c = 0.5$ at which measurements
are applied (sketched in Fig.~\ref{fig:intro}).
Below the threshold, the quantum complexity grows 
 {at least} linearly in time until saturating
{to a value $e^{\Omega(n)}$}
 (the \emph{complex phase}).
Above the threshold, the state's complexity 
{saturates to a value $\operatorname{poly}(n)$ after a time no more than $O(\log (n))$}
(
the \emph{uncomplex phase}).
We quantify the state's quantum complexity in terms of the number of two-qubit unitary
gates required to prepare that state exactly. 


{We establish a framework of the study of complexity of monitored random circuits as follows.
We draw deep inspiration from the seminal work on measurement-induced phase transitions
in the dynamics of entanglement~\cite{PhysRevX.9.031009}, including the use of techniques from
percolation theory~\cite{grimmett1999percolation},
while adapting to 
the techniques to lower bound the exact quantum circuit complexity using 
semi-algebraic geometry of  Ref.~\cite{haferkamp2022linear}.}
%
The complexity phase transition that we find 
concerns the quantum complexity of the output state of the monitored circuit,
and might be of different nature than the phase transition
in the classical complexity of sampling 
outcomes from random
circuits~\cite{PhysRevX.12.021021} and monitored linear optical circuits~\cite{PhysRevResearch.4.L032021}.
Our results reinforce monitored random circuits as a promising 
model to investigate quantum complexity phase transitions
{and the influence of measurements on the complexity of a quantum circuit's output state.}

The remainder of this work is organized as follows.
In Section~\ref{sec:review}, we 
review 
monitored quantum circuits and methods of lower-bounding the state complexity.
In Section~\ref{sec:main_result}, we summarize our main results, 
discuss their core implications, and sketch our proof strategy.
In Section~\ref{sec:exact complexity}, we give a proof of our main result.
Section~\ref{sec:conclusion} is devoted to conclusion and discussion.

\section{Setting}
\label{sec:review}


In this section, we {review the definitions of}
monitored random quantum circuits, of the exact state complexity, 
and of the accessible dimension.

\subsection{Monitored random quantum circuits} \label{sec:MRQC_def}

Throughout this work, we consider
{a system of $n$ qubits.  The qubits might be realized, for instance, as individual spins of a quantum many-body system.} 
For {technical convenience}, 
we assume that $n$ is an even number.
The computational basis 
{of the system} is denoted by $\ket{i_1, i_2,\dots, i_{n}}$,
where $i_j=0,1$ indicates {the state of the $j$-th qubit}.
A \emph{monitored random quantum circuit with measurement rate $p\in[0,1]$}
is a quantum circuit with staggered layers of two-qubit
gates on nearest neighbors, or the brick-wall architecture, in which each qubit has a probability $p$ at each time step
to be measured in its computational basis and be projected into the resulting outcome
[Fig.~\ref{fig:intro}(a)].
It is defined as 
\begin{equation}  
V^M(t):=\prod_{\tau=1}^{t/2} M(2\tau)U^{(e)}(2\tau)M(2\tau-1)U^{(o)}(2\tau-1), \label{eq.MRQC}
\end{equation}
where
\begin{align}
U^{(o)}(2\tau-1) & :=\prod_{i=1}^{\frac{n}{2}}U_{2i-1,2i}(2\tau-1), \label{eq.1Dqc/o} \\
U^{(e)}(2\tau) & :=\prod_{i=1}^{\frac{n}{2}-1}U_{2i,2i+1}(2\tau),\label{eq.1Dqc/e}\\
M(\tau) & :=\prod_{i=1}^{n}M_{i}(\tau).
\end{align}
Here, $t$ is an even number, $U_{i,j}(\tau)$ is a {Haar-random unitary gate}
acting on qubits $i$ and $j$ at time $\tau$, and {$M_{i}(\tau) \in
\{ \sqrt{1-p}I_i, \sqrt{p} \ketbra{0}{0}_i, \sqrt{p} \ketbra{1}{1}_i \}$. 
The latter are Kraus operators of the channel that implements a measurement in the computational
basis with probability $p$.}
We say that the qubit $i$ is \emph{measured} at time $\tau$ if $M_{i}(\tau)$
is either $\sqrt{p} \ketbra{0}{0}_i$ or $\sqrt{p} \ketbra{1}{1}_i$. {The \emph{measurement configuration}}
$M:= \{  M_i  (\tau) \}_{i, \tau}$, is the collection of all
{measurement outcomes at each space-time point of the circuit}.
{By construction, $M$ contains all the information about the layout of the
circuit, including $n$ and $t$, along with which qubits were measured at which time,
and what the projective measurement outcomes were.}
%
Note that the time evolution operator $V^M(t)$ in 
Eq.~(\ref{eq.MRQC}) is \emph{not} unitary, i.e., \begin{equation}
V^M(t)^{\dagger} \neq V^M(t)^{-1}, 
\end{equation}
except in the situation when the measurement rate $p$ is exactly zero.
{That $V^{M}(t)$ is not unitary corresponds to the fact that we measure the system and condition the
evolution on the measurement outcomes specified by $M$.}

Our results concern the output of a monitored quantum circuit when it is applied
onto the initial state vector  $\ket{0^n}$.
The state vector $V^M(t) \ket{0^n}$ represents the unnormalized
output of the monitored quantum circuit, 
projected according to the measurement configuration $M$.
Its squared norm $\expval{V^M(t)^{\dagger}  V^M(t)}{0^n}$ is
the probability that a measurement configuration $M$ is observed
for fixed choices of gates $U_{i,j}(\tau)$. 
Our results concern the complexity of the normalized
output quantum state vector
\begin{align}
    \ket{\phi^M} := \frac{V^M(t) \ket{0^n}}{\lVert V^M(t) \ket{0^n} \rVert} \ .
  \label{eq:output-state-random-monitored-q-circuit-normalized}
\end{align}
This state is the output of the monitored quantum circuit after 
conditioning on the
measurement outcomes $M$.


\subsection{State complexity} \label{subsec:Complexity}

The complexity of a quantum state vector $\ket\psi$ refers to the minimal number of elementary operations,
such as two-qubit gates, 
that need to be composed in order to prepare $\ket\psi$ starting from the reference
state vector $\ket{0^n}$. 
The complexity of a state is ordinarily defined by considering two-qubit unitary gates as the elementary
operations.  We call this complexity measure the \emph{$C$-complexity}:
\begin{dfn}[Exact $C$ state complexity] \label{def:exact_state_complexity_C0}
  The \emph{$C$ state complexity} of a normalized state vector $\ket{\psi}$ is the minimal number of
  two-qubit gates required to prepare $\ket\psi$ from the state vector $\ket{0^n}$.  The gates can be any elements
  of $SU(4)$ and the circuit may have any chosen connectivity.
\end{dfn}
%
We 
also consider a 
stronger notion of complexity in which the elementary
operations also include measurements with {post-selection}~\cite{aaronson2005quantum}.
A post-selected circuit is defined as a quantum circuit consisting of two-qubit unitary gates and single-qubit measurements in the computational basis where the measurement outcomes are post-selected to the desired measurement outcomes, for example, $0$ for all outcomes.
{At any time in a post-selected circuit}, arbitrary qubits, for instance the $i$-th qubit,
of a state vector $\ket{\psi}$ can be measured in the computational basis and be post-selected
to the desired measurement outcome $0$, resulting in the state
$({\lVert \bra{0}_i \ket{\psi} \rVert})^{-1}{\ketbra{0}_i\ket{\psi}}$.
The exact state complexity $C$ with post-selected circuits is defined as follows:

\begin{dfn}[Exact $C_m$ state complexity] \label{def:exact_state_complexity}
  For a state vector $\ket{\psi}$
  with $0 < \braket{\psi}{\psi} \leq 1$,
  the exact $C_m$ state complexity $C_m (\ket{\psi})$ is the minimal number of two-qubit gates in an arbitrary post-selected circuit that prepares $({\lVert \ket{\psi} \rVert} 
  )^{-1}\ket{\psi}$ from the initial state vector $\ket{0^n}$.  The post-selected circuit consists of 
  two-qubit unitary gates with arbitrary connectivity and where an arbitrary number of single-qubit
  computational basis measurements can be applied, with post-selection on a desired outcome, at
  any space-time points of the circuit.
  %
\end{dfn}

The set of post-selected quantum circuits includes unitary circuits as a special case,
implying that
the measure of complexity $C_m$ 
is a lower bound on the usual state 
complexity $C$. 
%


\subsection{Accessible dimension} \label{subsec:AD}

{
%

The \emph{accessible dimension} \cite{haferkamp2022linear} has been defined as the dimension of
the set of all possible unitary circuits that can be achieved with a fixed circuit layout, by
varying the individual choices of the gates in that circuit.
Here, we adapt this definition to our setting, and show that it serves lower-bounds, analogously to the proof in Ref.~\cite{haferkamp2022linear}, on the $C$- and $C_m$- complexity of a monitored random circuit below the critical measurement probability.
For a monitored random circuit with a fixed measurement configuration $M$, 
we define the 
\emph{contraction map} 
from a collection of two-qubit unitary gates to the output state as 
\begin{align}
    & F^M: [ \textrm{SU}(4)]^{ \times R} \xrightarrow{} B_1^{2 \times 2^n} \subset \mathbb{C}^{2^n},\\
    & F^M (U_1, U_2, \dots, U_R)=V^M(t) \ket{0^n},
\end{align}
where $B_1^{2 \times 2^n}$ is the real unit ball with the center at the origin, where
$R$ is the total number of two-qubit unitary gates
in the monitored random circuit specified through $M$,
and where each two-qubit unitary gate in Eq.~(\ref{eq.MRQC})
is set to the corresponding unitary $U_j$.  
%
{That the image of $F^M$ includes sub-normalized $n$-qubit states is a consequence
of $V^M(t)$ not being unitary.}
We denote the image of $F^M$ by $\mathcal{S}^M$, that is, the set of all output
states generated by the monitored  random quantum circuit with $M$.
{(See additional technical details in Appendix~\ref{sec:App:Accesdim}.)}


{We define the \emph{rank} of $F^M$ as the number of independent degrees of freedom required to
specify a perturbation of the image of $F^M$ when we perturb the gates $\{ U_1, \ldots U_R \}$.
More specifically,}
the rank of $F^M$ at a point $\{ U_1, U_2, \dots, U_R \}$,
 {denoted by $\rank_{U_1,\ldots,U_R}(F^M)$}, is defined by the dimension of the real linear space spanned by the set of output state vectors 
\begin{align}
\{ F^M (U_1, \dots,  (\alpha \otimes \beta) U_j, \dots , U_R) \}_{j, \alpha, \beta}, \label{eq:vectors}
\end{align}
where $j \in \{1, 2, \dots , R \}$ and $\alpha$, $\beta$ $\in \{ I, X, Y, Z  \}$ are Pauli operators
such that $(\alpha, \beta) \neq (I, I)$.
{We then define the}
\emph{accessible dimension} as 
the maximal rank of $F^M$ over all unitary gates:

\begin{dfn}[Accessible dimension] \label{dfn:AD}
  For a monitored random quantum circuit with a measurement configuration $M$, 
  the accessible dimension {$d_M$} is the maximal rank 
  of $F^M$ over  all two-qubit unitary gates $\{ U_1, U_2, \dots, U_R \}$, where $U_j \in \textrm{SU}(4)$.
\end{dfn}
%
{A strategy to lower bound the accessible dimension $d_M$ is to lower bound the rank of $F^M$ at
any chosen point $\{ U_1, \ldots U_R \}$.}
{The accessible dimension {$d_M$} is also the dimension of the set
  $\mathcal{S}^M$ (see Appendix~\ref{sec:App:Accesdim}).}
We prove that the complexity measure $C_m$ is lower bounded in terms of $d_M$, which is analogous to the proof in Ref.~\cite{haferkamp2022linear}.
\begin{lem}[Complexity by dimension] \label{lem:complexity_by_dimension}
  Let $\ket{\psi} \in \mathcal{S}^M$
  be distributed according to the output of a monitored random quantum circuit
  with a fixed measurement configuration $M$,
  in which all unitary gates are chosen at random
  from the Haar measure.  Then
  $C_m(\ket{\psi}) \geq 
  (  {d_M} -3n-2)/11$ with unit probability.
\end{lem}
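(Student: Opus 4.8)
The plan is to mimic the dimension-counting argument of Ref.~\cite{haferkamp2022linear}, adapted to the setting where the target circuit is a post-selected circuit rather than a unitary one. The key observation is that a post-selected circuit preparing $\ket\psi$ from $\ket{0^n}$ can be reorganised into a canonical form: all two-qubit unitaries first, then all post-selected single-qubit measurements. Indeed, a post-selection of qubit $i$ onto $\ket 0$ commutes past a later two-qubit gate $U_{i,j}$ at the cost of replacing it by the (at most two-qubit) operator $U_{i,j}(\ketbra{0}{0}_i\otimes I_j)$, which can be implemented by one two-qubit unitary followed by a post-selection on the auxiliary slot. Pushing all post-selections to the end, a circuit with $g$ two-qubit gates and $m$ post-selected measurements becomes: apply a unitary circuit $W$ on $n+m$ qubits built from at most $g+m$ two-qubit gates (introducing $m$ fresh ancillas initialised in $\ket 0$), then post-select the $m$ ancillas onto $\ket 0$. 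Since $W\ket{0^{n+m}}$ is an $(n+m)$-qubit state of $C$-complexity at most $g+m$, and $\ket\psi$ (up to normalisation) is obtained from it by projecting $m$ qubits, it suffices to bound the $C$-complexity of such projected states. Combining with $C_m\le C$ and tracking the bookkeeping, it is enough to show: if $\ket\psi\in\mathcal S^M$ has $C_m(\ket\psi)=:k$, then $d_M \le 11k + 3n + 2$.

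The core of the argument is a parameter count. Fix the measurement configuration $M$; the manifold $\mathcal S^M$ has dimension exactly $d_M$ (by the statement in Appendix~\ref{sec:App:Accesdim}). On the other hand, I would describe the set of all normalised states reachable by some post-selected circuit with at most $k$ two-qubit gates as the image of a smooth map whose domain has controlled dimension. A two-qubit gate placed on a chosen pair of qubits contributes $\dim \mathrm{SU}(4)=15$ real parameters; but following Ref.~\cite{haferkamp2022linear} one argues that only the part of the gate that is not "absorbed" into neighbouring structure matters, and the effective count per gate is $11$ (the generic rank contribution, after quotienting out the $4$ dimensions corresponding to directions $(\alpha\otimes I)$ and $(I\otimes\beta)$ that get absorbed into an adjacent gate or the final normalisation). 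The $3n$ term accounts for the at most $n$ single-qubit unitaries one may wish to append on the wires at the end for free, and the $+2$ absorbs the overall phase and normalisation degrees of freedom. The logic is then: $\mathcal S^M$ is covered by finitely many such images (finitely many choices of circuit layout on $n$ qubits with $\le k$ gates, since connectivity and gate placement range over a finite set once $k$ is fixed — one must be slightly careful that ancillas are bounded, which the canonical-form reduction above guarantees, giving $m\le k$), so
\begin{align}
  d_M \;=\; \dim \mathcal S^M \;\le\; 11k + 3n + 2,
\end{align}
and rearranging yields $k = C_m(\ket\psi)\ge (d_M - 3n - 2)/11$. The "with unit probability" qualifier is needed because the dimension statement for $\mathcal S^M$ (i.e. that the rank of $F^M$ achieves its maximum $d_M$) holds for Haar-almost-every choice of gates, so a Haar-random output lands on the full-dimensional stratum of $\mathcal S^M$ almost surely.

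**The main obstacle** I anticipate is making rigorous the claim that post-selected circuits do not help beyond the $11$-per-gate count — i.e. that projecting out ancillas cannot increase the dimension of the reachable set beyond what the gate count allows. Concretely, one must show that the map (gates $\times$ ancilla initialisations) $\mapsto$ (normalised projected output) is Lipschitz / smooth with the domain dimension bounded by $11k + O(n)$, accounting for the ancillas $m\le k$ without double counting their contribution against the gate budget. A careful version of the commutation-normal-form argument, together with the observation that each of the $m$ post-selection ancillas was introduced alongside a gate and hence is already paid for in the $g+m\le 2k$ gate count (which one then re-optimises down using the absorption argument to recover the clean $11k$), should close this gap; this is exactly the place where one leans hardest on the structural lemmas of Ref.~\cite{haferkamp2022linear}. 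A secondary technical point is ensuring $\mathcal S^M$, which contains sub-normalised vectors, is handled correctly: one works with the projectivised/normalised image and notes that normalisation changes the dimension by at most the $+2$ already budgeted.
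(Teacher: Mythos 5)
Your dimension count is essentially the paper's: one bounds the dimension of the set $\mathcal S'$ of (unnormalised) outputs of post-selected circuits with $R'$ gates by $15R'$ minus redundant perturbation directions, obtaining $11R'+3n$, and adds $2$ for the complex scalar relating unnormalised representatives. But your proposal has a genuine logical gap in how this count is converted into the lemma. You reduce to the claim ``if $\ket\psi\in\mathcal S^M$ has $C_m(\ket\psi)=k$, then $d_M\le 11k+3n+2$,'' and justify it by asserting that $\mathcal S^M$ is covered by finitely many images of short circuits. That deterministic implication is false: a single low-complexity output state can lie in a set $\mathcal S^M$ of large accessible dimension (choose all gates to be the identity --- the output is a product state of complexity zero, yet $d_M$ is unchanged). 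The set $\mathcal S^M$ is not covered by short-circuit images; only its subset of low-complexity states is. The correct argument, and the one the paper gives, is measure-theoretic: the set $\{\ket\psi\in\mathcal S^M \mid \ket\psi=c\ket\phi,\ \ket\phi\in\mathcal S',\ c\in\mathbb C\}$ is semi-algebraic of dimension at most $11R'+3n+2$, hence of measure zero inside $\mathcal S^M$ whenever $d_M$ exceeds that value; its preimage under $F^M$ then has Haar measure zero and remains null under the product of the Haar measure with the Born weight $\expval{V^M(t)^\dagger V^M(t)}{0^n}$. This is where ``with unit probability'' enters --- not, as you state, from the rank of $F^M$ attaining its maximum on a full-measure set (that fact identifies $d_M$ with $\dim\mathcal S^M$, but by itself does not exclude a Haar-random output from happening to be a low-complexity state).

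A secondary point: your canonical-form reduction (commuting post-selections to the end onto $m\le k$ ancillas) is a detour that, taken at face value, degrades the constant --- $g+m\le 2k$ gates on $n+m$ qubits gives $11\cdot 2k+3(n+k)+2$, i.e.\ a denominator of $25$ rather than $11$, and your ``re-optimise down'' step is left unproved. The paper instead counts perturbation directions of the post-selected circuit directly: for an unmeasured internal bond $3\times 2$ directions are redundant (absorbed into the adjacent gate), while for a measured bond $2\times 2$ are, yielding $\dim(\mathcal S')\le 15R'-3(2R'-m)-2m+3n=9R'+m+3n\le 11R'+3n$ with $m\le 2R'$. Handling the projectors in place, rather than deferring them to ancillas, is what makes the constant $11$ come out cleanly.
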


In the above lemma, given a measurement configuration of a monitored circuit $M$, the accessible dimension has been shown to take the maximum value over all unitary gates except for a measure zero set. The above lemma serves in our proof to reduce the problem of finding
a lower bound on $C_m$ for a monitored random quantum circuit to finding a lower bound on $d_M$.

\section{Main result: Complexity phase transition in  monitored  random quantum circuits}

\label{sec:main_result}


We prove that both of the $C$- and  $C_m$-complexity of the output state of 
a monitored random quantum circuit exhibit
a phase transition at a critical measurement probability $p_c=0.5$.

\begin{thm}[Complexity growth in monitored circuits]\label{thm:phase_transition_Cs}
Let $\ket{\psi}$ be the output state vector of the monitored random circuit with measurement rate $p$, {conditioned on the outcomes of the measurement that were applied in the monitored circuit}.
    If $p < p_c$, $C(\ket{\psi})$ and $C_m(\ket{\psi})$ grow {at least} linearly and  linearly in $t$, respectively, until 
    {they saturate to values $e^{\Omega(n)}$},
    with probability $1-e^{-\Omega(n)}$. 
    If $p>p_c$, and for any $0<\epsilon<1$,
we have $C(\ket\psi) \leq \operatorname{poly}(n/\epsilon)$ and $C_m(\ket\psi) \leq O\bigl(n\log(n/\epsilon)\bigr)$ except
with probability at most $\epsilon$.
\end{thm}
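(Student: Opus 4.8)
The plan is to treat the two regimes separately, using percolation theory to control the random measurement configuration $M$ in each case, and then feeding the resulting combinatorial structure either into the accessible-dimension machinery (Lemma~\ref{lem:complexity_by_dimension}) for the complex phase, or into a light-cone/matrix-product-state argument for the uncomplex phase. Throughout, the probabilities quoted are over $M$ (the measurement locations and outcomes), with the Haar gates handled by the ``unit probability'' clause of Lemma~\ref{lem:complexity_by_dimension} once $M$ is good.

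\emph{The subcritical regime $p<p_c$.} First I would set up the standard mapping from the measurement pattern of a brick-wall monitored circuit to a percolation process on a spacetime lattice, in which a measured space-time cell is ``closed'' and an unmeasured one is ``open''; the critical point of this process is $p_c=1/2$. For $p<p_c$ the open cells percolate, and I would invoke a quantitative finite-size spanning estimate to show that, with probability $1-e^{-\Omega(n)}$, the unmeasured part of the circuit contains a connected channel along which one can route an honest unitary computation on $m=\Omega(n)$ of the qubits for $D=\Theta(\min(t,e^{m}))$ layers. Since every gate of the monitored circuit is still Haar-random, the restriction of the contraction map $F^M$ to the gates lying in this channel is, up to tensoring with the frozen post-measured product qubits and local unitaries, the contraction map of a bona fide random circuit, so $d_M$ is at least the accessible dimension of that embedded circuit. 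Adapting the semi-algebraic-geometry argument of Refs.~\cite{haferkamp2022linear,li2022short} to this geometry should then give $d_M\geq c\cdot\min(t,e^{\Omega(n)})$ for a constant $c>0$: at a generic base assignment of gates the differential of $F^M$ picks up a fresh independent direction from each new gate in the channel until the ambient dimension $e^{\Omega(n)}$ is exhausted. Plugging this into Lemma~\ref{lem:complexity_by_dimension} yields $C_m(\ket\psi)\geq (d_M-3n-2)/11=\Omega(\min(t,e^{\Omega(n)}))$, and since a post-selected circuit is a special case of a unitary circuit, $C(\ket\psi)\geq C_m(\ket\psi)$; the trivial upper bound $C_m(\ket\psi)=O(nt)$ (the monitored circuit is itself a post-selected circuit) makes the $C_m$ growth linear in $t$ before saturating at $e^{\Omega(n)}$.

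\emph{The supercritical regime $p>p_c$.} Here I would use the dual percolation statement: for $p>p_c$ the \emph{closed} (measured) cells percolate, and on any window of $T_0$ consecutive time steps the measured cells of the $n\times T_0$ spacetime strip contain, with probability at least $1-ne^{-\Omega(T_0)}\geq 1-\epsilon$ for $T_0=O(\log(n/\epsilon))$, a cluster spanning the strip in the spatial direction. By planar duality such a spanning measured cluster is a cut that disconnects the final time slice from everything before it: conditioned on the outcomes in $M$, the output $\ket\psi$ is obtained by applying only the portion $\widetilde V$ of $V^M(t)$ that lies in the future of the cut to a fixed computational-basis product state $\ket b$ fixed by the outcomes along the cut. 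For $C_m$ this gives $C_m(\ket\psi)\leq O(n)+\#\{\text{gates in }\widetilde V\}=O(nT_0)=O(n\log(n/\epsilon))$, since $\widetilde V$ is a post-selected circuit of depth $T_0$ and $\ket b$ is prepared from $\ket{0^n}$ with $O(n)$ two-qubit gates. For $C$, a light-cone argument shows that a depth-$T_0$ (monitored) circuit on a product state outputs a state of Schmidt rank $2^{O(T_0)}=\operatorname{poly}(n/\epsilon)$ across every contiguous bipartition; writing $\ket\psi$ as a matrix-product state of that bond dimension and compiling it into a unitary circuit costs $\operatorname{poly}(n/\epsilon)$ two-qubit gates. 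Both bounds hold except with probability at most $\epsilon$.

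\emph{Main obstacle.} I expect the technical heart of the argument to be the subcritical case: making precise the claim that the accessible dimension of the full, non-unitary monitored circuit is bounded below by that of a unitary sub-circuit routed through an open percolation cluster. The accessible-dimension lower bound of Ref.~\cite{haferkamp2022linear} is established for a clean brick-wall layout on a line, and transporting it to the irregular geometry of an open cluster — while keeping track of the frozen, post-selected qubits that make $F^M$ map into sub-normalized states — requires choosing a good base assignment of gates and showing that the rank of $dF^M$ there grows by one unit per fresh gate until saturation at $e^{\Omega(n)}$. Obtaining the percolation estimates with the stated failure probabilities ($e^{-\Omega(n)}$ below $p_c$, and only $O(\log(n/\epsilon))$ extra depth for failure $\epsilon$ above $p_c$) is comparatively routine, but needs genuine finite-size spanning-cluster bounds rather than asymptotic percolation statements.
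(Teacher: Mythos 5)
Your overall architecture matches the paper's: percolation theory identifies the good measurement configurations, Lemma~\ref{lem:complexity_by_dimension} converts a lower bound on the accessible dimension $d_M$ into the complex-phase statement, and a shallow-effective-circuit/MPS argument gives the uncomplex-phase bounds. Your supercritical argument is sound and essentially equivalent to the paper's (the paper bounds the size of every open cluster touching the final time slice by $O(\log(n/\epsilon))$ rather than exhibiting a spanning measured cut inside a depth-$O(\log(n/\epsilon))$ window, but both yield $C_m\le O(n\log(n/\epsilon))$ and the $\operatorname{poly}(n/\epsilon)$ Schmidt-rank/MPS bound on $C$). One small slip: the containment goes the other way — unitary circuits are the special case of post-selected circuits — although your conclusion $C\ge C_m$ is correct.

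The genuine gap is in the subcritical step that you flag as the ``main obstacle'' but do not resolve: the claim that the restriction of $F^M$ to the gates lying in an open percolation channel is, up to frozen qubits and local unitaries, the contraction map of a bona fide random circuit. This is false as stated, for three reasons the paper has to work around. First, the measurement-free left-right paths produced by percolation are generally not causal: a path can enter and leave a gate through two input legs or two output legs, momentarily running backwards in time, so the open cluster does not directly define a unitary sub-circuit; the paper fixes this with a teleportation gadget that exploits a projective measurement (with predetermined outcome) at each turning point. Second, the $\Omega(n)$ disjoint horizontal paths need not ever share a gate, so there is no immediate way to entangle the logical qubits they carry; the paper implements CNOTs via vertical measurement-free ``bridges'' between neighbouring paths, again using teleportation-style gadgets. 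Third, both gadgets require measurements at specific spacetime locations that need not be present in the sampled configuration $M$; the paper justifies inserting them by a separate monotonicity result (Lemma~\ref{lem:measurement cannot increase dimension}: adding projectors cannot increase the accessible dimension), so that a lower bound on $d_{M'}$ for the augmented configuration transfers to $d_M$. Without these three ingredients your reduction to a clean brick-wall circuit — and hence the bound $d_M=\Omega(\min(t,e^{\Omega(n)}))$ feeding into Lemma~\ref{lem:complexity_by_dimension} — does not go through.
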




{Our bounds on 
both complexities do not depend on the specific measurement outcomes $M$,
even though the output state vector $\ket{\psi}$ is conditioned on $M$.}

Our proof exploits techniques from {percolation theory}~\cite{grimmett1999percolation}
to prove a sharp transition between these two regimes at the critical measurement rate $p_c=0.5$.
Above this rate, measurements percolate across the width of the circuit,
periodically resetting the state's complexity.
This implies the upper bound of the complexity by $\operatorname{poly}(n)$.
Below the critical rate, it turns out multiple paths without any measurements
can percolate along the length of the circuit,
supporting a computation whose complexity grows linearly in time until
times exponential in $n$.
The growth of $C(\ket\psi)$ in the regime $p < p_c$ follows from the general bound
$C_m(\ket\psi) \leq C(\ket\psi)$.


Our core technical result is 
a lower bound on the accessible dimension
of a monitored random quantum circuit in the regime 
$p<p_c$ 
By Lemma~\ref{lem:complexity_by_dimension}, this bound immediately
translates into a corresponding bound on the $C_m$-complexity.

\begin{lem}[Growth of the accessible dimension in monitored circuits]\label{thm:phase_transition_AD}
    If $p < p_c$, $d_M$ grows linearly in $t$ until an exponential time $t=e^{\Omega(n)}$ with probability $1-e^{-\Omega(n)}$.
\end{lem}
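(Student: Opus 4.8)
The plan is to show that, with probability $1-e^{-\Omega(n)}$ over the placement of measurements, the monitored random circuit for $p<p_c=1/2$ contains a ``clean channel'' — a macroscopic number of consecutive sites and layers free of measurements — through which an unmonitored local random circuit of depth $\Theta(t)$ effectively runs, and then to import the accessible-dimension lower bound of Ref.~\cite{haferkamp2022linear} for such an unmonitored circuit. First I would set up the percolation picture exactly as in Ref.~\cite{PhysRevX.9.031009}: identify each space-time point $(i,\tau)$ of the brick-wall circuit where qubit $i$ is \emph{not} measured as an ``open'' site (this happens with probability $1-p$), and observe that the dual lattice of measurement events is a bond/site percolation model whose critical point sits at $p_c=1/2$. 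For $p<p_c$ one is in the supercritical regime for open sites: with probability $1-e^{-\Omega(n)}$ there is a cluster of open sites spanning the full time direction, and moreover — the quantitative statement I actually need — a contiguous block of $\Omega(n)$ neighbouring qubits remains unmeasured for $\Omega(t)$ consecutive layers (for $t$ up to $e^{\Omega(n)}$ this survives a union bound over layers). This is a standard large-deviation / BK-type estimate in supercritical percolation; it is the analogue of the ``volume-law'' argument and I would cite \cite{grimmett1999percolation} for the required crossing and stability estimates.

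Next I would argue that restricting $F^M$ to the gates lying inside this clean block already produces a large rank. Concretely, fix all unitaries outside the block to the identity (or any fixed value) and vary only those inside; the map $F^M$ restricted this way factors through the unitary prepared by an \emph{unmonitored} brick-wall random circuit on the $\Theta(n)$ clean qubits of depth $\Theta(t)$, followed by a fixed (measurement-dependent but unitary-independent) isometry/projector coming from the rest of the circuit. Because that trailing map is injective on the relevant subspace (the post-measurement branch has nonzero amplitude, so the projectors act as fixed nonzero-rank maps), the rank of $F^M$ at a generic point is at least the rank of the contraction map of the clean sub-circuit. For the clean sub-circuit the accessible-dimension lower bound of Ref.~\cite{haferkamp2022linear} (or the simplified argument of Ref.~\cite{li2022short}) applies verbatim: its rank grows linearly in the depth until saturating at $\exp(\Omega(n))$. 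Hence $d_M=\Omega(t)$ for $t$ up to $e^{\Omega(n)}$, and $d_M$ is bounded by the ambient dimension $2\cdot 2^{n}$, giving the stated saturation.

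The main obstacle I expect is the second step rather than the first: verifying that conditioning on the measurement outcomes in the complement of the clean block does not collapse the accessible dimension. One has to check that the ``environment'' map — the composition of Haar-fixed unitaries and rank-one projectors surrounding the clean block — is, on the subspace spanned by the derivative vectors of Eq.~(\ref{eq:vectors}), an \emph{injective} linear map, so that dimensions are preserved under it. This requires a genericity argument: for all unitary choices outside a measure-zero set the environment isometry does not annihilate any of the directions generated inside the clean block, which one can establish by noting that the set of ``bad'' configurations is the zero set of a nontrivial polynomial (nontriviality being witnessed by a single explicit choice, e.g.\ taking the environment gates to route the clean block's output into fresh ancillas before post-selection). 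Because Definition~\ref{dfn:AD} takes the \emph{maximum} rank over unitaries, exhibiting one good point suffices. A secondary technical point is bookkeeping the additive constants and the $O(n)$ slack when matching the clean block's depth to $t$ and its width to $n$, but that is routine given the linear-in-depth growth from Ref.~\cite{haferkamp2022linear}.
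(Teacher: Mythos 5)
There is a genuine gap, and it sits in the very first step. Your argument hinges on the claim that for $p<p_c$ the circuit contains, with probability $1-e^{-\Omega(n)}$, a ``clean channel'': a contiguous block of $\Omega(n)$ neighbouring qubits left unmeasured for $\Omega(t)$ consecutive layers, through which an unmonitored brick-wall circuit runs verbatim. No such block exists for any constant $p>0$. Each space-time site is measured independently with probability $p$, so a solid $k\times d$ rectangle of unmeasured sites occurs with probability $(1-p)^{kd}=e^{-\Omega(nt)}$ when $k=\Omega(n)$ and $d=\Omega(t)$; even a single fixed qubit surviving $t$ layers unmeasured has probability $(1-p)^{t}$, which is negligible for the exponentially long times the lemma must cover. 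Supercritical percolation does not give you solid rectangles; it gives you ramified, wandering one-dimensional clusters. What the paper actually extracts (Lemma~\ref{lem:linear_paths-maintext}) is $\Omega(n)$ \emph{disjoint measurement-free paths} crossing the length of the circuit, and these paths hop between qubits and can even run backwards in time.

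Because the clean structures are thin paths rather than a block, the unmonitored result of Ref.~\cite{haferkamp2022linear} cannot be imported ``verbatim''; this is where essentially all of the paper's work lies, and your proposal skips it. One must (a) carry a qubit along a non-causal path segment, which the paper does with a teleportation/post-selection identity exploiting the fixed measurement outcomes adjacent to the path (Eq.~(\ref{eq:teleportation_3})); (b) implement two-qubit gates between paths that never share a gate, which requires vertical measurement-free ``bridges'' and a CNOT-via-teleportation gadget; and (c) justify inserting the extra measurements these gadgets need without invalidating the bound, which is exactly the role of Lemma~\ref{lem:measurement cannot increase dimension} ($d_{M'}\le d_M$ when measurements are added). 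Finally, because only the single-qubit gates of the embedded circuit are continuously parametrized, one needs a separate lower bound on the accessible dimension of the embedded circuit under single-qubit perturbations only (Lemma~\ref{lem:lowerbound_d0}), rather than the full $SU(4)$ perturbation count of Ref.~\cite{haferkamp2022linear}. Your second paragraph's concern about injectivity of the ``environment map'' is the right instinct but is resolved in the paper not by a genericity/polynomial argument over the environment gates, but by the direct observation that each perturbed output of the embedded unitary circuit equals (up to a real scalar) a perturbed output of the monitored circuit, giving $d_M\ge d_0$ at the specific point constructed.
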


{We now provide a sketch of the proof of Lemma~\ref{thm:phase_transition_AD}.  Two
separate arguments are developed in the regimes $p>p_c$ and $p<p_c$.
In the regime $p>p_c=1/2$, percolation theory states that measurements will regularly
percolate throughout the width of the circuit, resetting the state vector to $\ket{0^n}$ along
those paths (Fig.~\ref{fig:crossings}(a)).  Such measurement percolation occurs within the last $n$ layers of gates in
the monitored circuit with probability $1-e^{-\Omega(n)}$, meaning that
the set of output states
of the circuit cannot have 
the $C_{m}$-complexity cannot exceed $O(n^2)$.}
{The argument can be further reinforced to upper-bound $C$-complexity by
$\operatorname{poly}(n)$}, 
and to bound the $C$ and $C_m$ complexity measures
in the case where the tolerated
failure probability is arbitrary.

In the regime $p<p_c=1/2$, we lower-bound the accessible dimension as follows.
We first show that for a fixed configuration of measurements $M$,
there are paths without any measurements that percolate throughout the length of the
circuit.
We call such paths \emph{measurement-free paths}. 
Then we show that these paths can be used to run an exponentially long quantum
computation.
The main challenge is to construct an embedding of an arbitrary quantum
circuit on $\Omega(n)$ qubits and of depth $\Omega(t)$
into the monitored random quantum circuit with a fixed
configuration of measurements $M$.
The main idea of the embedding is to associate
one qubit of the $\Omega(n)$ sized circuit to a measurement-free path, and to
choose the gates of the monitored random circuit so that they
ensure the qubit's information is carried along the measurement-free path and
that they implement the gates of the $\Omega(n)$ sized circuit on those qubits.
There are two specific challenges 
that one faces when {constructing this embedding}:
One must show that (a)~the computation can proceed even if a measurement-free path
is not \emph{causal}, i.e., if it momentarily wraps back in time by following legs
between gates in a direction opposite to the circuit's time direction, and
(b)~two-qubit gates can be implemented between two such paths (Fig.~\ref{fig:crossings}(b)).
Challenge~(a) is addressed as follows. If a measurement-free path follows a leg of
a unitary gate in a direction contrary to the circuit's forward time direction,
then we can exploit the existence of a
measurement immediately after that gate to teleport the qubit being carried
by the path further along the path, even if the information is carried
backwards with respect to the circuit's direction.  This is possible because the
measurement configuration $M$ is fixed, meaning that the measurement immediately after
the gate has a predetermined outcome onto which the state is projected.
To address challenge~(b), we exploit the fact that paths with no measurements
also percolate vertically across the width of the circuit; these paths can be used to
implement a CNOT gate across two measurement-free paths using a teleportation-based scheme.

Both arguments addressing challenges~(a) and~(b) rely on the existence of measurements on
certain qubits that are neighboring the measurement-free path.  Yet such measurements
might not always exist at the desired locations.
We prove that for any measurement configuration $M$, one can always select additional
qubits to be measured without increasing the accessible dimension
of the monitored random circuit.
Therefore, should a measurement at a given location be required by our
embedding scheme, it can always be added if necessary
while still yielding a lower bound on the accessible dimension of the monitored
circuit in the original measurement configuration.
We believe that the following lemma might be of independent interest, as it provides
a rigorous quantitative statement about the impossibility of measurements to increase
a quantity, the accessible dimension, which is a proxy quantity for complexity
{for monitored random circuits}.
\begin{lem}[Measurements cannot increase the accessible dimension]
\label{lem:measurement cannot increase dimension}
{Let $M$ be a measurement configuration, and let 
$M'$ be a configuration obtained by changing some space-time locations
in $M$ from being unmeasured to being measured.
Then $d_{M'} \leq d_{M}$.}
\end{lem}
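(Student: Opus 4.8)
The plan is to reduce to the case where $M'$ differs from $M$ in a single space-time location — say a qubit $i$ at time $\tau$ that is unmeasured in $M$ but measured in $M'$ — and then to exhibit, for every choice of gates realizing the rank-maximizing point of $F^{M'}$, a choice of gates for $F^{M}$ whose rank is at least as large. The general statement then follows by iterating over the finitely many changed locations. So fix such an $i$ and $\tau$, and write the circuit $V^{M'}(t)$ as $A\,P\,B$, where $B$ collects everything strictly before the inserted measurement, $P = \ketbra{0}{0}_i$ (WLOG the post-selected outcome is $0$; the two outcomes give unitarily equivalent situations) is the new Kraus operator, and $A$ collects everything after. The corresponding $V^{M}(t)$ is $A\,B$ with the same gate positions, the only difference being the absence of the projector $P$.

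The key idea is that the projector $P$ can be absorbed into the two-qubit gate of $V^M$ acting on qubit $i$ immediately before time $\tau$ (such a gate exists in the brick-wall layout for a suitable definition of "immediately before"; if the inserted measurement sits at a layer where qubit $i$ is idle one can push it to the next gate touching $i$, or handle it by noting $P$ commutes through the idle region). Concretely, I would argue as follows. Let $\{U_j^{\star}\}$ be gates achieving the accessible dimension $d_{M'} = \rank_{U_1^\star,\dots}(F^{M'})$, and let $U_k^{\star}$ be the gate on qubit $i$ just before the inserted projector. The tangent vectors spanning the rank of $F^{M'}$ are of the form $A\,P\,(\text{perturbation})$. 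For perturbations of gates sitting \emph{after} $P$ (inside $A$), and for the overall output itself, these vectors already lie in the span of $\{F^M(\dots)\}$-type vectors because $AP\,(\cdot)\ket{0^n} = A'\,(\cdot)\ket{0^n}$ where $A'$ is $A$ with the leg structure contracted through $P$ — and $A'$ is realizable inside $V^M$ by an appropriate reparametrization of the gate $U_k$ absorbing $P$, since $P\,U_k^\star$ is (up to the non-unitary contraction that $V^M$ already allows, as stressed in the paper) a legitimate setting. The point is that the rank of $F^M$ is a \emph{maximum} over all unitary gate choices, so it suffices to find \emph{one} choice of $\{U_j\}$ for $F^M$ at which the span of the perturbation vectors has dimension $\geq d_{M'}$; taking $U_j = U_j^\star$ for $j \neq k$ and choosing $U_k$ so that $U_k$ together with the gate-reshuffling mimics $P\,U_k^\star$ should do it, and every tangent vector of $F^{M'}$ reappears (possibly after discarding the perturbations of the absorbed gate, which only shrinks nothing because we over-count on the $F^M$ side) as a tangent vector of $F^M$.

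The main obstacle I anticipate is bookkeeping the tangent space correctly: the rank of $F^{M'}$ is defined via one-sided perturbations $(\alpha\otimes\beta)U_j$ of \emph{each} gate, so I must check that every such perturbed output vector for $M'$ is in the real-linear span of the analogous vectors for $M$ under the identification above — in particular that a perturbation of a gate lying \emph{before} $P$ still produces a vector in the $M$-span after the projector is removed, which is immediate since removing $P$ only enlarges the ambient structure, and that the absorbed gate $U_k$ can carry the projector while its own Pauli-perturbations on the $M$ side reproduce at least the image of $P$'s Pauli-perturbations on the $M'$ side (here one uses that $P$ is rank one, so perturbations $(\alpha\otimes\beta)P\cdots$ span a space of dimension no larger than the full perturbation space of a generic two-qubit gate multiplied into the same location). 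A secondary subtlety is the edge case where the inserted measurement is at the very last layer on qubit $i$ with no subsequent gate; there $P$ acts directly on the output and one argues separately that projecting the final state onto $\ket{0}_i$ cannot increase the dimension of the output manifold, e.g.\ because $\ketbra{0}{0}_i$ is a fixed linear contraction and the image of a linear map applied to a variety has dimension at most that of the variety. I would isolate this as a short standalone sublemma and treat the generic case by the absorption argument above.
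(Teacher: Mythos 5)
Your reduction to a single added measurement location, the WLOG choice of outcome $0$, and the observation that it suffices to exhibit \emph{one} gate assignment for $F^{M}$ whose rank is at least $\rank(F^{M'})$ all match the paper's strategy. But the central step of your argument --- absorbing the projector $P=\ketbra{0}{0}_i$ into the preceding gate $U_k$ so that the $F^M$ circuit ``mimics $P\,U_k^\star$'' --- does not work, and the hedge you offer for it is based on a misreading. The map $F^M$ only accepts gates in $SU(4)$; the non-unitarity of $V^M(t)$ that the paper stresses comes exclusively from the Kraus operators at the space-time locations that $M$ declares measured, and by hypothesis $M$ does \emph{not} contain a measurement at the location in question. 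Since $P U_k^\star$ is a singular operator, no unitary gate realizes it, so there is no ``appropriate reparametrization of $U_k$ absorbing $P$.'' This is precisely the difficulty the lemma exists to overcome, and your proposal does not contain the idea that resolves it.

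The paper's resolution is to keep all gates equal to those of the rank-maximizing point of $F^{M'}$ except for the gate $U_m$ just before the added projector, which is replaced by the genuinely unitary gate $e^{i(I+Z)\theta}U_m$. Writing $\ketbra{0}{0}=(I+Z)/2$, each perturbation vector of $F^{M}$ at this modified point takes the form $\ket{u_i}+(e^{i\theta}-1)\ket{v_i}$, where $\ket{v_i}$ is the corresponding perturbation vector of $F^{M'}$ (containing the projector) and $\ket{u_i}$ is its unprojected counterpart. One then shows, via a Gram--Schmidt decomposition and the fact that the relevant $r\times r$ determinant is a nonzero polynomial in $e^{i\theta}-1$ with finitely many roots, that a generic $\theta$ preserves the linear independence of $r'=\rank(F^{M'})$ of these vectors. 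This generic-parameter argument is the essential missing ingredient in your proposal. (Your standalone sublemma for a projector acting at the final layer --- a fixed linear map cannot increase the dimension of a variety's image --- is correct as far as it goes, but it does not extend to projectors inserted in the bulk of the circuit, which is the generic case.)
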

{Intuitively, the dimension of the set of states generated by a random monitored
circuit for a given measurement configuration cannot increase if one inserts an additional
projector in the circuit.
We present a proof of this statement
as Lemma~\ref{lem:measurement cannot increase dimension2} in
Appendix~\ref{app: measurement and dimension}.}

\begin{figure*}[tb]
\begin{center}
\includegraphics[width=128mm]{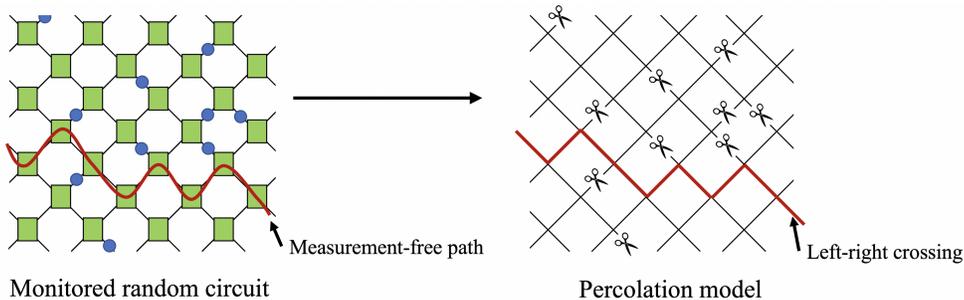}
\end{center}
    \caption{A mapping from the bond structure of a monitored random circuit to a percolation model. Each measured and unmeasured bond is mapped to closed (with scissors) and open edge (without scissors), respectively. }
    \label{fig:Circuit_to_percolation}
\end{figure*}

\section{Proof of the main result} \label{sec:exact complexity}

In this section, we prove 
Theorem~\ref{thm:phase_transition_Cs}.
{A central ingredient of our proof is the use of techniques from percolation theory. We briefly
review these techniques in Section~\ref{sec:percolation-theory}.
We then apply these techniques in Section~\ref{sec:proof-Cs-uncomplex-phase}
to obtain an upper bound on the $C$ complexity in the regime $p>p_c$.
Finally, we complete the proof of Theorem~\ref{thm:phase_transition_Cs} in
Section~\ref{sec:proof-Cs-complex-phase}
by proving a lower bound on the $C$ complexity in the regime $p<p_c$.}

\subsection{Percolation theory}
\label{sec:percolation-theory}

{In percolation theory, we consider a graph whose edges can be in one of two states, open or closed,
where the state of each edge is chosen to be open or closed
independently with probability $q$ and $1-q$, respectively~\cite{grimmett1999percolation}.}
{Bond percolation theory} is 
concerned with the existence or absence of a path
consisting of connected open edges in the 
graph. 
A well-studied setting is the existence of a 
path that crosses from left to right in a $L\times L$ square lattice
{while passing only through open edges}.
In the large $L$ limit, there is a critical probability $q_c$ below which 
there does not exist a left-right crossing with the probability $1-e^{-\Omega(L)}$, but above which 
such crossings appear with probability $1-e^{-\Omega(L)}$.
Moreover, for a square lattice in two spatial dimensions, 
the critical probability is 
$q_c=1/2$. 
We refer to Appendix~\ref{sec:App:percolation} for a more in-depth review of percolation theory,
including percolation on rectangular lattices.


Our application of percolation theory follows similar techniques used to compute 
the R\'enyi-0 entropy in Ref.~\cite{PhysRevX.9.031009}.
In order to formally apply techniques from percolation theory to monitored
random quantum circuits, we
map a monitored random circuit to a 
graph with edges that are randomly open or closed.
%
%
%
We define a 
{graph} by mapping each two-qubit unitary gate and its unmeasured bonds to a vertex and the open edges incident with it, respectively (Fig.~\ref{fig:Circuit_to_percolation}). 
With this mapping, the measurement rate $p$ is equal to the probability of closing an edge $1-q$.
{Moreover, percolation results for the square lattice extend naturally to the diagonally tilted square lattice
as in Fig.~\ref{fig:Circuit_to_percolation}, given that percolations from the left to the right of the
tilted lattice can be constructed from left-right and top-bottom percolations on the original, untilted
lattice (cf.\@ Appendix~\ref{sec:App:percolation}).}


\begin{figure*}[htb]
\begin{center}~
\includegraphics[width=150mm]{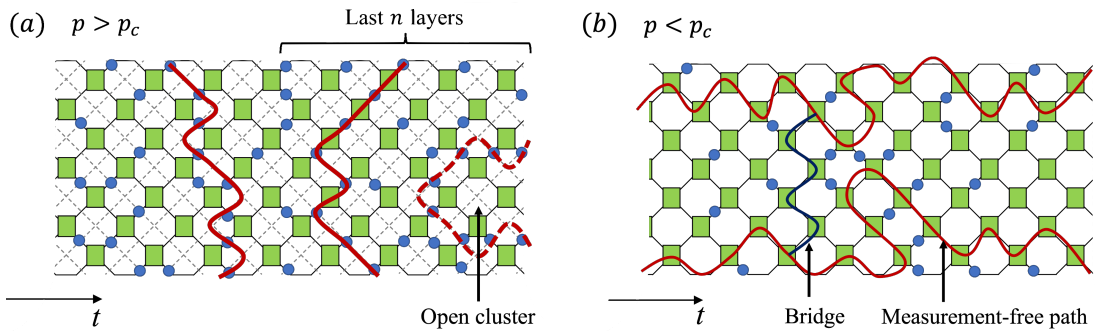}
\end{center}
    \caption{Monitored random circuits 
    above and below the critical measurement probability. (a) 
    Above the threshold, paths
    of measurements cut across the circuit from top to bottom. Their effect is to reset
    the state along the path to a product state, whose complexity vanishes.
    The region delimited by the red broken line is an example of an \emph{open cluster}. 
    (b)  Below the threshold, a linear number of measurement-free paths cross from the beginning
    to the end of the circuit.  These paths 
    can be used to embed a unitary circuit into the monitored circuit.
    A bridge (black line) 
    is  a vertical path of open edges
    used to implement two-qubit gates 
    in the
    embedded unitary circuit.}
    \label{fig:crossings}
\end{figure*}

\subsection{The uncomplex phase} 
\label{sec:proof-Cs-uncomplex-phase}

As a warm-up and to build additional intuition with the proof techniques we use, 
we first provide 
a simple upper bound on the $C_m$-complexity in the regime $p>p_c$:
%
Consider a circuit of depth $t>n$, and consider the last $n$ layers of that circuit.
Our strategy is to use percolation theory to conclude that there exist measurements that cut through the
width of the circuit in those last $n$ layers, resetting the state vector to $\ket{0^n}$ at the location of those
measurements [Fig.~\ref{fig:crossings} (a)].
We apply percolation theory to the dual lattice of the percolation model introduced
in Fig.~\ref{fig:Circuit_to_percolation}, depicted in Fig.~\ref{fig:crossings} (a).
For $p>p_c$, percolation theory states that with probability $1-e^{-\Omega(n)}$ there exist
paths of measurements in the dual lattice that connect the top of the circuit with the bottom side of the circuit.  For such a path, there is no unmeasured bond connecting the gates on the left side of the path to the gates on the right side of the path. (This
property would not have been guaranteed had we applied percolation theory directly
to the graph in Fig.~\ref{fig:Circuit_to_percolation} rather than to its dual lattice.)
The measurements therefore reset the state vector along the path to $\ket{0^n}$.
Since there are at most $O(n^2)$ gates after this path, both the accessible dimension $d_M$ as well as the output state complexity $C_m(\ket\psi)$ cannot exceed $O(n^2)$.
Therefore, if $p>p_c$, then $C_m \leq O(n^2)$ except with probability $e^{-\Omega(n)}$.

We now present our the part of the proof of Theorem~\ref{thm:phase_transition_Cs} pertaining
to the uncomplex phase. 
%
{
Our proof proceeds by 
upper bounding the size of regions consisting of connected open edges,
or \emph{open clusters}, on the graph in Fig.~\ref{fig:Circuit_to_percolation}. 
Open clusters 
correspond to 
connected bonds of gates in the circuit which are not measured (inside the broken line in Fig.~\ref{fig:crossings} a). 
The output state only depends on the unitary gates whose bonds are in the open
clusters and contain the boundary at the final time. 
Indeed, single-qubit  measurements at the boundary
of the open clusters reset each qubit to $\ket{0}$. 
There are 
no more than 
$n/2$ open clusters containing the bonds at the final time.
We now 
upper bound the size of such open clusters by $O(\log (n))$. 
Let $\{C_i\}_1^m$, where $m \leq n/2$, be the set of the distinct  open clusters containing the bonds at the final time and $|C_i|$ be the number of edges, or bonds, in $C_i, i \in \{1, 2,  \je{\dots}, m\}$. Then, the following lemma holds.
\begin{lem}[Small unmeasured regions]
  \label{lem:small_unmeasured_regions}
  Assume $p>p_c$. For any $0< \epsilon <1$, it holds:
  \begin{align}
      |C_i|=O \left( \log \left( \frac{n}{\epsilon}\right) \right),
  \end{align}
  for all $i \in \{1, 2,  \je{\dots}, m\}$, with probability $1-\epsilon$.
\end{lem}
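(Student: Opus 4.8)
The plan is to control the size of an open cluster that touches the final-time boundary by a standard subcritical percolation estimate, and then take a union bound over the at most $n/2$ such clusters. First I would fix one bond $e$ at the final time layer and consider the open cluster $C$ containing $e$. Because $p > p_c = 1/2$, the percolation parameter $q = 1-p$ for open edges satisfies $q < q_c = 1/2$, so we are in the subcritical regime. The key classical input (to be cited from Appendix~\ref{sec:App:percolation} or \cite{grimmett1999percolation}) is that in subcritical bond percolation on a lattice of bounded degree, the cluster size has an exponentially decaying tail: there is a constant $c = c(q) > 0$ such that $\mathbb{P}(|C| \ge k) \le e^{-ck}$ for all $k$. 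Some care is needed because $e$ sits on the boundary rather than deep in the bulk, but this only changes constants: one can either invoke the half-plane version of the exponential decay, or embed the circuit's lattice into an infinite lattice and use monotonicity, noting that adding edges can only enlarge $C$, so the bulk tail bound dominates.

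Next I would choose the threshold $k = k(n,\epsilon)$ so that $e^{-ck} \le 2\epsilon/n$, i.e.\ $k = \frac{1}{c}\log(n/(2\epsilon)) = O(\log(n/\epsilon))$. For this value, each fixed boundary bond lies in an open cluster of size at least $k$ with probability at most $2\epsilon/n$. There are at most $n-1 < n$ bonds in the final time layer (one per gate in a layer, and the last layer has $n/2$ gates touching at most $n$ qubits — in any case $O(n)$ bonds), hence at most $O(n)$ distinct clusters $C_1,\dots,C_m$ touching the boundary with $m \le n/2$. Taking a union bound over these $O(n)$ boundary bonds, the probability that \emph{some} final-time cluster has size exceeding $k$ is at most $n \cdot (2\epsilon/n) = 2\epsilon$ (absorbing the factor into the constant, or rescaling $k$ to kill the factor $2$ gives exactly probability $\le \epsilon$). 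Therefore with probability $1-\epsilon$ every $C_i$ satisfies $|C_i| \le k = O(\log(n/\epsilon))$, which is the claimed bound.

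The main obstacle I anticipate is the boundary effect: the clean exponential-tail statement for cluster sizes is usually quoted for translation-invariant percolation on $\mathbb{Z}^d$ (or the relevant 2D lattice), whereas here the relevant bonds live on the top edge of a finite tilted rectangular lattice, and moreover one must use the correct duality so that "cluster touching the final-time boundary" is the right event (this is the same subtlety flagged in the excerpt about applying percolation to the dual lattice versus the primal lattice for the uncomplex-phase argument). Handling this cleanly requires either citing a half-space exponential-decay result, or a monotone coupling argument: enlarge the finite lattice to an infinite one, observe that the open cluster of $e$ in the finite lattice is contained in the open cluster of $e$ in the infinite lattice, and apply the standard subcritical bound there. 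I would also need to make sure the lattice in Fig.~\ref{fig:Circuit_to_percolation} has bounded degree (it does — each gate-vertex has degree at most $4$) so that the subcritical constant $c(q)$ is well-defined and uniform in $n$. Everything else — the union bound and the choice of $k$ — is routine, and the appearance of $n/\epsilon$ inside the logarithm is exactly what the union bound over $O(n)$ boundary bonds with target failure probability $\epsilon$ produces.
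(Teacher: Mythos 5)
Your proposal is correct and follows essentially the same route as the paper: the paper's Appendix~\ref{App:subcritical} likewise combines the subcritical exponential tail bound on cluster size (Theorem~6.75 of Ref.~\cite{grimmett1999percolation}, stated as Fact~\ref{Fact:exp_decay}) with a union bound over the $m \leq n/2$ final-time clusters and the choice $k \geq \frac{1}{\lambda(q)}\log(m/\epsilon)$. Your additional care about the boundary/finite-lattice issue, resolved by monotone embedding into the infinite lattice, is a point the paper passes over silently but is handled exactly as you describe.
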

We give a proof of the above lemma in
Appendix~\ref{App:subcritical} (stated \je{there} 
as Lemma~\ref{lem:Small_open_clusters}).  
Because of Lemma~\ref{lem:small_unmeasured_regions}, the output state is generated by a $O(\log(n/\epsilon))$-depth post-selected quantum circuit, implying that
$C_m 
\leq O(n\log(n/\epsilon))$ with probability $1-\epsilon$.
Moreover, it indicates that the Schmidt rank of the output state in any bi-partition is $\operatorname{poly} (n/\epsilon)$, implying that the output state can be efficiently represented by a 
\emph{matrix product state} 
(MPS)~\cite{mpsGarcia2007}, and therefore it is prepared by a unitary circuits with $\operatorname{poly}(n)$ complexity \cite{PhysRevLett.95.110503, cramer2010efficient}.
Overall, for the output state vector $\ket{\psi}$, our argument gives upper bounds
$C(\ket{\psi}) 
\leq \operatorname{poly} (n/\epsilon)$.}

This proof also recovers the upper bound for $C_m$ obtained in our initial percolation
argument (cf.\@ warm-up proof above) when $\epsilon$ is chosen exponentially small.
Plugging $\epsilon = e^{-cn}$ for fixed $c>0$ yields the upper bound
$C_m \leq O(n\log(n) + cn^2) = O(n^2)$.


\subsection{The complex phase} 
\label{sec:proof-Cs-complex-phase}

{The $C$-complex phase refers to the phase in which the $C$ complexity
  grows {at least} linearly until saturating to a value $e^{\Omega(n)}$.  We show that this phase
  occurs in monitored random circuits whenever $p<p_c$.}



Our proof proceeds as follows.  For a fixed measurement configuration $M$,
the goal is to prove a lower bound on the accessible dimension $d_M$ in order
to apply Lemma~\ref{lem:complexity_by_dimension}.
%
%
The strategy to lower bound $d_M$ is to show that, for some $m=\Omega(t)$, it is
possible to embed any depth-$m$ unitary circuit with arbitrary single-qubit
gates and CNOT gates into a set of paths along the monitored quantum circuit
that avoid measurements.  We then show that the accessible dimension of such unitary
circuits grows linearly in $m$, thereby showing that $d_M = \Omega(m)$.
The bulk of this section is concerned with constructing such an embedding.

When $p<p_c$, there are $\Omega(n)$ measurement-free paths even for
exponentially long monitored random quantum circuits:
\begin{lem}[Existence of measurement-free paths]
  \label{lem:linear_paths-maintext}
  If $p<p_c$, there exist $\Omega(n)$ disjoint measurement-free paths that
  percolate throughout the length of the circuit {in time $e^{\Omega(n)}$}, with probability $1-e^{-\Omega(n)}$.
\end{lem}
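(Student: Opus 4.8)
The plan is to argue entirely within the percolation model of Fig.~\ref{fig:Circuit_to_percolation}. A measurement-free path that percolates throughout the length of the circuit is precisely an open left--right crossing of the associated (tilted) square lattice, and since $p<p_c=1/2$ the edge-open probability $q=1-p$ sits in the supercritical regime $q>q_c=1/2$. As already noted in Section~\ref{sec:percolation-theory}, crossings of the diagonally tilted lattice can be assembled from left--right and top--bottom crossings of an untilted square lattice, so it suffices to prove the statement for ordinary bond percolation on $\mathbb{Z}^2$ in a rectangle of dimensions roughly $t$ (length, time direction) by $\Theta(n)$ (width, qubit direction), with $t$ allowed to be exponentially large in $n$. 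The target statement becomes: for $q>1/2$ such a rectangle contains $\Omega(n)$ pairwise disjoint open left--right crossings with probability $1-e^{-\Omega(n)}$, provided $t\le e^{an}$ for a suitable constant $a=a(p)>0$.

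The first step is to translate ``number of disjoint crossings'' into a min-cut. By Menger's theorem (max-flow/min-cut applied to the subgraph of open edges), the maximal number of disjoint open left--right crossings of the rectangle equals the minimal number of open edges whose removal disconnects the left side from the right side; by planar duality such a minimal cut corresponds to a dual-lattice path running from the top of the rectangle to its bottom, whose size is the number of \emph{open} primal edges it crosses (closed primal edges being crossed at no cost). It therefore suffices to show that, except with probability $e^{-\Omega(n)}$, every top-to-bottom dual path crosses at least $\delta n$ open primal edges for some constant $\delta=\delta(p)>0$. Declaring a dual edge open iff the primal edge it crosses is closed (open probability $1-q=p<1/2$, a \emph{subcritical} model), this amounts to ruling out a top-to-bottom dual barrier of width-span $\Omega(n)$ that is ``mostly subcritically open'' up to at most $\delta n$ defect edges. (The weaker claim that a single open crossing exists up to the sharp threshold is classical, following from exact planar duality together with exponential decay of subcritical connectivities; the difficulty here is the linear number of \emph{disjoint} crossings, where the $\delta n$ defect budget spoils exact duality.)

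The heart of the proof is this barrier estimate, and a naive Peierls-type count over dual paths only reaches $q>2/3$; to push the threshold down to $q>1/2$ I would invoke the standard block-renormalization technique for supercritical percolation~\cite{grimmett1999percolation} (in the spirit of Pisztora's coarse-graining). One partitions the lattice into blocks of a large constant side length $\ell$, declares a block \emph{good} if its restricted configuration has open crossings in every direction and links up with its neighbours, and uses $\Pr[\text{block good}]\to1$ as $\ell\to\infty$ (valid because $q>q_c$). The good blocks then stochastically dominate a site-percolation process on the coarse $\Theta(t/\ell)\times\Theta(n/\ell)$ lattice with occupation density as close to $1$ as desired, and in this near-fully-occupied regime a crude Peierls count \emph{does} work: it yields (i) $\Omega(n/\ell)=\Omega(n)$ disjoint left--right crossings of good blocks, each of which carries a genuine disjoint open crossing of the original lattice, and (ii) a bound of the form (number of coarse-barrier starting points) $\times\,\rho^{cn/\ell}$ with $\rho<1$, $c>0$ on the probability that the coarse min-cut is small. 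Since the number of coarse-barrier starting points is $O(t/\ell)$, choosing $t\le e^{an}$ with $a$ small enough that $e^{an}\rho^{cn/\ell}\to0$ makes the overall failure probability $e^{-\Omega(n)}$, proving the lemma with ``time $e^{\Omega(n)}$'' built in.

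The steps I expect to be most delicate are: (a) the block-renormalization bookkeeping that converts disjoint good-block crossings in the coarse lattice into genuinely disjoint (vertex- or edge-disjoint, as needed by the embedding) open paths in the original, diagonally tilted circuit lattice, including handling block boundaries and the tilting; and (b) verifying that the defect budget $\delta n$ introduced by dualizing the min-cut is compatible with the coarse Peierls bound, i.e.\@ that the various constants ($\ell$, $\delta$, $\rho$, $a$) can be chosen consistently so that the transition sits exactly at $p_c=1/2$ rather than at some larger value. Everything else---the reduction to the untilted lattice, the Menger/duality step, and the union bound over the exponentially long time direction---I expect to be routine.
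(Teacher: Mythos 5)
Your outline is sound and, with the coarse-graining bookkeeping carried out, would prove the lemma down to the sharp threshold $p_c=1/2$; but it takes a genuinely different and substantially heavier route than the paper. The paper's proof (Appendix~\ref{sec:App:percolation}, culminating in Lemma~\ref{lem:linear_paths}) uses neither Menger's theorem with a dual barrier and defect budget nor Pisztora-style block renormalization. Instead it (i)~glues single crossings of overlapping squares and $\tfrac{3}{2}$-rectangles with the FKG inequality to obtain one left--right crossing of the $L\times LT$ rectangle with probability at least $1-2e^{-\alpha(q)L/2+\log 3(T-1)}$ (Lemma~\ref{lem:path_large_aspect}), which remains exponentially close to $1$ for $T\le e^{\Theta(n)}$; and then (ii)~applies the ``interior of an increasing event'' sprinkling bound, Fact~\ref{fct:stability_events} (Theorem~2.45 of Ref.~\cite{grimmett1999percolation}), at an intermediate parameter $q'\in(1/2,q)$: since a configuration with fewer than $r+1$ edge-disjoint crossings exits the crossing event after deleting $r$ edges, this single inequality converts ``one crossing w.h.p.\ at $q'$'' directly into ``$\beta(q)L$ edge-disjoint crossings w.h.p.\ at $q$''. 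That one step does what your Menger-plus-renormalization argument does over several stages, and it is what lets the threshold sit exactly at $1/2$ with no tuning of block sizes or defect densities. Note also that the input your coarse graining needs --- $\Pr[\text{block good}]\to 1$ for every $q>1/2$ --- is itself the same duality/exponential-decay fact (Eq.~(\ref{Eq:super_critical_square})) from which the paper starts, so your route does not bypass that ingredient. Your union bound over the $O(t/\ell)$ positions of a coarse barrier is a legitimate substitute for the paper's FKG product over $T$ squares, and the delicate points you flag are real; the one I would emphasize is making sure the paths you ultimately deliver are edge-disjoint on the tilted gate lattice (as required by the embedding of Section~\ref{sec:proof-Cs-complex-phase}), since block constructions tend to produce paths that share linking regions at block boundaries, whereas Fact~\ref{fct:stability_events} yields edge-disjointness for free.
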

We give a proof of Lemma~\ref{lem:linear_paths-maintext} in
Appendix~\ref{subsec:App:percolation_largeaspect} (stated as
Lemma~\ref{lem:linear_paths}).  
%
%
%
Without loss of generality, we can assume that all measurement outcomes in the monitored circuit
are $0$ without changing the accessible dimension associated with the measurement configuration $M$.
Indeed, the gates are chosen at random from the unitarily invariant Haar measure on $SU(4)$;
thus, for any measurement outcome $1$, we can map the setting to an equivalent one
where the measurement is $0$ and where additional $X$ gates are applied immediately before
and immediately after that measurement.

We now seek to construct an embedding of a quantum unitary circuit of $\Omega(n)$ qubits into the
monitored random quantum circuit, where each measurement-free path carries one qubit of
the unitary circuit.
We first construct this embedding in a simpler situation with some additional
convenient assumptions.  We then present the embedding in the general case, lifting
all the simplifying assumptions.

Let us assume that
all measurement-free paths always traverse gates from an input leg of
the gate to an output leg of the gate.
Following such measurement-free paths, one does not go
back in the time direction, and we say the paths are \emph{causal}.
Each path is assigned to {carry one qubit while avoiding measurements}.
We apply the identity
gate or the SWAP gate so that the qubit state follows the legs of the path (Eq.~(\ref{eq:simple_example})).
In this way, qubit states are transferred along the measurement-free paths without being measured.
We can apply an arbitrary single-qubit gate to the qubit by multiplying a single-qubit gate to the
identity gate or the SWAP gate.
Let us furthermore assume
that nearest neighbour paths meet at some points, that is, nearest neighbour
paths include the legs of the same unitary gates, and the number of the unitary gates is $\Omega(t)$ for each path.
At the point two paths meet, we can apply a CNOT gate, which results in performing a CNOT gate
on the two qubits carried by the
nearest neighbour paths.
The two-qubit gates which are outside the measurement-free paths are chosen to be identity gates.
The case described above is graphically exemplified as 
\begin{equation} \label{eq:simple_example}
    \includegraphics[width=70mm]{embedding_simple.png}.
\end{equation}
Here, there are three such paths, i.e., it simulates a unitary circuit with three qubits, and we
apply the suitable two-qubit gate along the paths and at the points they meet, for example we
applied $I$, $\textrm{SWAP}$, $\textrm{CNOT}$ in the broken circles as shown.
Single-qubit gates can be multiplied into these two-qubit gates to enable universal
computation in the embedded unitary circuit.

In the above setting, the output states at the end of measurement-free paths is equal to
a state generated by a depth-$\Omega(t)$ unitary circuit consisting of single-qubit gates and
CNOT gates with the brick-wall architecture.
Because arbitrary single-qubit gates and CNOT gates form a universal gate set~\cite{Elementary_gate},
we can embed a universal unitary circuit into a monitored circuit with such a measurement configuration $M$. 
{Let $S^0$ be the set of the output states of a random unitary circuit with the brick-wall architecture and two-qubit random unitary gates
\begin{align} \label{eq:embedded_unitary_gate}
    U= \left( u_1 \otimes v_1 \right) W\left( u_2 \otimes v_2 \right),
\end{align}
where $u_{1,2}$, $v_{1,2}$ are Haar random single-qubit gates and $W$ is chosen from \{$I$, CNOT\} uniformly randomly.
We denote by $d_0$ the accessible dimension of $S^0$, where the accessible dimension of the random
unitary circuit is defined as per Definition~\ref{dfn:AD} with a measurement configuration that contains
no measurements and with single-qubit perturbations: $(\alpha, \beta)=(I, \sigma), (\sigma, I)$ for $\sigma \in \{X, Y, Z \}$ in Eq.~(\ref{eq:vectors}). 
The reason for the restriction of the perturbations to single-qubit gates is because only single-qubit gates in Eq.~(\ref{eq:embedded_unitary_gate}) are parametrized continuously and can be, therefore,  perturbed.
Then, because the perturbed output states of the random unitary circuit are equal to some perturbed output states of the monitored random circuit in Eq.~(\ref{eq:vectors}) 
with $M$ simulating the random unitary circuit,
up to real scalar factors, we obtain the inequality
\begin{align}
   d_M \geq d_0.
\end{align}
Then, we use an argument following Ref.~\cite{haferkamp2022linear} to lower bound $d_0$. We specify the depth of the unitary circuit by $d_0(t)$, which is $d_0$ of depth-$t$ random unitary circuits defined above. Then we prove the following lemma.

\begin{lem}[Lower bound on $d_0$]
 \label{lem:lowerbound_d0}
Let $t \geq 0$ be an integer. Then, $d_0$ grows linearly in depth as
\begin{align}
d_0(t) \geq \left\lfloor \frac{2t}{3n} \right\rfloor,
\end{align}
 until it saturates in a depth exponential in $n$.
\end{lem}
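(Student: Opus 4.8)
The plan is to follow the strategy of Ref.~\cite{haferkamp2022linear} for lower bounding the accessible dimension of a random brick-wall circuit, specialized to the fact that here only the single-qubit factors $u_{1,2},v_{1,2}$ of the gates in Eq.~(\ref{eq:embedded_unitary_gate}) are continuously parametrized and hence perturbed. By Definition~\ref{dfn:AD}, $d_0(t)$ is the \emph{maximal} rank of the contraction map over all admissible choices of gates, so it suffices to exhibit a single explicit circuit instance of depth $t$ at which the rank is at least $\lfloor 2t/(3n)\rfloor$. Two preliminary observations simplify matters. First, $d_0(t)$ is non-decreasing in $t$: appending layers in which $W=I$ and all single-qubit factors are trivial recovers the depth-$t$ circuit as a special case of the deeper one, so the image $\mathcal{S}^0$, and hence its dimension, can only grow. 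Second, it is enough to lower bound the rank of the contraction map at one conveniently chosen background circuit instance and then count linearly independent tangent vectors there.

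Concretely, I would fix such a background circuit of depth $t$ by a specific choice of the discrete factors $W\in\{I,\mathrm{CNOT}\}$ and of the single-qubit factors, making it a structured Clifford-plus-single-qubit circuit for which the propagation of single-qubit Pauli insertions through the layers can be tracked explicitly. Perturbing a gate $U_j\mapsto(I\otimes\sigma)U_j$ or $(\sigma\otimes I)U_j$ inserts the Pauli $\sigma$ into the circuit immediately after $U_j$, and the associated tangent vector is the output state of the background circuit with that extra Pauli inserted. The crux is to select a family of $\lfloor 2t/(3n)\rfloor$ insertion points --- one per ``block'' of $\Theta(n)$ consecutive layers --- whose tangent vectors, together with the background output state itself, are linearly independent.

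Establishing this linear independence is the main step and the main obstacle. I would order the chosen insertion points by their depth in the circuit and argue by a triangular, equivalently inductive, structure: a Pauli inserted in block $k$ is propagated through the scrambling layers of the later blocks and, at the output, spreads over a forward light cone that no earlier insertion can reach; arranging the background circuit so that this spreading yields a computational-basis (or Pauli) component not produced by any earlier tangent vector shows that the $k$-th tangent vector lies outside the span of the first $k-1$ and of the background output state. Controlling this ``no-cancellation'' is exactly the bookkeeping for how single-qubit perturbations propagate through a structured Clifford-plus-single-qubit circuit that is carried out in Ref.~\cite{haferkamp2022linear} (see also the streamlined treatment in Ref.~\cite{li2022short}); the block length $\Theta(n)$, which produces the rate $2/(3n)$ per layer, is the margin that light-cone constraints force once each gate is allowed to contribute only single-qubit perturbation directions --- the argument is deliberately crude, extracting just one new direction per block, since linear growth in $t$ is all that the main theorem requires.

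Finally, the saturation claim follows from dimension counting. The rank of the contraction map is bounded above by the real dimension of the manifold of $n$-qubit states, which is at most $2^{n+1}$; and since the single-qubit gates together with $\mathrm{CNOT}$ form a universal gate set, for sufficiently large depth $\mathcal{S}^0$ becomes Zariski-dense in this manifold, so $d_0$ does reach a value $e^{\Omega(n)}$. Consequently the bound $\lfloor 2t/(3n)\rfloor$ remains informative, and $d_0$ keeps increasing, until a depth $t=\Theta(n\,2^{n})=e^{\Omega(n)}$, at which point $d_0$ saturates.
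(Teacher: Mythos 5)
Your overall strategy coincides with the paper's: exhibit one explicit depth-$t$ circuit instance at which the rank of the contraction map is at least $\lfloor 2t/(3n)\rfloor$, extract one new tangent direction per block of $\Theta(n)$ layers, and obtain saturation from the trivial dimension bound $d_0 \leq 2\cdot 2^n-1$. The gap is in the step you yourself flag as the main obstacle: your linear-independence argument via forward light cones is both unsubstantiated and, as stated, pointing the wrong way --- an insertion in an \emph{earlier} block has the \emph{larger} forward light cone at the output (it has more subsequent layers to spread through), so ``spreads over a forward light cone that no earlier insertion can reach'' cannot be the triangular structure you need.

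The paper's actual mechanism goes backward rather than forward. The tangent vector for an insertion at gate $j$ is $U_R\cdots(\alpha\otimes\beta)U_j\cdots U_1\ket{0^n} = V\, P_{\alpha,\beta,j}\ket{0^n}$, where $V$ is the full (invertible) circuit unitary and $P_{\alpha,\beta,j}=U_1^\dagger\cdots U_j^\dagger(\alpha\otimes\beta)U_j\cdots U_1$, so linear independence of the tangent vectors is equivalent to linear independence of the vectors $P_{\alpha,\beta,j}\ket{0^n}$. Choosing the background circuit to be Clifford makes each $P_{\alpha,\beta,j}$ a Pauli string, hence $P_{\alpha,\beta,j}\ket{0^n}$ is a phase times a computational basis state, and linear independence reduces to making the backward-conjugated Paulis hit distinct basis states. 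The quantitative input is that a depth-$n/2$ Clifford circuit suffices to conjugate $Z\otimes I$ into an arbitrary Pauli string, so each successive block of $n/2$ Clifford layers buys one new independent direction; decomposing each two-qubit Clifford gate into at most three CNOTs plus single-qubit gates (to fit the gate set of Eq.~(\ref{eq:embedded_unitary_gate})) inflates the depth by a factor of $3$, which is where the constant $\lfloor 2t/(3n)\rfloor$ comes from --- it is a gate-synthesis overhead, not a light-cone margin. Without these two ingredients your proposal asserts the bound rather than proves it; the remaining elements (monotonicity in $t$, reduction to a single witness circuit, dimension counting for saturation) are fine and match the paper.
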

We give a proof of Lemma~\ref{lem:lowerbound_d0} in Appendix~\ref{sec:App:Accesdim} (stated as Lemma~\ref{lem:lowerbound_d02}).
Moreover, with Lemma~\ref{lem:complexity_by_dimension}, it implies that 
{$C_m(\ket{\psi})$ and $C(\ket{\psi})$}, where $\ket{\psi}$ is an output state vector of monitored random circuits with the above measurement configurations, 
also grows linearly and {at least linearly in $t$, respectively, until saturating to a value $e^{\Omega(n)}$.}

What is left to be shown is that a unitary circuit can be embedded to a monitored circuit with the general conditions, such that measurement-free paths are not always causal, and they do not meet at some space-time points.
First, we generalize the embedding to the case where measurement-free paths are not causal,
that is, the paths include the legs of two inputs or two outputs of a two-qubit unitary gate.
For now, we assume that there are measurements at the points where the path changes the time
direction (the broken circles in Eq.~(\ref{eq:measurement-free})). 
{(We discuss below how to remove this assumption using
Lemma~\ref{lem:measurement cannot increase dimension}.)}
In this case, the path is graphically shown as 
\begin{equation} \label{eq:measurement-free}
    \includegraphics[width=70mm]{measurement_free.png},
\end{equation}
where we marked with broken circles at which the path changes the time direction. Still, the qubit state can be protected from the measurements by using a scheme similar to the entanglement teleportation. To see this, we need the following simple equality: If we choose the two-qubit gate as $U=\textrm{CNOT} (H \otimes I)$, then we have
\begin{equation} \label{eq:teleportation_3}
    \includegraphics[width=80mm]{teleportation_3.png},
\end{equation}
where we have omitted the constant factor in the last equality, which is not important for our proof. 
Here, we can interpret it as the measurement in the Bell basis, with a post-selection on the outcome.
With Eq.~(\ref{eq:teleportation_3}) in mind, we fix the unitary gate at which the path changes
direction {to go backwards} in time (the {bottom} broken circle in Eq.~(\ref{eq:measurement-free}))
so that the qubit state is measured in the Bell basis.  
{We also fix the unitary gate at which the path changes
direction to go forwards again in time (the top broken circle in Eq.~(\ref{eq:measurement-free})) so that
a Bell state is prepared.}
The qubit state can therefore be transferred along the path,
i.e., the input state of the measurement-free path is 
equal to its output. 

Next, we discuss how to apply a CNOT gate between two nearest-neighbour paths which do not share a unitary gate, and give a lower-bound on the number of CNOT gates that can be performed.
Here, we make use of measurement-free paths which percolate through the width of the circuit, i.e., from the top to the bottom. 
To perform a CNOT gate, we are only interested in a segment of the top-bottom measurement-free paths between the two paths carrying the quantum state, and we call such segment \emph{bridge}. They are graphically exemplified as 
\begin{equation} \label{eq:make_bridge}
    \includegraphics[width=85mm]{Make_bridge.png},
\end{equation}
where the red lines are the paths carrying two-qubit state. 
For now, we assume again that there are measurement at the following
desired locations: (1)~the legs of the unitary gates at which the top-bottom paths
change direction in time (analogous to the broken circles in Eq.~(\ref{eq:measurement-free}))
and (2)~the fourth leg 
{of the unitary gates at the intersection of the horizontal measurement-free paths and
the top-bottom measurement-free path}
(the broken circles in Eq.~(\ref{eq:make_bridge})).
If we fine tune the unitary gates {along} 
the bridge,
we can perform a CNOT gate between nearest-neighbour measurement-free paths.
Specifically,
{we choose the unitary gates along a bridge}
such that the bridge protects a qubit state
from being measured as with the unitary gates in the measurement-free paths,
 using a SWAP gate or an identity gate if the bridge traverses the gate from an input leg
to an output leg, or the scheme in Eq.~(\ref{eq:teleportation_3}) if the bridge traverses
the gate through two input legs or through two output legs.
Also, for two unitary gates at the edge of the bridge (the broken circles in Eq.~(\ref{eq:make_bridge})),
we choose them as CNOT and $\textrm{CNOT} (I \otimes H)$ or $(I \otimes H) \textrm{CNOT}$,
{possibly multiplied by SWAP if required to ensure the qubit continues to be transferred along
the horizontal measurement-free path}. 
The target qubit of $\textrm{CNOT}$ and the order of $I \otimes H$ and $\textrm{CNOT}$  depend on the locations of legs belonging to the bridge and the path at the edge of the bridge, i.e., the shape of the path and the bridge in the broken circles in Eq.~(\ref{eq:make_bridge}).
In the example in Eq.~(\ref{eq:make_bridge}), CNOT is performed as
\begin{equation} \label{eq:Bridge_CNOT}
    \includegraphics[width=85mm]{Bridge_CNOT.png},
\end{equation}
where we chose the unitary gates along measurement-free paths and inside the bridge as the specific
ones so that they carry qubit states, we apply $(I \otimes H ) \textrm{CNOT}$ and $\textrm{CNOT}$
at the edge of the bridge, and we omit the constant factor in the last equality.
 Other bridge configurations, such as if the bridge is attached on both ends to
input legs of unitary gates on the horizontal measurement-free paths, can be
treated similarly (cf. Appendix~\ref{app:two-qubit_gate}).

For every of the $n$ time steps,
that is every of the squares of the monitored circuit from $(i+1)$-th time step to $(i+n)$-th  time step, where $i$ is a multiple of $n$,
there are $O(n)$ top-bottom measurement-free paths with probability $1-e^{-\Omega(n)}$ (Fact~\ref{fct:linearnum} in Appendix~\ref{sec:App:percolation}) until an exponential number of time steps in $n$.
Then, we can apply $\Omega(t)$ layers of CNOT gates with the brick-wall architecture using the bridges made by the top-bottom paths.
We can therefore embed any depth-$m$ unitary circuit, where $m=\Omega(t)$, with arbitrary single-qubit
gates and CNOT gates into a monitored circuit with such measurement configuration.

In the discussion above, we have assumed that there are measurements at certain desired locations: around the points where the measurement-free paths and the bridges change direction in time, and on the fourth leg of each junction of the paths and the bridges.
Below, we show how a lower-bound on the accessible dimension is obtained without the measurements at the desired locations.
We consider a measurement configuration $M$ which does not include measurements at such locations.
Then, we set up another configuration $M'$ by adding measurements to $M$ at the desired locations.
Here, by adding measurements, we mean that $M'$ is made by changing some $\sqrt{1-p} I$ in $M$ to projections $\sqrt{p}\ketbra{0}{0}$ or  $\sqrt{p}\ketbra{1}{1}$.
Because we have assumed that the measurement outcomes are all $0$, we replace  $\sqrt{1-p} I$ by $\sqrt{p}\ketbra{0}{0}$.
For example, we add measurements to the points which a measurement-free path changes the time direction as
\begin{equation} \label{eq:adding_measurements}
    \includegraphics[width=85mm]{before_after_adding.png}.
\end{equation}
Then, using Eq.~(\ref{eq:teleportation_3}) again, a qubit state can be transferred along the path with the measurement configuration $M'$ in Eq.~(\ref{eq:adding_measurements}).
A key lemma to lower-bound $d_M$ by considering $M'$ is that the accessible dimension
cannot increase by adding measurement (Lemma~\ref{lem:measurement cannot increase dimension}):
If $M'$ is made up by adding measurements to $M$, then $d_{M'} \leq d_M$. 
Therefore, a lower-bound on $d_{M'}$ immediately implies one on $d_{M}$.
However, adding measurements to qubits neighboring a measurement-free path might
inadvertently break another measurement-free path in the circuit.
Such a situation can occur if a measurement-free path shares a unitary gate with a
nearest-neighbour path at which it changes direction in time.
Still, the number of measurement-free paths that survive after adding the required measurements
remains $\Omega(n)$ 
because we can pick up at least half of the paths in $M$ such that any pair of two paths
do not share the same unitary gates.

In summary,
a depth-$t$ monitored circuit with $M'$, where measurement are added at the desired locations, can simulate a depth-$\Omega(t)$ unitary circuit, which implies that $d_{M'}=\Omega(t)$.
This lower-bound holds until an exponential time in $n$, because linear number of measurement-free paths in $n$ and linear number of bridges in $t$ exist until then with probability $1-e^{-\Omega(n)}$
Then, because $d_{M'}$ lower-bounds $d_{M}$, we obtain $d_{M}=\Omega(t)$, which means that the accessible dimension of a monitored circuit with measurement rate $p<p_c$ grows linearly in $t$ until a time $e^{\Omega(n)}$.

\section{Conclusion and discussion} \label{sec:conclusion}

{Our work combines techniques from quantum complexity and monitored quantum circuits to}
show that the quantum complexity of a state ---
akin to other physical quantities including the entanglement entropy
--- undergoes phase transitions in a many-body system subject to measurements. 
Our results, therefore, contribute to reinforcing the interpretation
of quantum complexity as a meaningful physical quantity,
{given its ability to identify different regimes of behavior of the evolution
of a quantum many-body system.  Indeed, the $C$- ($C_m$-) complexity undergoes a drastic
transition, depending on the rate at which measurements are applied,
between a regime where it saturates quickly and a regime in which it increases
{at least linearly until saturating to a values exponentially in the number of qubits.}
%
Our conclusions follow from rigorous mathematical arguments which do not rely on any
complexity-theoretic assumptions.


We expect our results to extend beyond the brick-wall circuit layout of Fig.~\ref{fig:intro} to more general circuit architectures. 
Given any circuit layout, the percolation properties of the corresponding graph is expected to determine the complexity phase transition of the corresponding monitored circuit.
{Our results are also anticipated to extend beyond the measurement model considered in
our work, where measurements in the computational basis occur probabilistically.}

The complexity measure $C_m$ we discuss here 
is defined with respect to
a computational model that naturally
reflects our setting, 
by 
accommodating post-selective measurements alongside unitary gates.
{A measurement outcome can be post-selected to a desired one if there is non-zero probability with which we obtain the outcome without post-selection.
Such state transformation with non-zero probability has been also discussed in the context of the state conversion by 
{stochastic local operations and
classical communication} (SLOCC) \cite{SLOCC2000, Multipartite_Entanglement}.}
Also, this computational model is 
more powerful than the computational model without post-selective measurements~\cite{aaronson2005quantum};
the measure of complexity 
$C_m$ is thus a lower bound on the usual unitary circuit complexity.  
{Our result therefore indicates that the accessible dimension is a powerful mathematical tool that can also enable us to prove linear growth of such a stronger notion of complexity, 
\je{the} $C_m$-complexity.}

Lemma~\ref{lem:measurement cannot increase dimension} provides additional insight into the
added computational power offered by measurements in monitored quantum circuits. 
It suggests that while the addition of measurements can enhance the computational
power of circuits (e.g., to prepare topologically ordered states~\cite{PhysRevLett.127.220503, tantivasadakarn2021long, PRXQuantum.3.040337, tantivasadakarn2022hierarchy} using
constant depth quantum circuits, {which is impossible without measurements})
they do not explore a set of operations that is larger when measured
in terms of accessible dimension.  As such, 
{our work offers an approach to quantify the resourcefulness
of measurements when tasked with preparing a target state on $n$ qubits.}

It is natural to consider other definitions of state complexity, such as some
approximate notion of state complexity, the {strong complexity} \cite{brandao2021models},
the {complexity entropy}~\cite{PhysRevA.106.062417}, and the
{spread complexity} \cite{PhysRevD.106.046007}.
{The strong complexity, loosely defined as the circuit size required to successfully
distinguish a state from the maximally mixed state,
displays a markedly different behavior than the $C$-complexity
in monitored random circuits.  This behavior is due to the strong complexity being
sensitive to the measurement of even a single qubit.
Indeed, for any measurement rate $p$,
the presence of a single measurement on an output
qubit resets that qubit to the state vector $\ket0$, ensuring that the output state is distinguishable
from the maximally mixed state. The strong complexity, therefore, saturates quickly for any
measurement rate in the large system size limit.}
This argument 
furthermore rules out the possibility of monitored random quantum circuits
forming a \emph{state $t$-design}~\cite{2007JMP....48e2104G}
(or complex spherical $t$-design), 
since 
forming a $t$-design implies reaching a large strong complexity~\cite{brandao2021models}.
{Moreover, our arguments agree with a recent numerical analysis indicating the absence of a}
measurement-induced phase
transition in monitored random circuits 
when judged according to the extent the
monitored random circuit approximates a 
 $t$-design; 
the latter
statement has been judged based on the
results of an application of a
machine
learning algorithm~\cite{fujii2022characterizing}. 


To make robust statements about complexity growth, one would need to smooth the
complexity measures $C(\ket\psi)$ and $C_m(\ket\psi)$ by minimizing the
corresponding complexity measure over all states that
are $\epsilon$-close to $\ket\psi$ in some reasonable metric.
Evidence points to a robust version of quantum complexity indeed growing
linearly in random circuits: arguments based on $k$-designs prove robust sublinear
growth~\cite{brandao2021models}, and variants of this method yield increasingly better properties towards robustness~\cite{haferkamp2023moments}.
Proving a similar robustness property of our results appears challenging.  It is unclear,
for instance, whether arguments based on $k$-designs can be adapted to circuits
with measurements in the general setting. 
We discuss this point furthermore in Appendix~\ref{sec:App:approximate_complexity}.
In fact, there is growing evidence that states output by a monitored
quantum circuit should have efficient representations even in some region below $p_c$ (in the
$C$-complex phase).  Indeed, numerical and analytical results~\cite{
PhysRevB.101.104301}
highlight an area law behavior 
of the R\'enyi-$\alpha$ entropies for $\alpha<1$ above
$p \approx 0.2\text{--}0.35$,
implying that such states have an efficient representation in terms of 
MPS~\cite{mpsGarcia2007,PhysRevB.73.094423}.
In this regime, a robust definition of state complexity would not exceed $\operatorname{poly}(n)$.
It remains an open problem to establish the size of the gap between robust and
exact complexity measures in this regime, as well as to determine the precise threshold
at which a robust definition of complexity grows linearly until exponential times.

Any region with $p<p_c$ where the monitored circuit's output state would nevertheless
obey an area law would provide more concrete examples of states that are naturally
described by a circuit but which have shortcuts.
Finding shorter circuits that implement a given circuit is usually hard.  The
regime $p<p_c$ is also one where we might not expect measurements to percolate across the circuit,
possibly ruling out the obvious shortcut that corresponds to the original monitored circuit
simply resetting the state to a product state at some point during its evolution.
This behavior contrasts starkly with random circuits without measurements, where such
shortcuts are not expected to occur with any significant
probability~\cite{Susskind2018PiTP_three,brandao2021models,haferkamp2023moments}.

We discuss briefly the implication of our result on
the AdS/CFT correspondence in the context of holography.
The 
``complexity=volume conjecture''~\cite{susskind2016computational} suggests
that the complexity of a CFT state corresponds to the volume of a wormhole
in the dual AdS space.
{Under the assumption that a random circuit can be regarded as a reasonable proxy to study
quantum chaotic CFT dynamics, one may}
argue that monitored random 
circuits can be seen as 
proxies of CFT dynamics with local 
measurements~\cite{PhysRevB.92.075108,numasawa2016epr,antonini2022holographic}.
Therefore, in a 
simplified model where the CFT dynamics 
is represented by a random circuit
with measurements,
our results suggest that the volume of a wormhole in the AdS space also
undergoes a phase transition by changing the holographic dual of the measurement rate.

This work invites a number of 
future research directions.
First, it would be interesting to study  the critical behaviour of the accessible
dimension in the monitored circuit in the vicinity of the critical point.
It would then be interesting to investigate if the critical exponent of the accessible dimension
agrees with that of entanglement entropy~\cite{PhysRevX.9.031009}.
Second, one could give a better lower-bound of the $C$-complexity in the complex phase.
The post-selected measurements can increase the computational power of quantum
computers \cite{aaronson2005quantum}.  
Similarly, we might expect that measurements
could increase the state complexity, which might 
grow 
faster than linearly in time.
Recently, it has been shown in
Ref.~\cite{sang2022ultrafast} that the entanglement velocity---referring to the velocity at which
a pair of well-separated regions can become entangled in 
time---%
in a monitored circuit 
below a critical measurement rate
with the maximally mixed initial state is larger than that of unitary circuits.
It would be interesting to ask if the 
state complexity grows super-linearly as well in monitored circuits at low measurement rate.
Finally, important future directions of research would address the growth of a robust
measure of quantum complexity in random monitored circuits as well as in a monitored
continuous-time evolution \cite{10.21468/SciPostPhys.7.2.024, PhysRevResearch.2.013022, PhysRevB.102.054302, PhysRevLett.127.140601, PhysRevLett.128.010603}
In particular, recent proof techniques of Ref.~\cite{haferkamp2023moments} based
on the Fourier analysis of Boolean functions appear promising to address these objectives. \je{It may also help
to use the analogy of random circuits with
the evolution under time-fluctuating Hamiltonians
\cite{RandomHamiltonians}
to establish a result of this type: After 
all, the 
latter--just like random circuits--give rise to
approximate unitary designs with high probability 
as time goes on.}
Overall, our work offers new insights on monitored quantum circuits, in which unitary dynamics and measurements
are combined together, through the lens of quantum complexity.

\section{Acknowledgements}

The authors would like to thank Keisuke Fujii and Michael Gullans for useful discussions on monitored quantum circuits, as well as Tomohiro Yamazaki and Sumeet Khatri for useful feedback on an earlier version of this manuscript. We would like to thank the DFG (CRC 183, FOR 2724), the Einstein Foundation (Einstein Research Unit on Quantum Devices), Berlin Quantum, and the FQXi for support.
We acknowledge support by the Open Access Publication Fund of Freie Universität Berlin.

\appendix

\section{Accessible dimension from algebraic geometry} \label{sec:App:Accesdim}

This section reviews the original definition of the accessible dimension based on semi-algebraic geometry and the results of Ref.~\cite{haferkamp2022linear} and discusses their extensions to monitored random quantum circuits in order to establish Lemma~\ref{lem:complexity_by_dimension}.
The facts from algebraic geometry and differential geometry and lemmas
here follow the corresponding statements in the Appendix of  Ref.~\cite{haferkamp2022linear}, where there are more detailed references.
A key observation there is that the set of the all output states $\mathcal{S}^A$ forms a semi-algebraic set, and its ``dimension'' can be
meaningfully defined and bounded, although it is neither a vector space nor a manifold.

First, we introduce some basic notions of algebraic geometry.
A subset $V \subseteq \mathbb{R}^m$ is called an \emph{algebraic set}, if for a set of polynomial maps $\{ f_j \}_{j}$, 
\begin{equation}
V= \{ x \in \mathbb{R}^m | f_j(x)=0 \ \textrm{for all} \  j \}.
\end{equation}
Also, we call a subset $W \subseteq \mathbb{R}^m$ a \emph{semi-algebraic set}, if for sets of polynomial maps $\{ f_j \}_{j}$ and $\{ g_k \}_{k}$, 
\begin{equation}
W= \{ x \in \mathbb{R}^m | f_j(x)=0, \  g_k(x) \leq 0  \  \textrm{for all} \  j \ \textrm{and} \  k\}. 
\end{equation}
The following observation is an immediate consequence of the \emph{Tarski-Seidenberg principle}, which states that for a polynomial map $F$ and a semi-algebraic set $W$, $F(W)$ is again a semi-algebraic set.

\begin{obs}[The set of output states is
semi-algebraic]
    $\mathcal{S}^M$ is a semi-algebraic set.
\end{obs}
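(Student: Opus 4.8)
The plan is to establish that $\mathcal{S}^M$, the image of the contraction map $F^M$, is a semi-algebraic subset of $\mathbb{C}^{2^n} \cong \mathbb{R}^{2 \cdot 2^n}$. First I would note that the domain $[\mathrm{SU}(4)]^{\times R}$ is itself a real algebraic set: each copy of $\mathrm{SU}(4)$ is cut out inside $\mathbb{C}^{4\times 4} \cong \mathbb{R}^{32}$ by the polynomial equations expressing unitarity ($U^\dagger U = I$, which becomes a finite system of real polynomial equations in the real and imaginary parts of the matrix entries) together with the determinant condition ($\det U = 1$, again a polynomial equation over the reals once complex entries are split into real and imaginary parts). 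A finite product of real algebraic sets is a real algebraic set (their defining polynomial systems simply combine), so the domain is algebraic, hence in particular semi-algebraic.

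Next I would observe that the map $F^M$ is a polynomial map when everything is expressed in real coordinates. Indeed, $V^M(t)\ket{0^n}$ is obtained by multiplying together the fixed Kraus operators $M_i(\tau)$ (which are constant matrices, their entries being $0$, $1$, $\sqrt{p}$, or $\sqrt{1-p}$) and the variable unitaries $U_j$, and then applying the result to the fixed vector $\ket{0^n}$. Matrix multiplication and application to a vector are bilinear operations, so each component of the output vector is a polynomial (with real coefficients, after splitting into real and imaginary parts) in the real coordinates of the $U_j$'s. Hence $F^M$ is a polynomial map from a semi-algebraic set.

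The conclusion then follows immediately from the Tarski--Seidenberg principle as stated in the excerpt: the image of a semi-algebraic set under a polynomial map is semi-algebraic. Applying this with the domain $[\mathrm{SU}(4)]^{\times R}$ and the polynomial map $F^M$ yields that $\mathcal{S}^M = F^M([\mathrm{SU}(4)]^{\times R})$ is semi-algebraic. (One should also check that $\mathcal{S}^M$ lands inside the unit ball $B_1^{2\times 2^n}$, which is automatic since $\lVert V^M(t)\ket{0^n}\rVert \le 1$ because each factor in $V^M(t)$ is a contraction; but this containment is not needed for the semi-algebraicity claim itself.)

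I do not anticipate a genuine obstacle here --- the statement is essentially a direct bookkeeping application of Tarski--Seidenberg, and the only mild subtlety is the purely notational one of passing from complex to real coordinates so that ``polynomial'' and ``algebraic'' are being used in the real sense required by the Tarski--Seidenberg principle. The one point that deserves a sentence of care is verifying that $\mathrm{SU}(4)$, presented via the complex conditions $U^\dagger U = I$ and $\det U = 1$, really is a \emph{real} algebraic set; this is standard but worth spelling out, since $\det U = 1$ over $\mathbb{C}$ is a single complex equation that becomes two real equations (real and imaginary parts), and likewise for the entries of $U^\dagger U - I$.
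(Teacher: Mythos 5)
Your proposal is correct and follows exactly the same route as the paper: observe that $[\mathrm{SU}(4)]^{\times R}$ is a real algebraic set, that $F^M$ is a polynomial map in the (real) matrix entries, and conclude via the Tarski--Seidenberg principle. The extra care you take in splitting complex equations into real and imaginary parts, and the side remark on the unit-ball containment, are fine elaborations of the same argument rather than a different approach.
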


\begin{proof}
    A set $[ \textrm{SU}(4)]^{ \times R}$  is an algebraic set, because it is the set of operators whose matrix elements satisfy polynomial equations equivalent to $U^{\dagger}U=I$ and ${\rm det}\, U=1$. 
    Besides, the contraction map $F^M$ is a polynomial map, that is, the map to output states is a polynomial function of matrix elements of $\{ U_j \}_{j=1}^R$.
    Therefore, by the Tarski-Seidenberg principle, we arrive at the stated observation.
\end{proof}

In a next step, we introduce a notion of a dimension for a semi-algebraic set. It originates from the fact that all semi-algebraic sets can be decomposed into a set 
of smooth manifolds.

\begin{fct}[Semi-algebraic sets and smooth manifolds] \label{fact:smooth_mfd}
    For a semi-algebraic set $W$, there exist a set of smooth manifolds  $\{ N_j  \}_j$  such that $W= \bigcup_j N_j$.
    Moreover, $\max_j \{ \dim (N_j) \}$ does not depend  on the decomposition of $W$.
\end{fct}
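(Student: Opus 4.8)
The statement to prove is Fact~\ref{fact:smooth_mfd}: every semi-algebraic set $W \subseteq \mathbb{R}^m$ decomposes as a finite union $W = \bigcup_j N_j$ of smooth submanifolds, and the quantity $\max_j \dim(N_j)$ is independent of the chosen decomposition. Since this is a classical result, the plan is to assemble it from standard tools in real algebraic geometry rather than to reprove everything from scratch; I would cite a reference such as Bochnak–Coste–Roy, \emph{Real Algebraic Geometry}, for the underlying stratification theorems and present the argument in a self-contained way.

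The plan for existence of the decomposition is to invoke the \emph{cylindrical algebraic decomposition} (CAD) of $\mathbb{R}^m$ adapted to the finite family of polynomials $\{f_j\}$, $\{g_k\}$ that define $W$. A CAD partitions $\mathbb{R}^m$ into finitely many \emph{cells}, each of which is semi-algebraically homeomorphic to an open cube $(0,1)^d$ for some $d$, and is adapted to the defining polynomials in the sense that each polynomial has constant sign on each cell. Consequently $W$ is exactly a finite union of such cells; each cell, being semi-algebraically diffeomorphic (after a $C^\infty$-refinement of the CAD, which exists) to $(0,1)^d$, is a smooth submanifold $N_j$ of $\mathbb{R}^m$ of dimension $d$. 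This gives the first claim. Alternatively, one can first triangulate $W$ semi-algebraically and take open simplices as the $N_j$; either route works.

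For the well-definedness of $\dim(W) := \max_j \dim(N_j)$, the key point is to characterize this number intrinsically, without reference to any decomposition. The standard characterization is that $\dim(W)$ equals the Krull dimension of the ring $\mathbb{R}[x_1,\dots,x_m]/\mathcal{I}(\overline{W}^{\mathrm{Zar}})$, i.e.\ the dimension of the Zariski closure of $W$ as an algebraic set; equivalently, it is the largest $d$ such that some coordinate-subspace projection $\pi\colon \mathbb{R}^m \to \mathbb{R}^d$ maps $W$ onto a set with nonempty interior. One then shows that for any decomposition $W = \bigcup_j N_j$ into smooth manifolds, (i) each $N_j$, being a $d_j$-dimensional smooth manifold contained in $W$, forces $\dim(W) \geq d_j$ because a smooth $d_j$-manifold in $\mathbb{R}^m$ has a coordinate projection that is a local diffeomorphism somewhere, hence has nonempty interior in $\mathbb{R}^{d_j}$; and (ii) conversely, by a Baire-category / countability argument applied to the finite union, if $\dim(W) = d$ then some $N_j$ must have $\dim(N_j) \geq d$, since a finite union of manifolds of dimension $< d$ cannot surject onto a $d$-dimensional set with interior (this uses Sard's theorem or simply that the image of a $C^1$ map from a manifold of dimension $<d$ into $\mathbb{R}^d$ has measure zero). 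Combining (i) and (ii), $\max_j \dim(N_j) = \dim(W)$ for every valid decomposition.

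The main obstacle is purely expository: the full strength of cylindrical algebraic decomposition (or semi-algebraic triangulation) and the equivalence of the various characterizations of semi-algebraic dimension are nontrivial theorems whose proofs are lengthy, so the honest approach is to quote them precisely from the literature and only spell out the short arguments (i) and (ii) that glue them into the stated invariance claim. If one wanted to avoid citing CAD, the fallback is to prove existence of the decomposition by induction on $m$ using the semi-algebraic structure theorem for subsets of $\mathbb{R}$ (finite unions of points and open intervals) together with a fibrewise argument, but this essentially reconstructs CAD and is not worth doing in an appendix whose purpose is to set up the accessible-dimension bound.
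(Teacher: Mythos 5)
Your proposal is mathematically sound, but note that the paper does not prove Fact~\ref{fact:smooth_mfd} at all: it is imported verbatim as a known result of real algebraic geometry, with the appendix explicitly deferring to the corresponding statements and references in Ref.~\cite{haferkamp2022linear}. So there is nothing in the paper to compare your argument against step by step; what you have written is a correct outline of the standard proof that the paper chooses to omit. Your two ingredients are the right ones: cylindrical algebraic decomposition (or semi-algebraic cell decomposition/stratification) gives the existence of a finite decomposition into smooth (indeed Nash) submanifolds, and the invariance of $\max_j \dim(N_j)$ follows by identifying it with an intrinsic notion of dimension --- the dimension of the Zariski closure, or equivalently the largest $d$ for which some coordinate projection of $W$ has nonempty interior in $\mathbb{R}^d$ --- via your arguments (i) (a $d_j$-dimensional submanifold forces a projection with interior, so $\dim(W)\geq d_j$) and (ii) (a finite, or countable, union of images of smooth maps from manifolds of dimension $<d$ has Lebesgue measure zero in $\mathbb{R}^d$, so some piece must attain dimension $d$). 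The only point worth making explicit is the class of decompositions over which invariance is claimed: your measure-zero argument in (ii) needs the index set to be at most countable and the $N_j$ to be second-countable manifolds, which is the intended reading (the decompositions produced by CAD are finite). Given that the downstream use in the paper is only to define the dimension of $\mathcal{S}^M$ and compare it against the dimension of the set of short-circuit outputs, citing the stratification and dimension theory from a standard reference, as you suggest, is the appropriate level of detail; your gluing arguments (i) and (ii) are exactly the short pieces one would spell out.
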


\begin{dfn}[Dimension of semi-algebraic sets] \label{dfn:dimension_semi-algebraicsets}
    For a semi-algebraic set $W$, with decomposition into smooth manifolds $W=\bigcup_j N_j$, the dimension of $W$ is defined as $d:=\max_j \{ \dim (N_j) \}$.
\end{dfn}
Using the same argument as in Lemma 1 in Ref.~\cite{haferkamp2022linear}, one can show that the above dimension of $\mathcal{S}^M$ is equal to the accessible dimension laid out in Definition \ref{dfn:AD}.

\begin{lem}[Equivalence of two definitions of dimension] \label{lem:low_rank_locus}
    Let $\dim \mathcal{S}^M$ be the dimension of the semi-algebraic set $\mathcal{S}^M$, defined by Definition \ref{dfn:dimension_semi-algebraicsets}.
    Then $\dim \mathcal{S}^M$ is equal to the accessible dimension $d_M$, defined by Definition \ref{dfn:AD}.
\end{lem}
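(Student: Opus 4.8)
The plan is to transcribe the argument of Lemma~1 of Ref.~\cite{haferkamp2022linear} to the present non-unitary setting: I will first identify the accessible dimension $d_M$ of Definition~\ref{dfn:AD} with the maximal rank of the differential of $F^M$, and then read off $\dim\mathcal{S}^M$ from the constant-rank theorem together with the semi-algebraic structure of $\mathcal{S}^M$ established above.

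First I would compute the differential of $F^M$. The domain $[\textrm{SU}(4)]^{\times R}$ is a compact smooth (indeed real-algebraic) manifold, and $F^M$ is a polynomial, hence smooth, map into $\mathbb{C}^{2^n}\cong\mathbb{R}^{2\cdot 2^n}$. Because $V^M(t)\ket{0^n}$ is multilinear in $U_1,\dots,U_R$ and each gate enters complex-linearly, differentiating along $s\mapsto e^{s\,i(\alpha\otimes\beta)}U_j$ at $s=0$ yields the tangent vector $i\,F^M(U_1,\dots,(\alpha\otimes\beta)U_j,\dots,U_R)$. Since the $15$ operators $i(\alpha\otimes\beta)$ with $(\alpha,\beta)\neq(I,I)$ form a real basis of the Lie algebra of $\textrm{SU}(4)$, the image of $dF^M$ at $(U_1,\dots,U_R)$ is exactly $i$ times the real span of the vectors appearing in Eq.~\eqref{eq:vectors}. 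As multiplication by $i$ is an $\mathbb{R}$-linear automorphism of $\mathbb{C}^{2^n}$, it preserves real dimension, so $\rank_{U_1,\dots,U_R}(F^M)$ equals the real dimension of the image of $dF^M$ at the point $(U_1,\dots,U_R)$; taking the maximum over the domain shows that $d_M$ equals the maximal rank of the differential of $F^M$.

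Then I would prove the two inequalities between $d_M$ and $\dim\mathcal{S}^M$. For $\dim\mathcal{S}^M\geq d_M$: the rank of $dF^M$ is lower semi-continuous, so the locus where it attains its maximum $d_M$ is open and nonempty; there $F^M$ has locally constant rank $d_M$, and the constant-rank theorem provides an open subset of the domain whose image under $F^M$ is an embedded $d_M$-dimensional submanifold of $\mathbb{C}^{2^n}$. Any semi-algebraic set containing a $d_M$-dimensional manifold has dimension at least $d_M$ by Fact~\ref{fact:smooth_mfd} and Definition~\ref{dfn:dimension_semi-algebraicsets}, hence $\dim\mathcal{S}^M\geq d_M$. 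For $\dim\mathcal{S}^M\leq d_M$: stratify the domain into finitely many semi-algebraic smooth manifolds on each of which $\rank(dF^M)$ is constant, equal to some $r\leq d_M$; by the constant-rank theorem the image of each stratum is a locally finite union of manifolds of dimension $\leq r\leq d_M$, and by Tarski--Seidenberg each such image is semi-algebraic, so $\mathcal{S}^M$ is a finite union of semi-algebraic pieces of dimension at most $d_M$, giving $\dim\mathcal{S}^M\leq d_M$. Combining the two inequalities proves the claim.

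The step I expect to be the main obstacle is the upper bound: unlike in Ref.~\cite{haferkamp2022linear}, where left-translation invariance makes the corresponding map of globally constant rank, here $F^M$ is not valued in a group and $V^M(t)$ is non-unitary, so one genuinely needs the semi-algebraic rank stratification (with Sard/constant-rank arguments applied stratum by stratum) rather than a single application of the constant-rank theorem. The only other point requiring care is the harmless factor of $i$ above, i.e.\ checking that using the real span of the vectors in Eq.~\eqref{eq:vectors} rather than of the actual tangent vectors does not affect the dimension; everything else is a routine transcription of the corresponding statements in Ref.~\cite{haferkamp2022linear}.
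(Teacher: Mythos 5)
Your proposal is correct and follows essentially the same route as the paper, which gives no proof of this lemma beyond deferring to Lemma~1 of Ref.~\cite{haferkamp2022linear}; your reconstruction of that argument (identifying $d_M$ with the maximal rank of $dF^M$ up to the harmless factor of $i$, then the constant-rank theorem for the lower bound and the semi-algebraic rank stratification for the upper bound) is exactly the intended transcription.
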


Then, we prove Lemma \ref{lem:complexity_by_dimension}.

\begin{lem}[Restatement of Lemma \ref{lem:complexity_by_dimension}] \label{lem:complexity_by_dimension2}
    If $d_M \geq k$  for an integer $k$, then 
    \begin{equation}
    C(\ket{\psi}) \geq  \frac{1}{11} \left( k -3n -2\right), 
    \end{equation}
    $\ket{\psi} \in \mathcal{S}^A$, with unit probability, that is, for almost all unitary gates.
\end{lem}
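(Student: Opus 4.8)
The plan is to follow the counting/dimension argument of Ref.~\cite{haferkamp2022linear} adapted to the present setting where the circuit contains post-selected measurements. First I would fix the goal: show that if a normalized state $\ket\psi$ drawn from the output distribution of the monitored circuit has $C(\ket\psi) \le k'$ — or rather $C_m(\ket\psi) \le k'$, the stronger assumption — then it must lie in a semi-algebraic set of dimension at most $11 k' + 3n + 2$; contrapositively, if $d_M \ge k$ then $C_m(\ket\psi) \ge (k - 3n - 2)/11$ for almost every choice of Haar-random gates. The first step is therefore to parametrize the set of all states preparable by a post-selected circuit with at most $k'$ two-qubit gates. A post-selected circuit with $k'$ gates on $n$ qubits is specified by: the discrete data (which pairs of qubits each gate acts on, where the post-selected single-qubit measurements sit, of which there are at most $O(n + k')$ relevant ones since only measurements that act nontrivially matter), plus the continuous data (the $k'$ elements of $SU(4)$, contributing $15 k'$ real parameters, together with the overall phase/normalization which costs at most a couple of extra real dimensions). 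The key point, exactly as in \cite{haferkamp2022linear}, is that the number of \emph{distinct discrete circuit layouts} is finite — bounded by something like $\binom{n}{2}^{k'} \cdot 2^{O(n+k')}$ — so the set of all states with $C_m \le k'$ is a finite union of images of polynomial maps from parameter spaces of real dimension at most $15 k' + cn + c'$ for small constants.

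Next I would invoke the semi-algebraic machinery already set up in this appendix: each such image is a semi-algebraic set (Tarski–Seidenberg, applied to the map sending gate parameters plus post-selection projections to the normalized output state vector, which is a rational — hence piecewise-polynomial after clearing denominators — map on the locus where the post-selection probabilities are nonzero), and a polynomial or rational image cannot have dimension exceeding that of its domain (Fact~\ref{fact:smooth_mfd} and the behaviour of dimension under semi-algebraic maps). A finite union does not increase the dimension. Hence $\{\ket\psi : C_m(\ket\psi) \le k'\}$ is semi-algebraic of dimension at most $15 k' + 3n + 2$ (tracking the constants to match the $11$ and $3n+2$ in the statement — the factor $11$ rather than $15$ comes, as in \cite{haferkamp2022linear}, from the refined count in which the first and last gates on each wire carry fewer free parameters, or from a more careful accounting of which Pauli directions in Eq.~\eqref{eq:vectors} actually contribute). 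The final step is the measure-theoretic one: by Lemma~\ref{lem:low_rank_locus}, $d_M = \dim \mathcal S^M$, and the locus in $[SU(4)]^{\times R}$ on which the rank of $F^M$ is strictly less than $d_M$ is a proper semi-algebraic subset, hence has Haar measure zero; so for almost every choice of gates the output $\ket\psi$ lies in a $d_M$-dimensional stratum of $\mathcal S^M$. If in addition $d_M > 15 k' + 3n + 2$, this stratum cannot be contained in the $(\le 15 k' + 3n+2)$-dimensional set of low-complexity states, and in fact almost every point of the stratum avoids it (a lower-dimensional semi-algebraic subset is again measure zero inside the stratum). Rearranging $d_M \ge k$ gives the claimed bound with unit probability.

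The main obstacle I anticipate is the treatment of the post-selected measurements in the complexity model $C_m$, which is what makes this go beyond a verbatim copy of \cite{haferkamp2022linear}. Two subtleties need care: (i)~the output of a post-selected circuit is a \emph{rational} function of the gate entries (there is a normalization by the post-selection success amplitude $\lVert \bra 0_i \cdots \rVert$), not a polynomial one, so one must restrict to the open semi-algebraic locus where all these amplitudes are nonzero and either clear denominators or appeal to the fact that rational maps are semi-algebraic and still do not raise dimension; and (ii)~one must make sure the \emph{discrete} data — in particular the number and placement of post-selected single-qubit measurements — is genuinely finite once one fixes the gate count $k'$, which is true because measurements that act on a qubit in the $\ket0$ direction when it is already $\ket0$ are trivial and inserting more than $O(n + k')$ nontrivial ones cannot help. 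Everything else — the dimension bound under semi-algebraic maps, the finite union, the measure-zero exceptional locus — is standard once the excerpt's Fact~\ref{fact:smooth_mfd}, Definition~\ref{dfn:dimension_semi-algebraicsets}, and Lemma~\ref{lem:low_rank_locus} are in hand.
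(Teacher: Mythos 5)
Your proposal follows essentially the same route as the paper: bound the dimension of the semi-algebraic set of states preparable by a post-selected circuit with $R'$ gates by parameter counting, note that allowing an arbitrary complex scalar adds two real dimensions, and then argue that if $d_M$ exceeds this bound the low-complexity locus is a measure-zero subset of $\mathcal{S}^M$, hence its preimage in $SU(4)^{\times R}$ is Haar-null (and remains null under the Born-weighted measure, since the Born factor is bounded by one). The structure and the measure-theoretic endgame are right, and your worry about the normalization making the map rational is moot in the paper's formulation: one works throughout with unnormalized output vectors and absorbs normalization and phase into a free complex scalar $c$, which is exactly where the ``$+2$'' comes from.

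The one genuine gap is the constant. Your count gives $15R'$ from the naive parametrization of $SU(4)^{\times R'}$, and you only gesture at how $15$ becomes $11$. The mechanism is not about first and last gates on each wire, nor about which Pauli directions contribute in isolation; it is a redundancy between perturbations of \emph{consecutive gates sharing a qubit leg}: if $U_{i,i+1}$ acts after $U_{i-1,i}$ with no measurement on qubit $i$ in between, the perturbation $(I\otimes\alpha)U_{i-1,i}$ produces the same output vector as $U_{i,i+1}(\alpha\otimes I)$, so $3\times 2=6$ of the $15$ parameters per bulk gate are redundant (the first $n/2$ gates have no predecessor, which is the source of the additive $3n$). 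Crucially — and this is the part specific to the monitored setting — when a projector sits on the shared leg only $2\times 2=4$ parameters are redundant (the $Z$ perturbations still cancel, and the $X$ and $Y$ perturbations become linearly dependent rather than cancelling), which yields
$\dim(\mathcal{S'}) \le 15R' - 3(2R'-m) - 2m + 3n = 9R' + m + 3n \le 11R' + 3n$
using $m \le 2R'$ for the number $m$ of measurements. Without this step your argument proves the lemma with $1/15$ in place of $1/11$ — qualitatively the same conclusion, but not the stated bound.
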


\begin{proof}
The proof goes similarly to that of Theorem 1 in Ref.~\cite{haferkamp2022linear}, and we refer to that reference for further details. The only difference with the argument presented there is that the shorter circuit in Ref.~\cite{haferkamp2022linear} becomes a post-selected quantum circuit and the state vectors in $\mathcal{S}^M$ are not normalized in general. The latter means that unitary gates are realized with the  probability specified by the Born rule $\expval{V^M(t)^{\dagger}V^M(t)}{0^n} \prod_{i=1}^R d\mu_{\textrm{Haar}}(U_i)$, where $d\mu_{\textrm{Haar}}$ is the Haar measure on SU($4$).
The strategy is to show that for $\mathcal{S}^M$  with $d_M \geq k$, the set of states in $\mathcal{S}^M$ generated by a unitary circuit with  $R'$ two-qubit gates, which is less than $(k- n -2)/13$, is measure zero.
We explain it  in more detail below.

{
Let $\mathcal{S'}$ be the set of the all unnormalized output state vectors of a short post-selected quantum circuit consisting of $R'$ two-qubit unitary gates with an arbitrary architecture and measurement configuration.
Then, $\mathcal{S'}$ is also a semi-algebraic set.
Recall that the accessible dimension of $S'$, $d'$, is the number of linearly independent vectors of
\begin{align}
\{ F^M (U_1, \dots,  (\alpha \otimes \beta) U_j, \dots , U_R) \}_{j, \alpha, \beta}, \label{eq:vectors2}
\end{align}
where $j \in \{1, 2, \dots , R' \}$ and $\alpha$, $\beta$ $\in \{ I, X, Y, Z  \}$ are Pauli operators
such that $(\alpha, \beta) \neq (I, I)$.
The number of state vectors in Eq.~(\ref{eq:vectors2}) is at most $15R'$, and we find that $\dim (\mathcal{S'}) \leq 15R'$.
We can improve the upper-bound to $11R'+3n$, by considering the contraction of two-qubit gates $U_{i-1,i}$, acting on qubits $i-1$ and $i$, followed by $U_{i, i+1}$ that shares a bond of the gates, that is $i$-th qubit. 
Indeed, if there is no measurement on qubit $i$ just after $U_{i-1,i}$, the state vector in Eq.~(\ref{eq:vectors2}) generated by contracting the perturbed $U_{i-1,i}$, $(I \otimes \alpha) U_{i-1,i}$, and other two-qubit gates is equal to the vector generated by contracting $U_{i,i+1} (\alpha \otimes I) $ and others for any non-identity Pauli operator $\alpha$. 
It means that $3\times 2$ parameters are redundant for each two-qubit gate in a circuit's bulk. 
For the first $\frac{n}{2}$ gates, the parameters are not \je{cancelled}, 
and so are not $3n$ parameters.
Similarly if there is a projector on qubit $i$, the perturbations of $(I \otimes Z) U_{i-1,i}$ and $U_{i,i+1} (Z \otimes I)$ result in the same vector, and also the perturbations of $(I \otimes X) U_{i-1,i}$ and $(I \otimes Y) U_{i-1,i}$ result in vectors linearly dependent with each other.
It means that $2 \times 2$ parameters are redundant in this case. Therefore, we obtain
\begin{align}
    \dim (\mathcal{S'}) &\leq 15R' -3 (2R' -m)-2m +3n \\
    &=9R'+m+3n 
    \nonumber  \\
    &\leq 11R'+3n,\nonumber  
\end{align}
where $m$ is the number of measurements in the post-selected circuit, and we used $0 \leq m \leq 2R'$ in the last inequality.
}

{A quantum state vector $\ket{\psi} \in \mathcal{S}^M$ is generated by a short post-selected quantum circuit if there exists  a $\ket{\phi} \in  \mathcal{S'}$  such that $\ket{\psi}= c \ket{\phi}$ for some $c \in \mathbb{C}$.
We show  that the set of such state vectors 
\begin{equation}
\{ \ket{\psi} \in \mathcal{S}^M | \ket{\psi}= c \ket{\phi}, \  \textrm{for}  \  \ket{\phi} \in  \mathcal{S'}, c \in \mathbb{C} \} 
\end{equation}
is of measure zero in $\mathcal{S}^M$, and so its preimage by $F^A$ in $SU(4)^R$ is.
By Fact~\ref{fact:smooth_mfd}, the set of the elements of $\mathcal{S'}$ multiplied by arbitrary complex numbers, can be decomposed into smooth manifolds. Then, the maximal dimension of them is upper bounded  by $11R'+3n+2$, because complex coefficients add at most two real parameters.
Then, if $R' <  ( k -3n -2)/11$, $d_M$ is greater the dimension of the maximal manifold.
Therefore, the intersection  $\{ \ket{\psi} \in \mathcal{S}^M | \ket{\psi}= c \ket{\phi}, \  \textrm{for}  \  \ket{\phi} \in  \mathcal{S'}, c \in \mathbb{C} \}$ has Haar measure zero, because the manifolds in $\mathcal{S'}$ multiplied by arbitrary complex numbers have smaller dimensions than the maximal dimension of that in $S^M$.
This implies that the set of unitary gates in $SU(4)^{R'}$, with $R' <  ( k -3n -2)/11$, that generate states in the intersection is also Haar measure zero \cite{haferkamp2022linear}.
Because $\expval{V^M(t)^{\dagger}V^M(t)}{0^n}$ is upper-bounded by finite value, that is $1$,
it is still measure zero for the product measure of the Haar measure and the Born probability, that is $\expval{V^M(t)^{\dagger}V^M(t)}{0^n}\prod_{i=1}^R d\mu_{\textrm{Haar}}(U_i)$.
Therefore, the output states of the monitored circuit with the dimension $k$ cannot be generated by shorter quantum circuits consisting of fewer than $ ( k -3n -2)/11$ gates with unit probability, which implies the desired lower-bound of the state complexity.}
\end{proof}



Finally, we prove Lemma~\ref{lem:lowerbound_d0}.
We have considered a lower-bound for the accessible dimension of unitary circuits $d_0(t)$, with the brick-wall architecture and with depth $t$, consisting of the following random unitary gates:
\begin{align} \label{eq:embedded_unitary_gate2}
  U= \left( u_1 \otimes v_1 \right) W\left( u_2 \otimes v_2 \right),
\end{align}
where $u_{1,2}$, $v_{1,2}$ are Haar-random single-qubit gates and $W$ is chosen from \{$I$, CNOT\} uniformly randomly.

\begin{lem}[Restatement of Lemma \ref{lem:lowerbound_d0}] 
\label{lem:lowerbound_d02}
Let $t \geq 0$ be an integer. Then, $d_0$ grows linearly in depth $t$ as
\begin{align}
d_0(t) \geq \left\lfloor \frac{2t}{3n} \right\rfloor,
\end{align}
 until it saturates in a depth exponential in $n$.
\end{lem}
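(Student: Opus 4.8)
The plan is to carry the accessible-dimension argument of Ref.~\cite{haferkamp2022linear} over to the restricted gate family of Eq.~(\ref{eq:embedded_unitary_gate2}), in which only the single-qubit factors $u_i,v_i$ carry continuous parameters while the two-qubit part $W$ ranges over the discrete set $\{I,\mathrm{CNOT}\}$. Since the dimension of a finite union of semi-algebraic sets is the maximum of the dimensions of its members, it suffices to lower bound $d_0(t)$ for one convenient choice of all the $W$'s; I would fix all of them to $\mathrm{CNOT}$, so that the task becomes: lower bound the dimension of the image of the polynomial map $(u_i,v_i)_i \mapsto (\text{output state vector})$ for the fixed all-CNOT brick-wall layout. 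Equivalently, this is the generic rank of the single-qubit Pauli-insertion vectors of Eq.~(\ref{eq:vectors}), which is exactly the object to which the machinery of Ref.~\cite{haferkamp2022linear} applies.

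The first ingredient is monotonicity: $d_0(t)\le d_0(t+1)$. This holds because the depth-$t$ circuit is the depth-$(t+1)$ circuit with its last layer set to the identity, and the identity is attainable inside the family ($u_i=v_i=I$, $W=I$), so $\mathcal{S}^0_t\subseteq\mathcal{S}^0_{t+1}$. Hence $d_0(t)$ is non-decreasing and, being bounded by $2^{n+1}$ (the real dimension of $\mathbb{C}^{2^n}$), converges to a limit $d_\infty=e^{O(n)}$; since $\mathrm{CNOT}$ together with arbitrary single-qubit gates is a universal gate set~\cite{Elementary_gate} and the brick-wall connectivity is connected, at sufficiently large depth $\mathcal{S}^0_t$ fills the whole state space, so in fact $d_\infty=e^{\Theta(n)}$.

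The heart of the proof is a \emph{no long plateau} statement in the spirit of Ref.~\cite{haferkamp2022linear}: whenever $d_0(t)<d_\infty$, one has $d_0(t+3n/2)\ge d_0(t)+1$ (here $n$ is even, so $3n/2$ is an integer number of layers). To prove it I would work at a generic point with the space $\mathcal{A}_t$ of derivative vectors spanning $T\mathcal{S}^0_t$. Writing the last $\Delta=3n/2$ layers as an invertible operator $L$, one has $\mathcal{A}_{t+\Delta}=L\,\mathcal{A}_t+\mathcal{B}_\Delta$, where $\mathcal{B}_\Delta$ is spanned by the single-qubit Pauli insertions made in those last $\Delta$ layers; if $\dim\mathcal{A}_{t+\Delta}=\dim\mathcal{A}_t$, then since $L$ is invertible and $L\,\mathcal{A}_t\subseteq\mathcal{A}_{t+\Delta}$ we get $\mathcal{B}_\Delta\subseteq L\,\mathcal{A}_t$. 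Using the period-$2$ time-translation structure of the brick-wall together with the fact that the forward light cone of any local perturbation covers all $n$ qubits within $3n/2$ layers, this inclusion can be propagated to all subsequent layers, forcing $\dim\mathcal{A}_{t'}$ to be constant for every $t'\ge t$, i.e. $d_0$ has already saturated. The window $3n/2$ is precisely what makes the light-cone bookkeeping close.

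Finally, combining the three ingredients: before saturation $d_0$ gains at least one unit in every block of $3n/2$ layers, so $d_0(t)\ge\lfloor 2t/(3n)\rfloor$; and since $d_0$ is capped at $d_\infty=e^{\Theta(n)}$, saturation is reached at a depth $t_{\mathrm{sat}}$ with $(3n/2)(d_\infty-1)\le t_{\mathrm{sat}}\le(3n/2)\,d_\infty$, hence $t_{\mathrm{sat}}=e^{\Theta(n)}$, which is the claimed behaviour. I expect the main obstacle to be the no-plateau step: one has to check carefully that the push-forward by a block of layers is generically dimension-preserving for this discrete-$W$ family (this is why the $W$'s are fixed to $\mathrm{CNOT}$) and that the light-cone propagation argument of Ref.~\cite{haferkamp2022linear} still goes through when only single-qubit perturbations, rather than full $SU(4)$ perturbations, are allowed at each gate; the monotonicity and counting steps are routine.
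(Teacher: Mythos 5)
Your route is genuinely different from the paper's, and its central step is not established. The paper proves the bound constructively: it exhibits one explicit point of the gate family at which the rank of the differential is already $\lfloor 2t/(3n)\rfloor$. Concretely, it builds a Clifford circuit out of blocks $C_j$ of depth $n/2$ such that the vectors $C_1^\dagger\cdots C_j^\dagger (Z\otimes I) C_j\cdots C_1\ket{0^n}$ are linearly independent for $j=1,\dots,\lfloor 2t/n\rfloor$ (a depth-$n/2$ Clifford circuit suffices to conjugate $Z\otimes I$ into an arbitrary Pauli string), and then decomposes each two-qubit Clifford gate into at most three CNOTs dressed with single-qubit gates, which is where the factor $3n/2$ comes from. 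Since $d_0$ is a maximum over gate choices, this single point gives the lower bound directly, with no need for monotonicity, a saturation value, or a no-plateau lemma.

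The gap in your argument is the propagation inside the ``no long plateau'' step. From $\dim\mathcal{A}_{t+\Delta}=\dim\mathcal{A}_t$ you correctly deduce $\mathcal{B}_\Delta\subseteq L\,\mathcal{A}_t$ at the chosen generic point, but this is a statement about the particular input pair $(\ket{\psi_t},\mathcal{A}_t)$; to iterate it you would need the pushed-forward pair $(L\ket{\psi_t},L\,\mathcal{A}_t)$ to again be generic for the property ``a generic block on top adds no dimension,'' and no argument is given for that. An appeal to light cones does not supply it: light-cone spreading in a brick-wall circuit covers $n$ qubits in roughly $n$ layers, and in any case says nothing about whether the new Pauli-insertion vectors leave the subspace $L\,\mathcal{A}_t$. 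Relatedly, your window $3n/2$ is not derived --- it coincides with the paper's constant only because that constant has an entirely different origin (Clifford conjugation depth $n/2$ times three CNOTs per Clifford gate). This is precisely the difficulty that led Refs.~\cite{haferkamp2022linear,li2022short} to argue via explicit Clifford points (or, for unitaries, via the group structure of the reachable set, which is unavailable for state images of a restricted brick-wall family). Your monotonicity and saturation observations are fine, but without a proof of the no-plateau lemma the linear lower bound does not follow.
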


\begin{proof}
Recall that $d_0$ is the maximum dimension of the following vector space over unitary gates  $\{ U_1, U_2, \dots, U_R \}$ in the form of Eq.~(\ref{eq:embedded_unitary_gate2}),
\begin{align}
\{ U_R \dots (\alpha \otimes \beta) U_j \dots U_1 \ket{0^n} ) \}_{j, \alpha, \beta}, \label{eq:vectors_d0}
\end{align}
where $(\alpha \otimes \beta)$ is a single-qubit perturbation: $(\alpha, \beta)=(I, \sigma), (\sigma, I)$ for $\sigma \in \{X, Y, Z \}$.
In Ref.~\cite{haferkamp2022linear}, a unitary circuit consisting of Clifford gates is constructed in which $d_0$ grows linearly in depth. The strategy there is to construct a Clifford circuit inductively such that a linear number in $t$ of vectors
\begin{align}
     P_{\alpha, \beta, j} \ket{0^n} \in \{i^{\kappa}\ket{x} \}_{x \in \{0,1 \} ^n, \kappa \in \{0,1\} },
\end{align}
where $P_{\alpha, \beta, j} = U_1^{\dagger} \dots U_j^{\dagger} (\alpha \otimes \beta) U_j \dots U_1$, are linearly independent.
We define $C_j$ as a depth-$n/2$ Clifford circuit with arbitrary Clifford two-qubit gates.
In particular, there is a Clifford circuit such that the vectors
\begin{align}
     \{ C_1^{\dagger} \dots C_j^{\dagger} (Z \otimes I) C_j \dots C_1 \ket{0^n} \}_{j=1}^{T},
\end{align}
where $T= \lfloor \frac{2t}{n} \rfloor$, are linearly independent because of the observation that a depth-$\frac{n}{2}$ Clifford circuit is enough to turn $Z \otimes I$ into an arbitrary Pauli string by conjugating it \cite{haferkamp2022linear}.
Moreover, each two-qubit Clifford gate can be decomposed into at most three CNOT gates with single-qubit gates \cite{PhysRevA.69.010301}.
Therefore, every $\frac{3n}{2}$ time step can increase the accessible dimension at least by one, and we obtain
\begin{align}
    d_0(t) \geq \left\lfloor \frac{2t}{3n} \right\rfloor.
\end{align}
\end{proof}
The dimension $d_0$ is upper-bounded by $2 \times 2^n -1$, which is the number of real parameters in normalized quantum states, and it grows linearly until it saturates at the maximum value exponentially in $n$.

\section{Measurements cannot increase the accessible dimension} 
\label{app: measurement and dimension}


In this section, we prove that the accessible dimension of a monitored random circuit 
cannot increase by adding a projection operator. Let $M$ be a measurement configuration. 
We now construct a new measurement configuration $M'$ by changing 
an element $M_i(\tau)$  
such  that $M_i(\tau)=\sqrt{1-p}I$ into $M_i'(\tau)=\sqrt{p} \ketbra{0}{0}$ or $M_i'(\tau)=\sqrt{p} \ketbra{1}{1}$ and keeping the other elements. We denote by $|M|$  the number of 
projections in $M$, and hereafter we rename the set of projectors as $\{M_i\}_{i=1}^{|M|}$. 
We call such $M'_i$ the additional measurement. 
Then the following statement  holds.

\begin{lem}[Rank bound]\label{lem:measurement cannot increase dimension2}
For $\rank(F^{M'})$ on an arbitrary point $x' \in SU(4)^{ \times R}$, there exists a point $x \in SU(4)^{ \times R}$, on which $\rank(F^M)$ satisfies the  inequality: 
\begin{align}
    \rank(F^{M'}) \leq  \rank(F^M).
\end{align}
\end{lem}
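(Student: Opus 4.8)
\textbf{Proof proposal for Lemma~\ref{lem:measurement cannot increase dimension2}.}

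The plan is to compare the rank of $F^{M'}$ at an arbitrary point $x' = (U_1', \ldots, U_R')$ with the rank of $F^M$ at a carefully chosen point $x$. Recall that the rank of a contraction map at a point is the dimension of the real span of the perturbation vectors $\{F(U_1,\ldots,(\alpha\otimes\beta)U_j,\ldots,U_R)\}_{j,\alpha,\beta}$. The key observation is that the only difference between $V^{M'}(t)$ and $V^M(t)$ is the insertion of a single projector $\Pi = \ketbra{b}{b}_i$ (with $b\in\{0,1\}$) at the space-time location $(i,\tau)$ in place of a factor of $\sqrt{1-p}\,I$. Write $V^{M'}(t) = A\,\Pi\,B$ and $V^M(t) = \sqrt{1-p}\,A\,B$, where $A$ collects the gates and measurement operators acting after the location $(i,\tau)$ and $B$ collects those acting before it. Crucially, $A$ and $B$ are built from exactly the same gate slots $U_1,\ldots,U_R$ in both configurations.

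The main idea is the following. First I would pick the point $x$ at which $\operatorname{rank}(F^M)$ is to be evaluated to agree with $x'$ on all gates except those gates that are adjacent to the inserted projector, and on those adjacent gates I would insert suitable fixed unitaries (or compositions) that ``simulate'' the presence of the projector. Concretely: a rank-one projector $\Pi = \ketbra{b}{b}_i$ can be written, up to the scalar that does not affect the span, as the limit/restriction of applying a gate on qubit $i$ together with its neighbour that disentangles qubit $i$ and post-selects it — but since we are not allowed measurements in $M$ at that location, the cleaner route is to argue at the level of linear spans. For every perturbation vector of $F^{M'}$, namely $A\,\Pi\,(\text{perturbed }B)$ or $(\text{perturbed }A)\,\Pi\,B$, one wants to exhibit a perturbation vector of $F^M$ at the point $x$ that is a real scalar multiple of it, or more generally show that the whole perturbed span for $M'$ embeds (as a real linear space, up to nonzero rescalings of individual vectors) into the perturbed span for $M$.

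The cleanest way to do this is to use the fact, already present in the excerpt's discussion preceding the lemma, that a projector $\ketbra{0}{0}$ can be replaced by acting with a two-qubit gate that routes qubit $i$'s information away and then ignoring qubit $i$. More precisely, I would absorb $\Pi$ into the neighbouring gate $U_j$ (the gate in $V^{M'}$ acting on qubit $i$ immediately before the projector): the product $\Pi_i U_j$ is a fixed rank-deficient operator, and on the relevant subspace it acts like a genuine unitary followed by a projection. The strategy is then to choose, in the configuration $M$, that same gate slot $U_j$ to equal a unitary whose action, when composed with the rest of the (unperturbed, because $M$ has no projector) circuit, reproduces the output vector and all its perturbations up to the harmless scalars. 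Because $M$ has strictly fewer constraints (it is a full-rank operator where $M'$ has a rank-one one), every output and every perturbed output achievable with $M'$ at $x'$ is also achievable with $M$ at a suitable $x$; hence the real span cannot shrink, giving $\operatorname{rank}(F^{M'})_{x'} \le \operatorname{rank}(F^M)_{x} \le d_M$, and taking the maximum over $x'$ yields $d_{M'}\le d_M$.

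\textbf{Main obstacle.} The delicate point is bookkeeping of the perturbation directions: a perturbation $(\alpha\otimes\beta)U_j$ of a gate \emph{adjacent} to the projector in $M'$ must be matched by a perturbation of the \emph{corresponding} (but differently chosen) gate in $M$, and one has to check that the map sending $M'$-perturbation vectors to $M$-perturbation vectors is (i) well-defined, (ii) surjective onto enough of the $M$-span, and (iii) only rescales each vector by a nonzero real (or at worst complex, which still preserves real dimension after the standard ``complex scalars add at most $2$'' caveat — though here we want it to not add dimension, so one must be careful). In particular, the projector kills certain perturbation directions ($\ketbra{0}{0}X\ket{0}=0$), so the $M'$-span is in some directions genuinely smaller, and the argument must only claim the inclusion in the direction that makes $d_{M'}\le d_M$, never the reverse. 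I expect the proof in the appendix to handle this by an explicit case analysis on the position of the perturbed gate relative to the inserted projector, together with the reduction (already used in the main text) that all measurement outcomes may be taken to be $0$ by absorbing $X$ gates into the neighbouring Haar-random unitaries.
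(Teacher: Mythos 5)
There is a genuine gap at the heart of your argument. Your proof hinges on the claim that, by choosing the gate adjacent to the inserted projector appropriately, ``every output and every perturbed output achievable with $M'$ at $x'$ is also achievable with $M$ at a suitable $x$,'' up to harmless scalars. This is exactly the step that cannot work as stated: no element of $SU(4)$ can reproduce the action of the rank-one operator $\Pi_i U_j$ on a generic input (unitaries have all singular values equal to one), and there is no ancilla available to route the information away. Saying that $M$ has ``fewer constraints'' does not give an inclusion of the perturbed spans; the image of $F^M$ and the image of $F^{M'}$ are simply different semi-algebraic sets, and neither contains the other. You correctly flag the bookkeeping of perturbation directions as delicate, but the real obstruction is upstream of that: you never exhibit a single point $x$ at which all the $M'$-perturbation vectors are simultaneously captured.

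The paper's proof avoids simulating the projector altogether. Writing $\ketbra{0}{0}=(I+Z)/2$, it chooses $x$ equal to $x'$ except that the gate $U_m$ immediately preceding the measurement location is replaced by $e^{i(I+Z)\theta}U_m$, which is a genuine unitary equal to a linear combination of $U_m$ and $(I+Z)U_m$. Consequently each perturbation vector of $F^M$ at $x$ has the form $\ket{u_i}+(e^{i\theta}-1)\ket{v_i}$, where $\ket{v_i}$ is the corresponding $M'$-perturbation vector (with the projector) and $\ket{u_i}$ is the unprojected one. The point is not that these reproduce the $\ket{v_i}$ up to scalars --- they do not --- but that for a \emph{generic} $\theta$ their span has dimension at least $r'=\dim\operatorname{span}\{\ket{v_i}\}$; this is established by a Gram--Schmidt decomposition and the observation that the relevant determinant is a nonzero polynomial in $e^{i\theta}-1$ with finitely many roots. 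If you want to repair your proposal, this interpolation-plus-genericity mechanism is the missing idea; the case analysis on the position of the perturbed gate that you anticipate is not actually needed.
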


\begin{proof}
We fix $R$ gates mapped by $F^{M'}$ as $x'=\{U_R, \dots , U_1\}$. By  definition, the rank $r'$ of $F^{M'}$ is
\begin{align}
r' = \dim \left( \textrm{span}\left\{ U_R \cdots M'_{|M'|} \cdots M'_k U_m 
\cdots (\alpha \otimes \beta) U_j \cdots M'_{1} \cdots U_1 \ket{0^n} \right\}_{j, \alpha, \beta} \right), \label{eq.states}
\end{align}
where $M'_k$ is the additional measurement $M'_k=\sqrt{p} \ketbra{0}{0}$ (we assume here that the outcome of $M_k$ is $+1$, but the case of $-1$ works as well).
 Because of $\ketbra{0}=({I+Z})/{2}$, Eq.~(\ref{eq.states}) becomes
\begin{align}
r' = \dim \left( \textrm{span}\left\{ U_R \cdots M'_{|M'|} \cdots (I + Z) U_m  \cdots (\alpha \otimes \beta) U_j \cdots M'_{1} \cdots U_1 \ket{0^n} \right\}_{j, \alpha, \beta} \right), \label{eq.states2}
\end{align}
where $U_m$ is the unitary gate which is just followed by the measurement $M_{k}$. 
By the definition of dimension, there are $r'$ linearly independent vectors $\ket{v_i} := U_R \cdots M_{|M|} \cdots
(I+Z) U_m \cdots (\alpha \otimes \beta) U_j \cdots M_{1} \cdots U_1 \ket{0^n}$, $i=1, \dots , r$, where the index $i$ denotes the configuration of $\alpha$, $\beta$, and $j$.

Now, we set $x$ as the same as $x'$ except for $m$-th gate, which is $e^{i(I+Z) \theta}U_m$.
Then, $\textrm{rank}(F^M)$ is the dimension of the vector space spanned by the vectors
\begin{align}
    & \left\{ U_R \cdots M_{|M|} \cdots
    e^{i(I+Z) \theta}U_m 
    \cdots (\alpha \otimes \beta) U_j \cdots U_1 \ket{0^n}  \right\}_{j, \alpha, \beta},
\end{align}
which are equal to 
\begin{align}    
    &\left\{ U_R \cdots M_{|M|} \cdots
    U_m \cdots (\alpha \otimes \beta) U_j \cdots M_{1} \cdots U_1 \ket{0^n} \right. \notag \\
    & \ \ +  U_R \cdots M_{|M|} \cdots  (e^{i \theta}-1)(I+Z)U_m  \left. \cdots (\alpha \otimes \beta) U_j \cdots M_{1} \cdots U_1 \ket{0^n}\right\}_{j, \alpha, \beta}. \label{eq:vecspFM}
\end{align}
Using the vectors $\{ \ket{v_i} \}$, we can find $r$ independent vectors in Eq.~(\ref{eq:vecspFM}). 
Specifically, we can find some $\theta$ such that the $r$ vectors
\begin{align} \label{eq:independent_vec_M}
    &\ket{u_i} + (e^{i \theta}-1)  \ket{v_i} \notag  \\
    & =  U_R \cdots M_{|M|} \cdots
    U_m \cdots (\alpha \otimes \beta) U_j \cdots M_{1} \cdots U_1 \ket{0^n} \notag \\
    & + (e^{i \theta}-1) U_R \cdots M_{|M|} \cdots
    (I+Z) U_m \cdots (\alpha \otimes \beta) U_j  \cdots M_{1} \cdots U_1 \ket{0^n}
\end{align}
are linearly independent for $i=1, \dots , r$, where we have defined $\ket{u_i}$ and $(e^{i \theta}-1)  \ket{v_i}$ as the first term and the second term of the right-hand side of the equation, respectively.

To see this, first, we make $r$ orthonormal vectors $\{ \ket{\Tilde{v}_i} \}$ from $\{ \ket{v_i} \}$ by the Gram-Schmidt decomposition. 
By these vectors, $\{ \ket{v_i} \}$ is decomposed as $\ket{v_i}= \sum_{k=1}^i d_i^k\ket{\Tilde{v}_k}$, for some  coefficients $d_i^k$ such that $d_i^i \neq 0$ for all $i$. Next, decompose $\ket{u_i}$ as 
\begin{equation}
\ket{u_i} = \sum_{k=1}^r c^k_i \ket{\Tilde{v}_k} + c^{\bot}_i \ket{v_i^{\bot}}, 
\end{equation}
for some coefficients $c^k_i, \  c^{\bot}_i$ and some vector $\ket{v_i^{\bot}}$ in the orthogonal complement of span$\{ \ket{v_i} \}$.
Let us define the function $f:\mathbb{R}\rightarrow \mathbb{C}$ as
\begin{equation}
f(\theta):=e^{i \theta}-1.
\end{equation}
Then, Eq.~(\ref{eq:independent_vec_M}) becomes 
\begin{align} \label{eq:independent_vec_M_decomposition}
\sum_{k \leq i} (c_i^k + f(\theta)d_i^k) \ket{\Tilde{v}_k} + \sum_{k > i} c_i^k \ket{\Tilde{v}_k} + c^{\bot} \ket{v_i^{\bot}}.
\end{align}
Again, we make $r$ orthonormal vectors $\{ \ket{\Tilde{v}^{\bot}_i} \}$ from $\{ \ket{{v^{\bot}}_i} \}$ by the Gram-Schmidt decomposition, and  $\ket{{v^{\bot}}_i}= \sum_{k=1}^i e_i^k\ket{\Tilde{v}^{\bot}_k}$ for some coeffients $e_i^k$.
Consider a linear map $A$, which maps $\ket{\Tilde{v}_i}$ to Eq.~(\ref{eq:independent_vec_M_decomposition}). 
In the matrix representation with the basis $\{ \ket{\Tilde{v}_i}, \ket{\Tilde{v}_i^{\bot}} \}_{i=1, \dots , r}$,
\begin{align}
A=
\begin{pmatrix}
   c_1^1 + f(\theta)d_1^1 & c_2^1 + f(\theta)d_2^1 & \cdots & c_r^1+f(\theta)d_r^1 \\
   c_1^2 & c_2^2 + f(\theta)d_2^2 & \cdots & c_r^2+f(\theta)d_r^2 \\
   \vdots & \vdots & \ddots & \vdots \\
    c_1^r & c_2^r & \cdots & c_r^r +f(\theta)d_r^r \\
    e_1^1 & e_2^1 & \cdots & e_r^1 \\
    0 & e_2^2 & \cdots & e_r^2 \\
    \vdots & \vdots & \ddots & \vdots \\
    0 & 0 & \cdots & e_r^r \\
\end{pmatrix},
\end{align}
where it is an $2r \times r$ matrix.
Let $A^{r \times r}$ be the top $r \times r$ sub-matrix of $A$.
Note that if the $\textrm{rank}(A^{r \times r})=r$, then $\textrm{rank}(A)=r$, and it implies that the vectors $\{ \ket{u_i} + (e^{i \theta}-1)  \ket{v_i} \}_{i=1, \dots , r} \}$ are linearly independent. 
This condition is equivalent to that $A^{r \times r}$ has a non-zero determinant.
Moreover, we can always choose $\theta$ such that $\textrm{rank}(A^{r \times r})=r$. 
This is because the determinant of $A^{r \times r}$ is a polynomial of $F(\theta)$ such that its zeros imply $\textrm{rank}(A^{r \times r})<r$, and by virtue of the fundamental theorem of algebra, the number of zeros of the polynomial is the same as its degree, which is $r$.
We can choose $\theta$ such that it is not any zeros of the  polynomials, because $\theta$ is a continuous variable.
Hence, such $\theta$ gives rank($F^M$) which is greater than or equal to rank($F^{M'}$).
\end{proof}  

Because the accessible dimension is the maximal rank over $R$ unitary gates, the above lemma implies that single-qubit measurement, or projection, cannot increase the accessible dimension.  Applying the above lemma recursively, we can show that adding any number and space-time point of measurements cannot increase the dimension.

\section{Two-qubit gate between nearest-neighbour measurement-free paths} 
\label{app:two-qubit_gate}
In this section, we show how unitary gates at the edge of the bridge are fixed to implement a CNOT gate between two neraest-neighbour measurement-free paths. For completeness, we begin with restating the method in the main text, where we consider the following paths and bridge,
\begin{equation} \label{eq:Bridge_causal_causal}
    \includegraphics[width=35mm]{Bridge_causal_causal.png}.
\end{equation}
Then, CNOT can be implemented as Eq.~(\ref{eq:Bridge_CNOT}), and we restate it here as
\begin{equation} \label{eq:Bridge_CNOT'}
    \includegraphics[width=85mm]{Bridge_CNOT.png}.
\end{equation}
In this case, both of the paths in Eq.~(\ref{eq:Bridge_causal_causal}) are causal in the broken circles, that is, they include both an input and an output of the unitary gates in the circles.
In general, in such case we can perform CNOT, by applying a CNOT gate, multiplied by $I \otimes H$, with the control qubit being measured and another CNOT gate with the target qubit state being measured at the edge of the bridge, such as Eq.~(\ref{eq:Bridge_CNOT'}).

If this is not the case, we can still implement a CNOT gate, as we explain below. We consider the case where one path is causal, and another path is not causal at the edge of a bridge, for example
\begin{equation} \label{eq:causal_acausal}
    \includegraphics[width=65mm]{causal_acausal.png},
\end{equation}
where in the right-hand side, we highlighted the paths, the bridge, and two-qubit gates at the edge of the bridge.
We can also perform CNOT in such case, by applying a CNOT gate, multiplied by $I \otimes H$, with the control qubit state being measured and another CNOT gate with the target qubit state being measured.
For the above example, it is performed as the following:
\begin{equation} \label{eq:CNOT_causal_acausal}
    \includegraphics[width=85mm]{CNOT_causal_acausal.png}.
\end{equation}
The difference with the earlier case is that here the qubit state carried by a bridge is an output state of one measurement-free path.
Finally, we consider the 
case where both of the paths are not causal at the edge of a bridge, for example
\begin{equation} \label{eq:acausal_acausal}
    \includegraphics[width=65mm]{acausal_acausal.png}.
\end{equation}
Again, we can perform CNOT in such case, by a similar 
choice of two-qubit gates at the edge of the bridge. For the above example,
\begin{equation} \label{eq:CNOT_acausal_acausal}
    \includegraphics[width=85mm]{CNOT_acausal_acausal.png}.
\end{equation}

\section{Percolation theory} \label{sec:App:percolation}


In this work, 
techniques from
percolation theory feature strongly. For this 
reason, here we
review some aspects of percolation theory, 
following Ref.~\cite{grimmett1999percolation}, and then prove lemmas used in the main text.
Specifically, we focus on the 
percolation theory on a rectangle featuring a large aspect ratio.

Especially important are notions of \emph{bond percolation} on two-dimensional square lattices.
A square lattice is defined as $\mathbb{Z}^2$ with edges between all nearest-neighbor pairs $x, y \in \mathbb{Z}^2$. We denote by $\mathbb{E}$ the set of \emph{edges}.
We define a measurable space ($\Omega$, $\mathcal{F}$) as follows. 
For the sample space, we take $\Omega = \prod_{e \in \mathbb{E}} \{ 0,1\} $, called the  edge configuration ($0$ and $1$ represent closed and open edge, respectively), and $\mathcal{F}$ is the $\sigma$-algebra on it. 
Each element in $\Omega$ is represented as a function $\omega: \mathbb{E} \rightarrow \{0,1 \}$.
We say $\omega \leq \omega'$ if $\omega(e) \leq \omega' (e)$ for all $e \in \mathbb{E}$.
Let $A \in \mathcal{F}$ be an \emph{increasing event}, i.e., 
\begin{equation} \label{Eq:increasing_events}
I_A (\omega) \leq I_A (\omega')
\end{equation}
whenever $\omega \leq \omega', \omega, \omega' \in \Omega$.
Here, $I_A:\Omega\rightarrow \{0,1\}$ 
is the \emph{indicator function} 
of $A$: $I_A (\omega)=1$ if $\omega \in A$ and otherwise $I_A (\omega)=0$.
For an event $A$, we denote the probability of the occurrence  of the event by $P_q(A)$ when an edge opens with probability $q$.
(This $q$ is contrary to that in the section 2.1. There, a measurement closes, or ``cut'' a bond, at probability ``$p$'' but in this section, bond is open at probability $q$.) 
For two increasing events 
$A$ and $B$, the 
inequality
\begin{equation}
P_q(A \cap B) \geq P_q (A) P_q (B)
\end{equation}
is well known as the 
\emph{FKG inequality} in the literature  of percolation theory.
Intuitively, the FKG inequality tells us that if we know an increasing event $A$ occurs, another increasing event $B$ is more or equally likely to occur.

Bond percolation theory is concerned with the existence or absence of \emph{left-right crossings} on a $L \times L$ square, which is an open path connecting from some vertex on the left side of the square to the right side of it. 
With probability exponentially close to one in $L$, above the critical probability $q_c$, there exists such crossings, and below it, there does not. Moreover, the critical point of bond percolation in two-dimensional square lattice is known to be $q_c=1/2$ \cite{grimmett1999percolation}.
In the following subsections, we show several lemmas to establish Theorem~\ref{thm:phase_transition_Cs}.

\subsection{Supercritical phase}
A main goal here is to derive a lower-bound of the expected number of 
{left-right crossings} on a rectangle with a various aspect ratio in the regime $q>q_c=\frac{1}{2}$.

\subsubsection{Percolation on a square} \label{subsec:App:perc.square}
We start the argument by discussing the case of a square.
Let $M_{L}$ be the maximal number of edge-disjoint left-right crossings of the box $[0,L]\times [0,L]$ for an integer $L$.
{In this appendix, we use the shorthand $[0,a]\times[0,b]$ to designate the rectangular lattice of points
of height $a$ and width $b$.}
In the supercritical phase the probability of the event $A$, where there exists an left-right crossing in the box $[0,L]\times [0,L]$, is exponentially close to one \cite{grimmett1999percolation}:
\begin{align} \label{Eq:super_critical_square}
    P_q(A) \geq 1-e^{-\alpha L},
\end{align}
for some constant $\alpha=\alpha(q)$.
The event $A$ is an increasing event, because adding open edges does not decrease the number of left-right crossings.

Now we define the \emph{interior} of $A$, $J_r (A)$, as the set of configurations in $A$ which are still in $A$ after changing arbitrarily the configurations at most $r$ edges (deleting or adding edges).
The following fact states the stability of an increasing event.

\begin{fct}
[Theorem 2.45 in Ref.~\cite{grimmett1999percolation}]
\label{fct:stability_events}
{Let $A$ be an increasing event.  Then}
 \begin{equation}
1 - P_{q_2}(J_r (A)) \leq \left( \frac{q_2}{q_2-q_1} \right) ^r(1-P_{q_1}(A))
\end{equation}
for {any} $0\leq q_1 < q_2 \leq 1$.
\end{fct}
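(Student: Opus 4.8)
The plan is to establish Fact~\ref{fct:stability_events} by combining the standard monotone coupling of bond percolation measures (cf.~\cite{grimmett1999percolation}) with a conditional \emph{sprinkling} estimate; we may assume $A\neq\Omega$, since otherwise both sides of the claimed inequality vanish. First I would realise all the measures $P_q$ on one probability space: to each edge $e$ attach an independent uniform random variable $Y_e$ on $[0,1]$, and let $\omega_q$ be the configuration with $\omega_q(e)=1$ exactly when $Y_e\le q$. Then $\omega_q$ has law $P_q$ and $\omega_{q_1}\le\omega_{q_2}$, and the edges open in $\omega_{q_2}$ but not in $\omega_{q_1}$ form the sprinkled set $D:=\{e:q_1<Y_e\le q_2\}$. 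The key structural observation is that, conditionally on $\omega_{q_2}$, each edge open in $\omega_{q_2}$ lies in $D$ independently with probability $(q_2-q_1)/q_2$ (and is otherwise already open in $\omega_{q_1}$), since conditioning on $\{Y_e\le q_2\}$ leaves $Y_e$ uniform on $[0,q_2]$.

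Second, I would extract a convenient witness for the event $\{\omega_{q_2}\notin J_r(A)\}$. By definition of $J_r(A)$ this event means there is a configuration $\eta\notin A$ differing from $\omega_{q_2}$ on at most $r$ edges; replacing $\eta$ by $\eta\wedge\omega_{q_2}$, which is still outside $A$ because $A$ is increasing and $\eta\wedge\omega_{q_2}\le\eta$, we may take the witness to be obtained from $\omega_{q_2}$ by closing a set $S$ of at most $r$ of its \emph{open} edges, so that $\omega_{q_2}\setminus S\notin A$. Fixing a canonical such $S=S(\omega_{q_2})$ (for instance the lexicographically least) makes it a deterministic function of $\omega_{q_2}$.

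Third comes the estimate. On the event $\{\omega_{q_2}\notin J_r(A)\}$, if additionally $S(\omega_{q_2})\subseteq D$, then every edge of $S(\omega_{q_2})$ is already closed in $\omega_{q_1}$, so $\omega_{q_1}\le\omega_{q_2}\setminus S(\omega_{q_2})\notin A$ and hence $\omega_{q_1}\notin A$, again because $A$ is increasing. By the conditional-independence observation, $P\bigl(S(\omega_{q_2})\subseteq D\mid\omega_{q_2}\bigr)=\bigl((q_2-q_1)/q_2\bigr)^{|S(\omega_{q_2})|}\ge\bigl((q_2-q_1)/q_2\bigr)^{r}$, so on $\{\omega_{q_2}\notin J_r(A)\}$ we get $P\bigl(\omega_{q_1}\notin A\mid\omega_{q_2}\bigr)\ge\bigl((q_2-q_1)/q_2\bigr)^{r}$. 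Averaging this conditional bound over $\omega_{q_2}$ (it is trivially valid on the complementary event) yields
\begin{align}
1-P_{q_1}(A)\;\ge\;\left(\frac{q_2-q_1}{q_2}\right)^{r}\bigl(1-P_{q_2}(J_r(A))\bigr),
\end{align}
which rearranges to the stated inequality.

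I expect the only part that needs genuine care — rather than a real obstacle — to be the second step: verifying that a witness for $\omega_{q_2}\notin J_r(A)$ can always be taken to consist of open edges whose closure leaves the increasing event $A$, and that the selection of $S(\omega_{q_2})$ is measurable; the reduction to $A\neq\Omega$ removes the only boundary case. As an alternative, one may instead peel off one edge at a time via $J_r(A)=J_1(J_{r-1}(A))$, using the elementary fact that $J_s(A)$ is again increasing whenever $A$ is, thereby reducing the statement to its $r=1$ instance of the same coupling argument.
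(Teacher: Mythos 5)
The paper does not prove this statement; it is imported verbatim as Theorem~2.45 of Grimmett's book, so there is no in-paper proof to compare against. Your coupling-and-sprinkling argument is correct and is essentially the standard textbook proof: the monotone coupling via uniform variables $Y_e$, the reduction of a witness for $\omega_{q_2}\notin J_r(A)$ to the closure of at most $r$ \emph{open} edges (using that $A$ is increasing to pass to $\eta\wedge\omega_{q_2}$), and the conditional probability $\bigl((q_2-q_1)/q_2\bigr)^{|S|}$ that the witness set lies in the sprinkled set are all handled correctly, and the averaging step yields exactly the claimed inequality after rearrangement.
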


Roughly speaking, it states that if the event $A$ happens with probability $q_1$, the modified event $J_r$ is also likely to happen when probability exceeds $q_1$. 
The above fact is useful for finding a lower-bound of the number of crossings of a rectangle.
$J_r (A)$ is the events that there exists at least $r+1$ left-right crossing (because if there are less than $r+1$ crossings, deleting $r$ edges can cut all the crossings).
Combining Eq.~\ref{Eq:super_critical_square} with Fact~\ref{fct:stability_events}, the following statement is obtained.

\begin{fct}[Lemma 11.22 in Ref.~\cite{grimmett1999percolation}] \label{fct:linearnum}
For $q > 1/2$, there exists strictly positive constants $\beta (q)$ and $\gamma (q)$, which are independent in n, such that
$P_q(M_{L} \geq \beta(q) L ) \geq 1- e^{- \gamma(q) L}$ for all $L \geq 1$.
\end{fct}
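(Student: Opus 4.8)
The plan is to obtain Fact~\ref{fct:linearnum} by combining the single-crossing estimate~\ref{Eq:super_critical_square} with the stability-of-increasing-events bound of Fact~\ref{fct:stability_events}; this is essentially the assembly underlying Lemma~11.22 of Ref.~\cite{grimmett1999percolation}. First I would fix an auxiliary parameter $q_1$ with $1/2 < q_1 < q$, for concreteness $q_1 = (q+1/2)/2$. Since $q_1 > q_c = 1/2$, the event $A$ that there exists an open left--right crossing of the box $[0,L]\times[0,L]$ obeys $P_{q_1}(A) \geq 1 - e^{-\alpha L}$ for some $\alpha = \alpha(q_1) > 0$, and $A$ is increasing because adjoining open edges cannot destroy a crossing.

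Next I would identify the interior $J_r(A)$ with the event $\{ M_L \geq r+1 \}$. If $\omega \in J_r(A)$, then deleting any $r$ of its open edges leaves a configuration still in $A$; equivalently, no set of $r$ edges forms a cut separating the left and right sides of the box in $\omega$, and by Menger's theorem (min-cut/max-flow duality for edge-disjoint paths) this is the same as $M_L(\omega) \geq r+1$. Adding edges keeps $\omega$ in $A$ trivially since $A$ is increasing, so in fact $J_r(A) = \{ M_L \geq r+1 \}$, and it remains only to lower bound $P_q(J_r(A))$. Applying Fact~\ref{fct:stability_events} with $q_2 = q$ and the chosen $q_1$ gives
\begin{align}
1 - P_q(J_r(A)) &\;\leq\; \left( \frac{q}{q-q_1} \right)^{r}\bigl(1 - P_{q_1}(A)\bigr) \nonumber\\
&\;\leq\; \left( \frac{q}{q-q_1} \right)^{r} e^{-\alpha L}.
\end{align}

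Finally I would choose $r = \lfloor \beta L \rfloor$ with $\beta = \beta(q) > 0$ small enough that $\beta \log\!\bigl( q/(q-q_1) \bigr) < \alpha$, and set $\gamma = \gamma(q) := \alpha - \beta \log\!\bigl( q/(q-q_1) \bigr) > 0$; then $\bigl( q/(q-q_1) \bigr)^{r} e^{-\alpha L} \leq e^{-\gamma L}$, whence $P_q(M_L \geq \beta L) \geq P_q(M_L \geq r+1) = P_q(J_r(A)) \geq 1 - e^{-\gamma L}$, as claimed. The finitely many small $L$ for which $\lfloor \beta L \rfloor$ is not positive are handled trivially (take $r = 0$, so $J_0(A) = A$), and all of $\alpha, q_1, \beta, \gamma$ depend on $q$ alone. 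I expect the only genuinely nontrivial point to be the combinatorial identity $J_r(A) = \{ M_L \geq r+1 \}$ — i.e.\ the use of edge-version Menger's theorem to convert ``no small separating cut'' into ``many edge-disjoint crossings'' — while everything else is bookkeeping to keep the constants uniform in $L$.
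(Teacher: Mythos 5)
Your proposal is correct and follows essentially the same route as the paper: apply Fact~\ref{fct:stability_events} to the increasing crossing event $A$ with $r\sim\beta L$ and an auxiliary $q'\in(1/2,q)$, identify $J_r(A)$ with $\{M_L\ge r+1\}$ (the paper states this via the same cut/disjoint-paths duality you attribute to Menger's theorem), and then choose $\beta$ small enough that $\gamma=\alpha(q')-\beta\log\frac{q}{q-q'}>0$. Your write-up is if anything slightly more explicit about the Menger step, the floor function, and small $L$, but there is no substantive difference.
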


\begin{proof}
 One starts by choosing $r$ in Fact~\ref{fct:stability_events} as $\beta(q) L$, and the set $A$ as being the event that there exists at least one left-right crossing. Then Fact~\ref{fct:stability_events} implies 
 \begin{equation}
     1-e^{-L \left(\alpha(q')-\beta(q) \log\frac{q}{q-q'} \right) } \leq P_q(M_{n+1} \geq \beta(q) L ),
 \end{equation}
 where $q>q'>1/2$.
 Now we find 
 \begin{equation}
 \gamma(q)=\alpha(q')-\beta(q) \log\frac{q}{q-q'}. 
  \end{equation}
  For a fixed $1 \geq q> \frac{1}{2}$, we can choose a strictly positive constant $\beta(q)$ and $q'$ such  that  $\gamma(q)$ is also strictly positive.
\end{proof}

The above statement implies that if a left-right crossing exists at a high probability in a square lattice, we can find a number of edge-disjoint left-right crossings, which scale in the length of the side of a square, at a high probability. 
It ensures the existence of a linear number of measurement-free paths.
We mention that Refs.~\cite{browne2008phase,KielingPercolation,PhysRevLett.99.130501} have 
made a similar use of Facts~\ref{fct:stability_events} and \ref{fct:linearnum} as well in the context of the measurement-based quantum computation.

\subsubsection{Percolation on a rectangle with a various aspect ratio} \label{subsec:App:percolation_largeaspect}

Next we consider a square lattice on a rectangle $[0, L]\times [0, LT]$ for some aspect ratio $T>1$.
We can show the existence  of a scalable number of left-right crossings until some exponential aspect ratio. 
It is true in the case of both bond and site percolation.
We make use of following facts. Let $A_T$ be an event that there exists a right-left crossing on $[0, L]\times [0,LT]$ rectangle. 

\begin{fct}[Lemma 11.73 and 11.75 in Ref.~\cite{grimmett1999percolation}] \label{Fact:rectangle}
If $P_q(A_1)=\tau$, then
\begin{align}
   P_q \left( A_{\frac{3}{2}} \right) \geq (1-\sqrt{1- \tau})^3, \\
   P_q(A_{2}) \geq P_q(A_{1})  P_q \left( A_{\frac{3}{2}} \right).
\end{align}
\end{fct}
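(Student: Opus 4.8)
The plan is to run the classical rectangle-crossing bootstrap of Ref.~\cite{grimmett1999percolation}, whose only two ingredients are the FKG inequality and the \emph{square-root trick}. Recall that, for any axis-parallel sub-rectangle $Q$ and any two opposite sides of $Q$, the event that those sides are joined by an open path inside $Q$ is increasing, so FKG applies to intersections of such events, and the square-root trick---FKG applied to the complements of two increasing events $E,E'$ with $P_q(E)=P_q(E')$, which gives $P_q(E)\ge 1-\sqrt{1-P_q(E\cup E')}$---applies to pairs of them. I will also use the elementary planar fact that a left--right crossing and a top--bottom crossing of the same rectangle must share a vertex.

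For the first inequality I would cover the $L\times\tfrac{3}{2}L$ rectangle by the two unit squares $S_1=[0,L]\times[0,L]$ and $S_2=[0,L]\times[\tfrac{L}{2},\tfrac{3L}{2}]$, whose overlap is the vertical strip $O=[0,L]\times[\tfrac{L}{2},L]$, and establish the inclusion
\begin{equation}
\{\text{LR crossing of }S_1\}\cap\{\text{LR crossing of }S_2\}\cap\{\text{TB crossing of }O\}\ \subseteq\ A_{3/2}.
\end{equation}
Indeed, a top--bottom crossing of $O$ is simultaneously a top--bottom crossing of $S_1$ and of $S_2$, hence meets both horizontal crossings, and the three paths glue into a left--right crossing of the full rectangle. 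The two square-crossing events have probability $\tau=P_q(A_1)$, while the crossing of $O$---a hard crossing of a $2{:}1$ rectangle---has probability at least $1-\sqrt{1-\tau}$ by the square-root trick. FKG then yields $P_q(A_{3/2})\ge\tau^2(1-\sqrt{1-\tau})$, and since $\tau\ge1-\sqrt{1-\tau}$ (equivalently $s\ge s^2$ with $s=\sqrt{1-\tau}\in[0,1]$) this is at least $(1-\sqrt{1-\tau})^3$.

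For the second inequality I would use an analogous but more economical decomposition of the $L\times2L$ rectangle: glue a unit square to an $L\times\tfrac{3}{2}L$ rectangle along a common region chosen so that the transversal crossing needed to join the two pieces is already supplied by part of the $A_{3/2}$-crossing rather than by a fresh event, so that FKG produces the clean product $P_q(A_2)\ge P_q(A_1)\,P_q(A_{3/2})$ with no extra factor.

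The step I expect to be the main obstacle is controlling these ``hard'' transversal crossings: showing by the square-root trick that a top--bottom crossing of the narrow overlap region has probability at least $1-\sqrt{1-\tau}$ requires exhibiting a mirror-symmetric pair of increasing events whose union is genuinely forced by a single square crossing---naive choices, such as ``top--bottom crossing confined to the left half'' versus ``confined to the right half'' of a square, fail because a square crossing may weave between the two halves. Getting this pairing right, and likewise choosing the overlaps in the $A_2$ step so that no spurious $\sqrt{1-\tau}$ factor survives, is the delicate part; everything else is bookkeeping.
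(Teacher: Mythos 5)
There is a genuine gap, and it sits exactly where you flag it yourself. First, note that the paper does not prove this Fact at all: it is imported verbatim from Grimmett (Lemmas 11.73 and 11.75), so the comparison is with the textbook argument. Your plan for the first inequality rests on the claim that the top--bottom crossing of the overlap strip $O=[0,L]\times[\tfrac{L}{2},L]$ --- a hard-direction crossing of a $2{:}1$ rectangle --- has probability at least $1-\sqrt{1-\tau}$ ``by the square-root trick.'' No legitimate application of the square-root trick yields this: the trick requires two equiprobable increasing events whose \emph{union} is implied by an event of probability $\tau$, and (as you concede in your closing paragraph) there is no such pair for which one member is the hard crossing of a half-width sub-rectangle, since a square crossing can weave between the two halves. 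Lower-bounding hard crossings of rectangles longer than wide purely in terms of the square-crossing probability $\tau$ is precisely the content of the RSW lemma you are trying to prove, so invoking it at this step is circular. Grimmett's actual proof of Lemma 11.73 avoids hard sub-rectangle crossings altogether: the square-root trick is applied to square crossings \emph{constrained to terminate in a prescribed half of a side} (these are equiprobable by reflection symmetry, so each has probability at least $1-\sqrt{1-\tau}$), and the three factors in $(1-\sqrt{1-\tau})^3$ arise from a delicate reflection/concatenation argument joining such a constrained crossing, its mirror image, and a further crossing. Your decomposition into two overlapping squares plus a transversal crossing of the overlap would indeed be an easy proof --- if the overlap bound were available --- which is a sign that it cannot be obtained this cheaply.

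The second inequality is likewise not established: ``glue a unit square to an $L\times\tfrac{3}{2}L$ rectangle along a common region chosen so that the transversal crossing is already supplied by part of the $A_{3/2}$-crossing'' is a hope, not a construction. Two same-direction crossings of overlapping regions need not intersect, so the natural FKG gluings produce extra factors, e.g.\ $P_q(A_2)\geq \tau^2\,P_q(A_{3/2})$ (left $\tfrac32$-rectangle crossed, right square crossed both ways) or $P_q(A_2)\geq \tau\,P_q(A_{3/2})^2$ (two overlapping $\tfrac32$-rectangles joined through the middle square), both strictly weaker than the stated $P_q(A_2)\geq P_q(A_1)\,P_q(A_{3/2})$. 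To get the clean two-factor bound you would again need the more careful constrained-crossing/symmetry machinery of the textbook proof. In short: the FKG steps and the planar intersection argument in your sketch are fine, but the one step carrying all the difficulty is assumed rather than proved, so the proposal does not yet constitute a proof of the Fact; for the purposes of this paper the correct move is simply to cite Ref.~\cite{grimmett1999percolation}, as the authors do.
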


These insights 
(\emph{FKG inequality}) 
assist us in proving
 the following lemma.

\begin{lem} [Large aspect ratios]\label{lem:path_large_aspect}
{If $P_q(A_1)=\tau$, then}
\begin{align}
P_q(A_{T}) &\geq P_q(A_{1})^{T-2}  P_q \left( A_{2} \right)^{T-1} \label{eq:PATineq} \\
&\geq \tau ^{2T-3}(1-\sqrt{1- \tau})^{3(T-1)}.\nonumber
\end{align}
\end{lem}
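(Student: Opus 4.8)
The plan is to prove the inequality \eqref{eq:PATineq} by induction on the integer part of $T$, using the FKG inequality to "glue" crossings of shorter rectangles into a crossing of a longer one, and then to substitute the bounds from Fact~\ref{Fact:rectangle} to obtain the explicit expression in $\tau$. First I would observe that a left-right crossing of $[0,L]\times[0,LT]$ can be produced from a left-right crossing of a sub-rectangle $[0,L]\times[0,2L]$ glued to a left-right crossing of the remaining sub-rectangle $[0,L]\times[0,L(T-2)]$, provided the two crossings overlap in the shared strip so that they connect; the cleanest way to guarantee connection is to take overlapping rectangles whose intersection is a full $[0,L]\times[0,L]$ square and to also demand a crossing of that overlap square. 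Concretely, decomposing $[0,L]\times[0,LT]$ into a chain of overlapping rectangles, each a translate of $[0,L]\times[0,2L]$ (there are $T-1$ of them if we shift by $L$ each time), together with the $T-2$ overlap squares that are translates of $[0,L]\times[0,L]$, one sees that the simultaneous occurrence of a left-right crossing in each of these sub-boxes forces a left-right crossing of the whole rectangle. This is the key geometric step, and it is where I expect the main care to be needed: one must check that the overlaps really do force the individual crossings to link up into a single connected open path spanning the full width $LT$, and that the translation-invariance of $P_q$ lets us replace each $P_q$ of a translate by $P_q(A_2)$ or $P_q(A_1)$ respectively.

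Given this decomposition, each of the $T-1$ "length-$2$" crossing events is an increasing event with probability $P_q(A_2)$ by translation invariance, and each of the $T-2$ "overlap-square" crossing events is an increasing event with probability $P_q(A_1)$. Applying the FKG inequality to the intersection of all these $2T-3$ increasing events yields
\begin{align}
P_q(A_T) \geq P_q(A_1)^{T-2} P_q(A_2)^{T-1},
\end{align}
which is exactly \eqref{eq:PATineq}. (If $T$ is not an integer one handles the fractional part by including one more rectangle of the appropriate smaller aspect ratio, or simply by monotonicity $P_q(A_T)\geq P_q(A_{\lceil T\rceil})$ together with the integer case; I would state the lemma for integer $T$ and note the extension.)

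For the second inequality, I would simply substitute the bounds of Fact~\ref{Fact:rectangle}. From $P_q(A_1)=\tau$ and $P_q(A_2)\geq P_q(A_1)P_q(A_{3/2})\geq \tau(1-\sqrt{1-\tau})^3$, we get
\begin{align}
P_q(A_T) &\geq \tau^{T-2}\bigl(\tau(1-\sqrt{1-\tau})^3\bigr)^{T-1}
= \tau^{2T-3}(1-\sqrt{1-\tau})^{3(T-1)},
\end{align}
as claimed. The only genuinely non-routine part of the whole argument is the gluing geometry in the first paragraph; the rest is an application of FKG plus bookkeeping, and the final substitution is a one-line computation. I would also remark that, combined with the supercritical estimate $P_q(A_1)\geq 1-e^{-\alpha L}$, this lemma shows $P_q(A_T)$ stays bounded away from $0$ — indeed exponentially close to $1$ after one more application of Fact~\ref{fct:stability_events} — for aspect ratios $T$ up to $e^{\Omega(L)}$, which is exactly what is needed to run an exponentially long computation along measurement-free paths in the complex phase.
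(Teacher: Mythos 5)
Your plan follows the same route as the paper: cover $[0,L]\times[0,LT]$ by $T-1$ overlapping translates of $[0,L]\times[0,2L]$ plus the $T-2$ intermediate $L\times L$ squares, glue the crossings, apply FKG to the resulting $2T-3$ increasing events, and then substitute Fact~\ref{Fact:rectangle}. The FKG step, the counting, and the final substitution $P_q(A_2)\geq \tau(1-\sqrt{1-\tau})^3$ are all correct and match the paper's proof.

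However, the one step you yourself flag as the non-routine part --- the gluing geometry --- is mis-specified as written, and in a way that makes the claimed inclusion of events false. You demand a \emph{left-right} crossing of each overlap square, i.e.\ a crossing in the same direction as the long axis. But two left-right crossings of the same $L\times L$ square need not intersect (one can run along the top, the other along the bottom), so a third left-right crossing of the overlap square does nothing to force the crossings of the two adjacent $L\times 2L$ rectangles to link up. What is needed in each intermediate square is a crossing in the \emph{perpendicular} direction --- a top-bottom crossing spanning the short side $L$ --- since by planarity a top-bottom open crossing of a box must share a vertex with every left-right open crossing of that box, and each of the two long-rectangle crossings restricts to a left-right crossing of that square. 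This is exactly how the paper states the event (``top-bottom crossings in every square except for both ends and left-right crossings in every nearest-neighbor two squares''). The fix costs nothing quantitatively: by the $90^\circ$ rotation symmetry of the square lattice, the top-bottom crossing event of an $L\times L$ square is an increasing event of probability $P_q(A_1)=\tau$, so the bound $P_q(A_T)\geq P_q(A_1)^{T-2}P_q(A_2)^{T-1}$ and everything downstream survive unchanged. But as literally written, your event inclusion does not imply $A_T$.
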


\begin{proof}
To start with, note that if there are top-bottom crossings in every square except for both ends and  left-right crossings in every nearest-neighbor two squares, then there exists at least one left-right crossing over the entire rectangle, or, graphically,
\begin{center}
\includegraphics[width=85mm]{Percolation_rectangle.edited.png},
\end{center}
where the broken line is the left-right crossing over the rectangle. 
Besides, 
\begin{align}
\left( \bigcap_{i=1}^{T-2} A_1  \right) \bigcap \left( \bigcap_{i=1}^{T-2}  A_2  \right) \subset A_T,
\end{align}
and it can be straightforwardly shown that $A_1$ and $A_2$ are increasing events.
Hence, by the FKG inequality, the inequality Eq.~(\ref{eq:PATineq}) holds, and together with Fact~\ref{Fact:rectangle}, we arrive at the validity of the second inequality as well.
\end{proof}

From the above lemma together with Fact~\ref{fct:stability_events}, we can guarantee a linear number of edge-disjoint left-right crossings.
Let $M_{L}^T$ be the maximal number of edge-disjoint open left-right crossings of the box $[0,L]\times [0,LT]$.

\begin{lem}[Linear paths] \label{lem:linear_paths}
For $q > 1/2$, there exists strictly positive constants 
$\beta (q)$ and $\gamma (q)$, which are independent {of} 
$n$, such that
\begin{equation}
P_p(M_{L}^T \geq \beta(q)L )\geq 1- e^{- \gamma(q)L} 
\end{equation}
for all $L\geq 1$.
\end{lem}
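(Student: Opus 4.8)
The plan is to combine Lemma~\ref{lem:path_large_aspect}, which guarantees that a single left-right crossing of the rectangle $[0,L]\times[0,LT]$ occurs with probability bounded below by $\tau^{2T-3}(1-\sqrt{1-\tau})^{3(T-1)}$, with the stability-of-increasing-events machinery of Fact~\ref{fct:stability_events}, exactly as in the proof of Fact~\ref{fct:linearnum} for the square. First I would record that the event $A_T$ (the existence of a left-right crossing of the rectangle) is an increasing event, since adding open edges never destroys a crossing; and that its interior $J_r(A_T)$ is precisely the event that there are at least $r+1$ edge-disjoint left-right crossings, i.e.\ $\{M_L^T \geq r+1\} \supseteq J_r(A_T)$. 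So it suffices to lower bound $P_q(J_r(A_T))$ for $r = \beta(q)L$.

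Next I would pick auxiliary probabilities $1/2 < q' < q \le 1$ and apply Fact~\ref{fct:stability_events} with $q_1=q'$, $q_2=q$, and $r=\beta(q)L$:
\begin{align}
  1 - P_q\bigl(J_{\beta(q)L}(A_T)\bigr)
  &\le \left(\frac{q}{q-q'}\right)^{\beta(q)L}\bigl(1 - P_{q'}(A_T)\bigr).
\end{align}
Then I would insert the bound on $1-P_{q'}(A_T)$ coming from Lemma~\ref{lem:path_large_aspect}. To make this useful we need $1-P_{q'}(A_T) \le e^{-\alpha' L}$ for some $\alpha'>0$; this requires controlling the aspect ratio $T$. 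Using $P_{q'}(A_1) \ge 1-e^{-\alpha(q')L}$ from Eq.~\eqref{Eq:super_critical_square} and plugging $\tau = 1-e^{-\alpha(q')L}$ into Lemma~\ref{lem:path_large_aspect}, one gets roughly $P_{q'}(A_T) \ge \bigl(1-e^{-\alpha(q')L}\bigr)^{O(T)}\bigl(1-e^{-\alpha(q')L/2}\bigr)^{O(T)} \ge 1 - O(T)e^{-\alpha(q')L/2}$ for $L$ large, using $(1-x)^k \ge 1-kx$. Hence $1-P_{q'}(A_T) \le O(T)\,e^{-\alpha(q')L/2}$, which stays exponentially small in $L$ as long as $T$ grows no faster than exponentially in $L$ (and in the application $T$ is at most polynomial in $n$ while $L=\Omega(n)$, so this is comfortably satisfied). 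Substituting back,
\begin{align}
  1 - P_q\bigl(M_L^T \ge \beta(q)L\bigr)
  &\le O(T)\exp\!\left(-L\left(\tfrac{\alpha(q')}{2} - \beta(q)\log\tfrac{q}{q-q'}\right)\right).
\end{align}

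Finally, for a fixed $q>1/2$ I would choose $q'$ close enough to $1/2$ and $\beta(q)>0$ small enough that $\gamma_0(q) := \alpha(q')/2 - \beta(q)\log\frac{q}{q-q'} > 0$, exactly mirroring the constant choice at the end of the proof of Fact~\ref{fct:linearnum}; the prefactor $O(T)$ is absorbed by slightly decreasing the exponent to some $\gamma(q) \in (0,\gamma_0(q))$ (valid for all $L\ge 1$ after adjusting constants, using $T \le e^{o(L)}$). This yields $P_q(M_L^T \ge \beta(q)L) \ge 1-e^{-\gamma(q)L}$ as claimed. The main obstacle I anticipate is the bookkeeping around the aspect ratio $T$: one must verify that the polynomial-in-$T$ losses incurred by iterating the FKG inequality across $\Theta(T)$ sub-squares in Lemma~\ref{lem:path_large_aspect} do not overwhelm the exponential-in-$L$ gains, which is why it is essential that in the intended application $T$ is at most $e^{\Omega(n)}$ while $L = \Omega(n)$; keeping this quantifier dependence explicit is the delicate part, whereas the rest is a direct transcription of the square-lattice argument.
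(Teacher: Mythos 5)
Your proposal is correct and follows essentially the same route as the paper's proof: bound $1-P_{q'}(A_T)$ via Lemma~\ref{lem:path_large_aspect}, Eq.~(\ref{Eq:super_critical_square}) and Bernoulli's inequality, then boost to $\beta(q)L$ edge-disjoint crossings with Fact~\ref{fct:stability_events} and choose $q'$ and $\beta(q)$ so that the exponent $\alpha(q')/2-\beta(q)\log\frac{q}{q-q'}$ dominates the $\log T$ prefactor. Your explicit bookkeeping of how the aspect ratio $T$ enters the constant $\gamma(q)$ is, if anything, slightly more careful than the paper's write-up, which folds the $\frac{1}{L}\log 6(T-1)$ term into $\gamma(q)$ and only addresses the admissible range $T\leq e^{O(n)}$ in the remark following the lemma.
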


\begin{proof}
For $q> 1/2$, because of 
\begin{equation}
\tau \geq  1-e^{-\alpha(p) L}
\end{equation}
and 
 Lemma \ref{lem:path_large_aspect}, 
we obtain
\begin{align}
P_q(A_{T}) \geq & \left( 1 - e^{-\frac{\alpha(q)}{2} L  } \right)^{3(T-1)}   \times \left( 1 - e^{-\alpha(q) L  } \right)^{(2T-3)}\notag \\
\nonumber
\geq & \left( 1 - e^{-\frac{\alpha(q)}{2} L + \log3(T-1)  } \right) \notag \times \left( 1 - e^{-\alpha(q) L + \log(2T-3)  } \right)\nonumber\\
\nonumber
\geq &  1 - e^{-\frac{\alpha(q)}{2} L + \log3(T-1)}- e^{-\alpha(q) L + \log(2T-3)  }\\
\nonumber
\geq &  1 - 2e^{-\frac{\alpha(q)}{2} L + \log3(T-1)},
\nonumber
\end{align}
where, in the second inequality, we have used the Bernoulli's inequality
\begin{equation}
(1+x)^y \geq 1+xy
\end{equation}
for any real numbers $x \geq -1, y \geq 1$.
Using Fact \ref{fct:stability_events}, this implies that the number of left-right
crossings scales in the system size.
Specifically,
\begin{align}
P_q(M_{L}^T \geq \beta(q) L ) 
\geq 1 - e^{-\left( \frac{\alpha(q')}{2} -  \beta(q) \log\frac{q}{q-q'} \right)L+ \log6(T-1)}
\end{align}
where $q>q'>1/2$. 
We find that 
\begin{align}
\gamma(q)= \frac{\alpha(q')}{2} -  \beta(q) \log\frac{q}{q-q'}-\frac{1}{L}\log6(T-1).
\end{align}
We can pick a strictly positive constant $\beta(p)>0$ and $\frac{1}{2} < q' < q$ such  that  $\gamma(q)$ is also strictly positive.
\end{proof}
This lemma ensures that at below critical measurement probability, there exists a linear number of measurement-free paths until some exponential time.
Specifically, if \begin{align}
T<
\frac{e^{\frac{\gamma'(q)}{2}n}}{3} 
\end{align}
for some strictly positive 
\begin{align}
\gamma'(q)<\frac{\alpha(q')}{2} -  \beta(q) \log\frac{q}{q-q'},
\end{align}
 there exists $\lfloor \beta(q)L\rfloor$ 
 left-right crossings on a rectangle almost surely in the large $L$ limit.

\subsection{Subcritical phase} \label{App:subcritical}
Next, we consider the size of a set of connected open edges in the subcritical phase $q<q_c$.
The \emph{open cluster} $C(x)$ at a vertex $x$ of the square lattice is defined by the set of the connected open edges containing $x$, and $|C(x)|$ denotes the number of the edges in $C(x)$.
Then, the probability that the size $|C(x)|$ is large is upper bounded as follows. 

\begin{fct}[Theorem 6.75 in Ref.~\cite{grimmett1999percolation}] \label{Fact:exp_decay}
Let $C(x)$ be an open cluster containing a vertex $x$.
If $q<q_c=\frac{1}{2}$, there exists $\lambda(q)>0$ such that for any integer $k \geq 1$,
\begin{align}
P_q(|C(x)| \geq k) \leq  e^{-k\lambda(q)}.
\end{align}
\end{fct}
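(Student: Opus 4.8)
Since Fact~\ref{Fact:exp_decay} is a classical statement of subcritical percolation theory (Theorem~6.75 of Ref.~\cite{grimmett1999percolation}), in the paper it is used as a black box; were one to reprove it, here is the route I would take. By translation invariance I would take $x$ to be the origin. The convenient target is to establish that the probability generating function $\mathbb{E}_q\!\left[z^{|C(0)|}\right]$ is finite for some $z=z(q)>1$; Markov's inequality then gives $P_q(|C(0)|\ge k)\le z^{-k}\,\mathbb{E}_q[z^{|C(0)|}]$, which is the claimed bound with $\lambda(q)=\log z(q)$ up to an irrelevant additive constant.

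First I would dispose of the easy regime $q<1/\lambda_{\mathrm{an}}$, where $\lambda_{\mathrm{an}}$ is the bond--animal growth constant of the square lattice. There $\{|C(0)|\ge k\}$ forces a connected subgraph through the origin carrying $k$ open edges, so a union bound over such ``bond animals'' gives $P_q(|C(0)|\ge k)\le q^{k}\cdot\#\{\text{bond animals of size }k\text{ through }0\}\le (q\lambda_{\mathrm{an}})^{k}$, which decays exponentially. This elementary argument, however, stops well short of $q_c=1/2$, and I would stress that exponential decay of the cluster \emph{radius} alone does not suffice either, since in two dimensions it yields only the stretched--exponential bound $P_q(|C(0)|\ge k)\le e^{-c k^{1/2}}$.

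To cover the full subcritical range $q<q_c$ --- which in our two--dimensional nearest--neighbour model is exactly $q<1/2$ by Kesten's theorem --- I would invoke the deep input that the susceptibility $\chi(q):=\mathbb{E}_q|C(0)|$ is finite for every $q<q_c$: this is the Aizenman--Barsky/Menshikov theorem, obtained from differential inequalities for the percolation order parameters and ruling out an intermediate phase in which $\theta(q)=0$ while $\chi(q)=\infty$. I would then bootstrap ``finite mean'' up to ``exponential tail'': exploring $C(0)$ in breadth--first order and combining the van den Berg--Kesten inequality with the summability $\sum_k P_q(|C(0)|\ge k)=\chi(q)<\infty$ yields an approximate submultiplicativity $P_q(|C(0)|\ge m+n)\le P(m)\,P_q(|C(0)|\ge m)\,P_q(|C(0)|\ge n)$ with $P(m)$ polynomial in $m$; since $\chi(q)<\infty$ forces $k\,P_q(|C(0)|\ge k)\to 0$, this inequality iterates along dyadic scales to give $P_q(|C(0)|\ge k)\le e^{-\lambda(q)k}$ with $\lambda(q)>0$. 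The hard part will be the Aizenman--Barsky/Menshikov input --- the animal count and the submultiplicativity bootstrap are comparatively routine. In two dimensions one could instead replace that input by Kesten's identification $q_c=1/2$ together with RSW--type box--crossing estimates, but this remains substantial, which is why the paper simply cites Ref.~\cite{grimmett1999percolation}.
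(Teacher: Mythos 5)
The paper gives no proof of this statement---it is imported verbatim as Theorem~6.75 of Grimmett's monograph---and your sketch faithfully reproduces the standard route by which it is established there: the one genuinely deep ingredient is the Menshikov/Aizenman--Barsky theorem that the susceptibility $\chi(q)=\mathbb{E}_q|C(0)|$ is finite throughout $q<q_c$ (or, in two dimensions, Kesten's $q_c=1/2$ together with box-crossing estimates), after which a BK-inequality bootstrap upgrades the finite mean to an exponential tail. Your accounting of what is elementary (the lattice-animal bound for small $q$, the absorption of prefactors into $\lambda(q)$, the insufficiency of radius decay alone) versus what is hard is accurate, so there is nothing to correct.
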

The independence of $x$ in the right-hand side is due to translational invariance of the square lattice.
Now, we upper-bound the probability of the event that all size of $m$ open clusters, 
\begin{align} \label{Eq:open_clusters}
 \{C(x_1), C(x_2), \je{\dots}, C(x_m) \}, 
\end{align}
is upper-bounded by an integer $k$, which we define by
\begin{align}
P_q\left( \bigcup_{i=1}^m\{|C(x_i)| \geq k\} \right). 
\end{align}

\begin{lem}[Small open clusters] \label{lem:Small_open_clusters}
For $q < 1/2$, there exists $\lambda(q)>0$ such that for any real number  $0< \epsilon <1$ and any integer $k \geq \frac{1}{\lambda(q)} \log(\frac{m}{\epsilon})$,
\begin{equation} \label{Eq:small_openclusters}
P_q\left( \bigcup_{i=1}^m\{|C(x_i)| \geq k\} \right) \leq \epsilon.
\end{equation}
\end{lem}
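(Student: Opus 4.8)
The plan is to combine a union bound over the $m$ clusters with the exponential decay of single-cluster sizes established in Fact~\ref{Fact:exp_decay}. First I would bound the probability of the union by the sum of the individual probabilities,
\begin{equation}
P_q\left( \bigcup_{i=1}^m\{|C(x_i)| \geq k\} \right) \leq \sum_{i=1}^m P_q\bigl(|C(x_i)| \geq k\bigr) .
\end{equation}

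Next, I would apply Fact~\ref{Fact:exp_decay} to each term. Since $q < q_c = 1/2$, that fact supplies a constant $\lambda(q)>0$ with $P_q(|C(x)| \geq k) \leq e^{-k\lambda(q)}$ for every integer $k\geq 1$, and the bound is independent of the base vertex by translational invariance of the square lattice. Substituting gives
\begin{equation}
P_q\left( \bigcup_{i=1}^m\{|C(x_i)| \geq k\} \right) \leq m\, e^{-k\lambda(q)} .
\end{equation}

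Finally, I would convert the hypothesis on $k$ into the claimed bound: if $k \geq \frac{1}{\lambda(q)}\log(m/\epsilon)$, then $k\lambda(q) \geq \log(m/\epsilon)$, hence $e^{-k\lambda(q)} \leq \epsilon/m$, and therefore $m\, e^{-k\lambda(q)} \leq \epsilon$, which is exactly Eq.~(\ref{Eq:small_openclusters}). This also pins down the $\lambda(q)$ in the statement of the lemma as the one furnished by Fact~\ref{Fact:exp_decay}.

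Since every ingredient is already available, there is no substantive obstacle; the only points worth a sentence of care are that the decay constant $\lambda(q)$ in Fact~\ref{Fact:exp_decay} is uniform in the base vertex (so the $m$ clusters are all controlled by the same exponent), and that the requirement $k \geq \frac{1}{\lambda(q)}\log(m/\epsilon)$ is imposed on the integer $k$, so no rounding issue arises.
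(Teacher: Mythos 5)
Your proposal is correct and follows exactly the paper's own argument: a union bound over the $m$ clusters, the exponential tail bound from Fact~\ref{Fact:exp_decay} applied uniformly in the base vertex, and then the substitution of the hypothesis $k \geq \frac{1}{\lambda(q)}\log(m/\epsilon)$ to conclude $m\,e^{-k\lambda(q)} \leq \epsilon$. No gaps.
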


\begin{proof}
By the union bound and Fact~\ref{Fact:exp_decay}, we obtain
\begin{align}
P_q\left( \bigcup_{i=1}^m\{|C(x_i)| \geq k\} \right) \leq & \sum_{i=1}^m P_q\left( |C(x_i)| \geq k \right) \\
\leq & me^{-k \lambda(q)}.
\end{align}
Then, by the condition 
\begin{align}
k \geq \frac{1}{\lambda(q)}  \log(\frac{m}{\epsilon}), 
\end{align}
we obtain Eq.~(\ref{Eq:small_openclusters}).
\end{proof}

 \subsection{Tilted lattice and untilted lattice} \label{subsec:tilted and untilted}

As shown in Fig.~\ref{fig:Circuit_to_percolation} in the main text, a monitored circuit is mapped to a tilted square lattice.
In percolation theory, however, bond percolation on a square lattice is ordinarily considered for an untilted square lattice, consisting of horizontal edges. We show how they are related by proving that the critical points of them are same.

We consider bond percolation on an $2n \times 2n$ ordinary square lattice above the critical point, and the $n \times n$ tilted square whose corners are at the middle of the edges of the $2n \times 2n$ square lattice.
Then, with probability $1-e^{-\Omega(n)}$, there exist at least one left-right and one top-bottom crossings in rectangles $[0, \lfloor \frac{n}{2} \rfloor] \times [0, 2n]$  just above the middle horizontal line and right next to the middle vertical line, respectively, or graphically,
\begin{center}
\includegraphics[width=45mm]{tilted_lattice.png},
\end{center}
where the square with broken lines is the $n \times n$ tilted square, and we assumed the crossings are straight.
It implies by the FKG inequality that there exists at least one left-right crossing in the tilted square lattice. Also, one can show conversely that left-right and top-bottom crossings in a $2n \times 2n$ tilted lattice implies a crossing in an $n \times n$ ordinary square lattice inside it by the same argument. Therefore, the critical points of percolation on both lattices are the same.

\section{Lower bound on approximate complexity} \label{sec:App:approximate_complexity}

{In this section, we give a lower bound on the approximate complexity of some class of output states in monitored random circuits at a small measurement rate.}
\begin{dfn}[$C_{\epsilon}$ approximate complexity]
    The $\epsilon$-approximate complexity of a quantum state $\ket{\psi}$, $C_{\epsilon}(\ket{\psi})$, is the minimum number of two-qubit gates to generate $\ket{\psi}$ up to an error $\epsilon$ in the trace distance from an initial product state.
    The gates can be any elements of $SU(4)$ and the circuit may have any chosen connectivity.
\end{dfn}

{To lower-bound the approximate complexity, we again use the percolation argument, which is in Section~\ref{sec:proof-Cs-complex-phase}. We consider a regime of a small enough measurement rate where there are $\Omega(n)$ causal measurement-free paths and they meet at some space-time points. We also assume that any pair of nearest-neighbour paths meet $\Omega(t)$ time so that we can embed a unitary circuit with the brick-wall architecture and circuit depth $\Omega(t)$. In this embedding, we apply the suitable two-qubit along the paths, for example we applied I, SWAP, and random two-qubit gates at the points they meet, in the broken circles as shown below:
\begin{align}
    \includegraphics[width=70mm]{33.pdf}.
\end{align}
Moreover, we fix all unitary gates outside the paths as the identity gate $I$.}

Let $S'^M$ be the set of all normalized output states generated by the monitored circuit with measurement configuration $M$ and initial state $\ket{0^n}$. 
In the above setting, we can embed $\Omega(t)$-depth random unitary circuit into the monitored circuit, and we have $S'^{M}= \{ \ket{\psi} \otimes \ket{0_{\rm{rest}}} \}$, where $\ket{\psi}$ is an output state of $\Omega(n)$-qubit $\Omega(t)$-depth random unitary circuit and $\ket{0_{\rm{rest}}}$ is the state outside the measurement-free paths. 
To lower-bound on the $C_{\epsilon}$-complexity  of $\ket{\psi} \otimes \ket{0_{\rm{rest}}}$, we use the bound on the $C_{\epsilon}$-complexity of random unitary circuits as the following.
The output states of random unitary circuits with the brick-wall architecture, the circuit depth $t$, and an initial pure state form the approximate state $k$-design in $O(nk^{{5+o(1)}})$-depth with an approximation error constant in $n$ \cite{Haferkamp2022randomquantum}.
Moreover, for $\epsilon$ which is constant in $n$, the states from the approximate state $k$-design have at least $\Omega(k)$ $C_{\epsilon}$-complexity at probability $1-e^{-\Omega(n)}$ 
until depth $\Omega(2^{n})$ \cite{brandao2021models}.
Therefore, the $C_{\epsilon}$-complexity of outputs of the above monitored circuit with depth $t$ is lower-bounded by $\Omega( \left( \frac{t}{n} \right)^{\frac{1}{5+o(1)}})$ at probability $1-e^{-\Omega(n)}$ until $t\geq 2^{\Omega(n)}$.

{We note that the arguments here do not prove the $C_\epsilon$-complexity growth of a generic output state of the above monitored circuit, because we only looked at special states by choosing unitaries outside the measurement-free paths as identities. 
It becomes a nontrivial problem even if we perturb the unitaries outside the paths around identities because measurements and the following normalization can significantly enlarge the perturbation. 
We leave it an open problem to lower-bound the $C_\epsilon$-complexity in monitored circuits in the general setting.}

\printbibliography

@article{PhysRevLett.126.100603,
  title = {Ergodic and nonergodic dual-unitary quantum circuits with arbitrary local Hilbert space dimension},
  author = {Claeys, Pieter W. and Lamacraft, Austen},
  journal = {Phys. Rev. Lett.},
  volume = {126},
  pages = {100603},
  numpages = {6},
  year = {2021},
  publisher = {American Physical Society},
  doi = {10.1103/PhysRevLett.126.100603},
  %%url = {https://link.aps.org/doi/10.1103/PhysRevLett.126.100603}
}

@article{PhysRevResearch.4.043212,
  title = {Exact dynamics in dual-unitary quantum circuits with projective measurements},
  author = {Claeys, Pieter W. and Henry, Marius and Vicary, Jamie and Lamacraft, Austen},
  journal = {Phys. Rev. Res.},
  volume = {4},
  pages = {043212},
  numpages = {13},
  year = {2022},
  publisher = {American Physical Society},
  doi = {10.1103/PhysRevResearch.4.043212},
  %%url = {https://link.aps.org/doi/10.1103/PhysRevResearch.4.043212}
}

@article{PhysRevX.12.041002,
  title = {Entanglement and charge-sharpening transitions in U(1) symmetric monitored quantum circuits},
  author = {Agrawal, Utkarsh and Zabalo, Aidan and Chen, Kun and Wilson, Justin H. and Potter, Andrew C. and Pixley, J. H. and Gopalakrishnan, Sarang and Vasseur, Romain},
  journal = {Phys. Rev. X},
  volume = {12},
  pages = {041002},
  numpages = {29},
  year = {2022},
  publisher = {American Physical Society},
  doi = {10.1103/PhysRevX.12.041002},
  %%url = {https://link.aps.org/doi/10.1103/PhysRevX.12.041002}
}

@article{PhysRevX.9.031009,
  title = {Measurement-induced phase transitions in the dynamics of entanglement},
  author = {Skinner, Brian and Ruhman, Jonathan and Nahum, Adam},
  journal = {Phys. Rev. X},
  volume = {9},
  pages = {031009},
  numpages = {21},
  year = {2019},
  publisher = {American Physical Society},
  doi = {10.1103/PhysRevX.9.031009},
  %%url = {https://link.aps.org/doi/10.1103/PhysRevX.9.031009}
}

@article{SusskindEntanglement,
  title={Entanglement is not enough},
  author={Susskind, Leonard},
  doi={10.1002/prop.201500095},
  journal={Fortschritte Phy.},
  volume={64},
  number={1},
  pages={49--71},
  year={2016},
  publisher={Wiley Online Library}
}

@article{PhysRevLett.127.020501,
  title = {Entangling power and quantum circuit complexity},
  author = {Eisert, Jens},
  journal = {Phys. Rev. Lett.},
  volume = {127},
  pages = {020501},
  numpages = {5},
  year = {2021},
  publisher = {American Physical Society},
  doi = {10.1103/PhysRevLett.127.020501},
  %%url = {https://link.aps.org/doi/10.1103/PhysRevLett.127.020501}
}

@article{PhysRevX.10.041020,
  title = {Dynamical purification phase transition induced by quantum measurements},
  author = {Gullans, Michael J. and Huse, David A.},
  journal = {Phys. Rev. X},
  volume = {10},
  pages = {041020},
  numpages = {28},
  year = {2020},
  publisher = {American Physical Society},
  doi = {10.1103/PhysRevX.10.041020},
  %url = {https://link.aps.org/doi/10.1103/PhysRevX.10.041020}
}

@article{PhysRevB.101.104302,
  title = {Measurement-induced criticality in random quantum circuits},
  author = {Jian, Chao-Ming and You, Yi-Zhuang and Vasseur, Romain and Ludwig, Andreas W. W.},
  journal = {Phys. Rev. B},
  volume = {101},
  pages = {104302},
  numpages = {11},
  year = {2020},
  publisher = {American Physical Society},
  doi = {10.1103/PhysRevB.101.104302},
  %url = {https://link.aps.org/doi/10.1103/PhysRevB.101.104302}
}

@article{lavasani2021measurement,
  title={Measurement-induced topological entanglement transitions in symmetric random quantum circuits},
  author={Lavasani, Ali and Alavirad, Yahya and Barkeshli, Maissam},
  journal={Nature Phys.},
  volume={17},
  doi={10.1038/s41567-020-01112-z},
  number={3},
  pages={342--347},
  year={2021},
  publisher={Nature Publishing Group}
}

@article{PhysRevLett.125.030505,
  title = {Quantum error correction in scrambling dynamics and measurement-induced phase transition},
  author = {Choi, Soonwon and Bao, Yimu and Qi, Xiao-Liang and Altman, Ehud},
  journal = {Phys. Rev. Lett.},
  volume = {125},
  pages = {030505},
  numpages = {6},
  year = {2020},
  publisher = {American Physical Society},
  doi = {10.1103/PhysRevLett.125.030505},
  %url = {https://link.aps.org/doi/10.1103/PhysRevLett.125.030505}
}

@article{PhysRevX.11.011030,
  title = {Entanglement phase transitions in measurement-only dynamics},
  author = {Ippoliti, Matteo and Gullans, Michael J. and Gopalakrishnan, Sarang and Huse, David A. and Khemani, Vedika},
  journal = {Phys. Rev. X},
  volume = {11},
  pages = {011030},
  numpages = {23},
  year = {2021},
  publisher = {American Physical Society},
  doi = {10.1103/PhysRevX.11.011030},
  %url = {https://link.aps.org/doi/10.1103/PhysRevX.11.011030}
}

@article{PRXQuantum.2.010352,
  title = {{Measurement and entanglement phase transitions in all-to-all quantum circuits, on quantum trees, and in Landau-Ginsburg theory}},
  author = {Nahum, Adam and Roy, Sthitadhi and Skinner, Brian and Ruhman, Jonathan},
  journal = {PRX Quantum},
  volume = {2},
  pages = {010352},
  numpages = {70},
  year = {2021},
  publisher = {American Physical Society},
  doi = {10.1103/PRXQuantum.2.010352},
  %url = {https://link.aps.org/doi/10.1103/PRXQuantum.2.010352}
}

@article{PhysRevResearch.3.023200,
  title = {Measurement-protected quantum phases},
  author = {Sang, Shengqi and Hsieh, Timothy H.},
  journal = {Phys. Rev. Research},
  volume = {3},
  pages = {023200},
  numpages = {9},
  year = {2021},
  publisher = {American Physical Society},
  doi = {10.1103/PhysRevResearch.3.023200},
  %url = {https://link.aps.org/doi/10.1103/PhysRevResearch.3.023200}
}

@article{PhysRevB.102.054302,
  title = {Measurement-induced quantum criticality under continuous monitoring},
  author = {Fuji, Yohei and Ashida, Yuto},
  journal = {Phys. Rev. B},
  volume = {102},
  pages = {054302},
  numpages = {14},
  year = {2020},
  publisher = {American Physical Society},
  doi = {10.1103/PhysRevB.102.054302},
  %url = {https://link.aps.org/doi/10.1103/PhysRevB.102.054302}
}

@article{PhysRevLett.127.140601,
  title = {{Measurement-induced phase transition in the monitored Sachdev-Ye-Kitaev model}},
  author = {Jian, Shao-Kai and Liu, Chunxiao and Chen, Xiao and Swingle, Brian and Zhang, Pengfei},
  journal = {Phys. Rev. Lett.},
  volume = {127},
  pages = {140601},
  numpages = {7},
  year = {2021},
  publisher = {American Physical Society},
  doi = {10.1103/PhysRevLett.127.140601},
  %url = {https://link.aps.org/doi/10.1103/PhysRevLett.127.140601}
}

@article{PhysRevB.101.104301,
  title = {Theory of the phase transition in random unitary circuits with measurements},
  author = {Bao, Yimu and Choi, Soonwon and Altman, Ehud},
  journal = {Phys. Rev. B},
  volume = {101},
  pages = {104301},
  numpages = {26},
  year = {2020},
  publisher = {American Physical Society},
  doi = {10.1103/PhysRevB.101.104301},
  %url = {https://link.aps.org/doi/10.1103/PhysRevB.101.104301}
}

@book{grimmett1999percolation,
  title={Percolation},
  author={Grimmett, Geoffrey},
  booktitle={Percolation},
  year={1999},
  doi={10.1007/978-3-662-03981-6},
  publisher={Springer}
}

@article{browne2008phase,
  title={Phase transition of computational power in the resource states for one-way quantum computation},
  author={Browne, Daniel E. and Elliott, Matthew B. and Flammia, Steven T. and Merkel, Seth T. and Miyake, Akimasa and Short, Anthony J.},
  journal={New J. Phys.},
  volume={10},
  doi={10.1088/1367-2630/10/2/023010},
  number={2},
  pages={023010},
  year={2008},
  publisher={IOP Publishing}
}

@inbook{KielingPercolation,
year=2009,
    title={Percolation in quantum computation and communication},
    author={Konrad Kieling and Jens Eisert},
    booktitle={Quantum and semi-classical percolation and breakdown in disordered solids},
    series={Lecture Notes in Physics},
    pages = {287-319},
    chapter = {10},
    doi={10.1007/978-3-540-85428-9},
    publisher={Springer}}

@article{haferkamp2022linear,
  title={Linear growth of quantum circuit complexity},
  author={Haferkamp, Jonas and Faist, Philippe and Kothakonda, Naga BT and Eisert, Jens and Yunger Halpern, Nicole},
  doi={10.1038/s41567-022-01539-6},
  journal={Nature Phys.},
  volume={18},
  number={5},
  pages={528--532},
  year={2022},
  publisher={Nature Publishing Group}
}

@article{roberts2017chaos,
  title={Chaos and complexity by design},
  author={Roberts, Daniel A. and Yoshida, Beni},
  journal={JHEP},
  volume={2017},
  number={4},
  pages={1--64},
  doi={10.1007/JHEP04(2017)121},
  year={2017},
  publisher={Springer}
}

@article{BigComplexity,
 	journal={SciPost Phys.},volume= 6,pages= 034, year=2019,
  title ={Complexity and entanglement for thermofield double states},
author={Shira Chapman and Jens Eisert and Lucal Hackl and Michal P. Heller and Ro
        Jefferson and Hugo Marrochio and Robert C. Myers},
	doi={10.21468/SciPostPhys.6.3.034},
}

@article{brown2018second,
  title={Second law of quantum complexity},
  author={Brown, Adam R. and Susskind, Leonard},
  journal={Phys. Rev. D},
  volume={97},
    doi = {10.1103/PhysRevD.97.086015},
  number={8},
  pages={086015},
  year={2018},
  publisher={APS}
}

@misc{li2022short,
      title={Short proofs of linear growth of quantum circuit complexity}, 
      author={Zhi Li},
      year={2022},
      eprint={2205.05668},
      archivePrefix={arXiv},
      primaryClass={quant-ph},
      doi={
https://doi.org/10.48550/arXiv.2205.05668}, 
}

@article{brandao2021models,
  title={Models of quantum complexity growth},
  author={Brand{\~a}o, Fernando G. S. L. and Chemissany, Wissam and Hunter-Jones, Nicholas and Kueng, Richard and Preskill, John},
  journal={PRX Quantum},
    doi = {10.1103/PRXQuantum.2.030316},
  volume={2},
  number={3},
  pages={030316},
  year={2021},
  publisher={APS}
}

@article{li2018quantum,
  title={Quantum Zeno effect and the many-body entanglement transition},
  author={Li, Yaodong and Chen, Xiao and Fisher, Matthew PA},
  journal={Phys. Rev. B},
  volume={98},
    doi = {10.1103/PhysRevB.98.205136},
  number={20},
  pages={205136},
  year={2018},
  publisher={APS}
}

@article{chan2019unitary,
  title={Unitary-projective entanglement dynamics},
  author={Chan, Amos and Nandkishore, Rahul M and Pretko, Michael and Smith, Graeme},
  journal={Phys. Rev. B},
    doi = {10.1103/PhysRevB.99.224307},
  volume={99},
  number={22},
  pages={224307},
  year={2019},
  publisher={APS}
}

@article{li2019measurement,
  title = {Measurement-driven entanglement transition in hybrid quantum circuits},
  author = {Li, Yaodong and Chen, Xiao and Fisher, Matthew P. A.},
  journal = {Phys. Rev. B},
  volume = {100},
  pages = {134306},
  numpages = {26},
  year = {2019},
  publisher = {American Physical Society},
  doi = {10.1103/PhysRevB.100.134306},
  %url = {https://link.aps.org/doi/10.1103/PhysRevB.100.134306}
}

@article{antonini2022holographic,
  title={Holographic measurement and bulk teleportation},
  author={Antonini, Stefano and Bentsen, Gregory and Cao, ChunJun and Harper, Jonathan and Jian, Shao-Kai and Swingle, Brian},
  journal={JHEP},
  doi={10.1007/JHEP12(2022)124},
  volume={2022},
  number={12},
  pages={1--76},
  year={2022},
  publisher={Springer}
}

@article{PhysRevX.12.021021,
  title = {Efficient classical simulation of random shallow 2D quantum circuits},
  author = {Napp, John C. and La Placa, Rolando L. and Dalzell, Alexander M. and Brand\~ao, Fernando G. S. L. and Harrow, Aram W.},
  journal = {Phys. Rev. X},
  volume = {12},
  pages = {021021},
  numpages = {32},
  year = {2022},
  publisher = {American Physical Society},
  doi = {10.1103/PhysRevX.12.021021},
  %url = {https://link.aps.org/doi/10.1103/PhysRevX.12.021021}
}

@article{mpsGarcia2007,
author = {Perez-Garcia, David and Verstraete, Frank and Wolf, Michael M. and Cirac, J. Ignacio},
title = {Matrix product state representations},
year = {2007},
publisher = {Rinton Press, Incorporated},
address = {Paramus, NJ},
volume = {7},
number = {5},
optissn = {1533-7146},
journal = {Quant. Inf. Comp.},
pages = {401–430},
numpages = {30},
doi = {https://doi.org/10.26421/QIC7.5-6-1}
}

@article{PhysRevResearch.4.L032021,
  title = {Monitoring-induced entanglement entropy and sampling complexity},
  author = {Van Regemortel, Mathias and Shtanko, Oles and Garc\'{\i}a-Pintos, Luis Pedro and Deshpande, Abhinav and Dehghani, Hossein and Gorshkov, Alexey V. and Hafezi, Mohammad},
  journal = {Phys. Rev. Res.},
  volume = {4},
  pages = {L032021},
  numpages = {6},
  year = {2022},
  publisher = {American Physical Society},
  doi = {10.1103/PhysRevResearch.4.L032021},
  %url = {https://link.aps.org/doi/10.1103/PhysRevResearch.4.L032021}
}

@article{PhysRevA.80.012304,
  title = {Exact and approximate unitary 2-designs and their application to fidelity estimation},
  author = {Dankert, Christoph and Cleve, Richard and Emerson, Joseph and Livine, Etera},
  journal = {Phys. Rev. A},
  volume = {80},
  pages = {012304},
  numpages = {6},
  year = {2009},
  publisher = {American Physical Society},
  doi = {10.1103/PhysRevA.80.012304},
  %url = {https://link.aps.org/doi/10.1103/PhysRevA.80.012304}
}

@article{PhysRevLett.99.130501,
  title = {Percolation, renormalization, and quantum computing with nondeterministic gates},
  author = {Kieling, Konrad and Rudolph, Terry and Eisert, Jens},
  journal = {Phys. Rev. Lett.},
  volume = {99},
  pages = {130501},
  numpages = {4},
  year = {2007},
  publisher = {American Physical Society},
  doi = {10.1103/PhysRevLett.99.130501}
}

@article{2007JMP....48e2104G,
       author = {{Gross}, David and {Audenaert}, Koenraad and {Eisert}, Jens},
        title = "{Evenly distributed unitaries: On the structure of unitary designs}",
      journal = {J. Math. Phys.},
     keywords = {03.65.Fd, 03.65.Ud, 03.67.Mn, 02.10.Yn, 02.20.-a, Algebraic methods, Entanglement and quantum nonlocality, Entanglement production characterization and manipulation, Matrix theory, Group theory, Quantum Physics, Mathematical Physics},
         year = 2007,
       volume = {48},
        pages = {052104-052104},
          doi = {10.1063/1.2716992},
       adsurl = {https://ui.adsabs.harvard.edu/abs/2007JMP....48e2104G},
      adsnote = {Provided by the SAO/NASA Astrophysics Data System}
}

@article{susskind2016computational,
  title={Computational complexity and black hole horizons},
  author={Susskind, Leonard},
  journal={Fortschritte Phys.},
  volume={64},
  number={1},
  doi={10.1002/prop.201500092},
  pages={24--43},
  year={2016},
  publisher={Wiley Online Library}
}

@article{PhysRevB.92.075108,
  title = {Post-measurement bipartite entanglement entropy in conformal field theories},
  author = {Rajabpour, Mohammad A.},
  journal = {Phys. Rev. B},
  volume = {92},
  pages = {075108},
  numpages = {6},
  year = {2015},
  publisher = {American Physical Society},
  doi = {10.1103/PhysRevB.92.075108},
  %url = {https://link.aps.org/doi/10.1103/PhysRevB.92.075108}
}

@article{kelly2210coherence,
title={{Coherence requirements for quantum communication from hybrid circuit dynamics}},
	author={Shane P. Kelly and Ulrich Poschinger and Ferdinand Schmidt-Kaler and Matthew P. A. Fisher and Jamir Marino},
	journal={SciPost Phys.},
	volume={15},
	pages={250},
	year={2023},
	publisher={SciPost},
	doi={10.21468/SciPostPhys.15.6.250},
	%url={https://scipost.org/10.21468/SciPostPhys.15.6.250}
}

@article{brandao2016local,
  title={Local random quantum circuits are approximate polynomial-designs},
  author={Brandão, Fernando G. S. L. and Harrow, Aram W. and Horodecki, Micha{\l}},
  journal={Commun. Math. Phys.},
  volume={346},
doi={10.1007/s00220-016-2706-8},
  pages={397--434},
  year={2016},
  publisher={Springer}
}

@article{numasawa2016epr,
  title={EPR pairs, local projections and quantum teleportation in holography},
  author={Numasawa, Tokiro and Shiba, Noburo and Takayanagi, Tadashi and Watanabe, Kento},
  journal={JHEP},
  doi={10.1007/JHEP08(2016)077},
  volume={2016},
  number = {77},
  year={2016},
  publisher={Springer}
}

@Article{AreaReview,
  title                     = {Area laws for the entanglement entropy},
  Author                   = {Jens Eisert and Marcus Cramer and Martin B. Plenio},
  doi={10.1103/RevModPhys.82.277},
  Journal                  = {Rev. Mod. Phys.},
  Year                     = {2010},
  Pages                    = {277},
  Volume                   = {82}
}

@article{PhysRevLett.129.150604,
  title = {Complexity phase diagram for interacting and long-range bosonic Hamiltonians},
  author = {Maskara, Nishad and Deshpande, Abhinav and Ehrenberg, Adam and Tran, Minh C. and Fefferman, Bill and Gorshkov, Alexey V.},
  journal = {Phys. Rev. Lett.},
  volume = {129},
  pages = {150604},
  numpages = {8},
  year = {2022},
  publisher = {American Physical Society},
  doi = {10.1103/PhysRevLett.129.150604},
  %url = {https://link.aps.org/doi/10.1103/PhysRevLett.129.150604}
}

@article{fujii2022characterizing,
  title={Characterizing quantum pseudorandomness by machine learning},
  author={Fujii, Masahiro and Kutsuzawa, Ryosuke and Suzuki, Yasunari and Nakata, Yoshihumi and Owari, Masaki},
  journal={arXiv:2205.14667},
  year={2022},
doi = {
https://doi.org/10.48550/arXiv.2205.14667
}
}

@article{aaronson2005quantum,
  title={Quantum computing, postselection, and probabilistic polynomial-time},
  doi={10.1098/rspa.2005.1546},
  author={Aaronson, Scott},
  journal={Proc. Roy. Soc. A},
  volume={461},
  number={2063},
  pages={3473--3482},
  year={2005},
  publisher={The Royal Society London}
}

@inproceedings{DBLP:conf/aaai/MitchellSL92,
  author={David G. Mitchell and Bart Selman and Hector J. Levesque},
  title={{Hard and easy distributions of SAT problems}},
  year={1992},
  cdate={694224000000},
  pages={459-465},
  %url={http://www.aaai.org/Library/AAAI/1992/aaai92-071.php},
  booktitle={AAAI},
}

@inproceedings{DBLP:conf/ijcai/CheesemanKT91,
  author={Peter C. Cheeseman and Bob Kanefsky and William M. Taylor},
  title={Where the really hard problems are},
  year={1991},
  cdate={662688000000},
  pages={331-340},
  %url={http://ijcai.org/Proceedings/91-1/Papers/052.pdf},
  booktitle={IJCAI},
}

@article{bremner2011classical,
  title={Classical simulation of commuting quantum computations implies collapse of the polynomial hierarchy},
  author={Bremner, Michael J. and Jozsa, Richard and Shepherd, Dan J.},
  journal={Proc. Roy. Soc. A},
  volume={467},
  number={2126},
  pages={459--472},
  year={2011},
  publisher={The Royal Society Publishing},
  doi = {10.1098/rspa.2010.0301}
}

@article{fujii2016computational,
  title={Computational quantum-classical boundary of noisy commuting quantum circuits},
  author={Fujii, Keisuke and Tamate, Shuhei},
  journal={Scientific Rep.},
  volume={6},
  number={1},
  doi={10.1038/srep25598},
  pages={1--15},
  year={2016},
  publisher={Nature Publishing Group}
}

@article{deshpande2018dynamical,
  title={Dynamical phase transitions in sampling complexity},
  author={Deshpande, Abhinav and Fefferman, Bill and Tran, Minh C and Foss-Feig, Michael and Gorshkov, Alexey V},
  journal={Phys. Rev. Lett.},
  volume={121},
  number={3},
  pages={030501},
  year={2018},
  publisher={APS},
  doi = {10.1103/PhysRevLett.121.030501},
}

@article{bouland2019complexity,
  title={On the complexity and verification of quantum random circuit sampling},
  author={Bouland, Adam and Fefferman, Bill and Nirkhe, Chinmay and Vazirani, Umesh},
  journal={Nature Phys.},
  volume={15},
  number={2},
  pages={159--163},
  year={2019},
  publisher={Nature Publishing Group},
  doi={10.1038/s41567-018-0318-2}
}

@article{ActuallyChaoticHamiltonians,
  title={Quantum complexity of time evolution with chaotic Hamiltonians},
  author={Balasubramanian, Vijay and DeCross, Matthew and Kar, Arjun and Parrikar, Onkar},
  journal={JHEP},
  volume={2020},
  number = {134},
  pages={1--44},
  year={2020},  doi={10.1007/JHEP01(2020)134},
}

@article{HaydenPreskill,
  title={Black holes as mirrors: quantum information in random subsystems},
  author={Hayden, Patrick and Preskill, John},
  journal={JHEP},
  volume={2007},
  pages={120},
  number = {9},
  year={2007},
  doi={10.1088/1126-6708/2007/09/120},
  publisher={IOP Publishing}
}

@article{fisher2023random,
  title={Random quantum circuits},
  author={Fisher, Matthew P. A. and Khemani, Vedika and Nahum, Adam and Vijay, Sagar},
  journal={Ann. Rev. Cond. Matt. Phys.},
  volume={14},
  pages={335--379},
  year={2023},
  doi={10.1146/annurev-conmatphys-031720-030658},
  publisher={Annual Reviews}
}

@article{BlackHoles,
  title={A random unitary circuit model for black hole evaporation},
  author={Piroli, Lorenzo and S{\"u}nderhauf, Christoph and Qi, Xiao-Liang},
  journal={JHEP},
  volume={2020},
  number = {63},
  pages={1--35},
  year={2020},
  doi={10.48550/arXiv.2002.09236},
  publisher={Springer}
}

@article{PhysRevB.102.064305,
  title = {Scrambling in random unitary circuits: Exact results},
  author = {Bertini, Bruno and Piroli, Lorenzo},
  journal = {Phys. Rev. B},
  volume = {102},
  pages = {064305},
  numpages = {25},
  year = {2020},
  publisher = {American Physical Society},
  doi = {10.1103/PhysRevB.102.064305},
  %url = {https://link.aps.org/doi/10.1103/PhysRevB.102.064305}
}

@article{PhysRevB.101.094304,
  title = {Exact dynamics in dual-unitary quantum circuits},
  author = {Piroli, Lorenzo and Bertini, Bruno and Cirac, J. Ignacio and Prosen, Tomaz},
  journal = {Phys. Rev. B},
  volume = {101},
  pages = {094304},
  numpages = {16},
  year = {2020},
  publisher = {American Physical Society},
  doi = {10.1103/PhysRevB.101.094304},
}

@article{PhysRevLett.123.210601,
  title = {Exact correlation functions for dual-unitary lattice models in $1+1$ dimensions},
  author = {Bertini, Bruno and Kos, Pavel and 
  Prosen, Tomaz},
  journal = {Phys. Rev. Lett.},
  volume = {123},
  pages = {210601},
  numpages = {6},
  year = {2019},
  publisher = {American Physical Society},
  doi = {10.1103/PhysRevLett.123.210601},
}

@article{Suzuki2022computationalpower,
  doi = {10.22331/q-2022-01-24-631},
  %url = {https://doi.org/10.22331/q-2022-01-24-631},
  title = {Computational power of one- and two-dimensional dual-unitary quantum circuits},
  author = {Suzuki, Ryotaro and Mitarai, Kosuke and Fujii, Keisuke},
  journal = {{Quantum}},
  optissn = {2521-327X},
  publisher = {{Verein zur F{\"{o}}rderung des Open Access Publizierens in den Quantenwissenschaften}},
  volume = {6},
  pages = {631},
  year = {2022}
}

@inproceedings{aaronson2011computational,
  title={The computational complexity of linear optics},
  author={Aaronson, Scott and Arkhipov, Alex},
  booktitle={Proc. 43rd. Ann. ACM Symp. Th. Comp.},
  doi = {10.1145/1993636.1993682},
  year={2011}
}

@article{ChapmanMarrochioMyers,
  title={Complexity of formation in holography},
  author={Chapman, Shira and Marrochio, Hugo and Myers, Robert C.},
  journal={JHEP},
  volume={2017},
  number = {62},
  pages={1--61},
  year={2017},
  doi={10.1007/JHEP01(2017)062},
  publisher={Springer}
}

@article{PhysRevLett.116.191301,
   title = {Holographic complexity equals bulk action?},
  author = {Brown, Adam R. and Roberts, Daniel A. and Susskind, Leonard and Swingle, Brian and Zhao, Ying},
  journal = {Phys. Rev. Lett.},
  volume = {116},
  pages = {191301},
  numpages = {5},
  year = {2016},
  publisher = {American Physical Society},
  doi = {10.1103/PhysRevLett.116.191301}
  }

@article{PhysRev90.126007,
   title = {Complexity and shock wave geometries},
  author = {Stanford, Douglas  and Susskind, Leonard},
  journal = {Phys. Rev. D},
  volume = {90},
  pages = {126007},
  numpages = {11},
  year = {2014},
  publisher = {American Physical Society},
  doi = {10.1103/PhysRev and 90.126007}
  }

@article{park2022complexity,
  title={Complexity phase transitions in instantaneous quantum polynomial-time circuits},
  author={Park, Chae-Yeun and Kastoryano, Michael J.},
journal={arXiv:2204.08898},
doi = {
https://doi.org/10.48550/arXiv.2204.08898
},
  year={2022}
}

@article{PhysRevB.100.064309,
  title = {Unitary circuits of finite depth and infinite width from quantum channels},
  author = {Gopalakrishnan, Sarang and Lamacraft, Austen},
  journal = {Phys. Rev. B},
  volume = {100},
  pages = {064309},
  numpages = {15},
  year = {2019},
  publisher = {American Physical Society},
  doi = {10.1103/PhysRevB.100.064309},
  %url = {https://link.aps.org/doi/10.1103/PhysRevB.100.064309}
}

@article{PhysRevD.106.046007,
  title = {Quantum chaos and the complexity of spread of states},
  author = {Balasubramanian, Vijay and Caputa, Pawel and Magan, Javier M. and Wu, Qingyue},
  journal = {Phys. Rev. D},
  volume = {106},
  pages = {046007},
  numpages = {28},
  year = {2022},
  publisher = {American Physical Society},
  doi = {10.1103/PhysRevD.106.046007},
  %url = {https://link.aps.org/doi/10.1103/PhysRevD.106.046007}
}

@article{PhysRevA.106.062417,
  title = {Resource theory of quantum uncomplexity},
  author = {Yunger Halpern, Nicole and Kothakonda, Naga B. T. and Haferkamp, Jonas and Munson, Anthony and Eisert, Jens and Faist, Philippe},
  journal = {Phys. Rev. A},
  volume = {106},
  pages = {062417},
  numpages = {13},
  year = {2022},
  publisher = {American Physical Society},
  doi = {10.1103/PhysRevA.106.062417},
  %url = {https://link.aps.org/doi/10.1103/PhysRevA.106.062417}
}

@article{PhysRevB.91.195143,
  title = {Quantum circuit complexity of one-dimensional topological phases},
  author = {Huang, Yichen and Chen, Xie},
  journal = {Phys. Rev. B},
  volume = {91},
  pages = {195143},
  numpages = {10},
  year = {2015},

  publisher = {American Physical Society},
  doi = {10.1103/PhysRevB.91.195143},
  %url = {https://link.aps.org/doi/10.1103/PhysRevB.91.195143}
}

@article{PhysRevResearch.2.013323,
  title = {Circuit complexity across a topological phase transition},
  author = {Liu, Fangli and Whitsitt, Seth and Curtis, Jonathan B. and Lundgren, Rex and Titum, Paraj and Yang, Zhi-Cheng and Garrison, James R. and Gorshkov, Alexey V.},
  journal = {Phys. Rev. Res.},
  volume = {2},
  pages = {013323},
  numpages = {11},
  year = {2020},
  publisher = {American Physical Society},
  doi = {10.1103/PhysRevResearch.2.013323},
  %url = {https://link.aps.org/doi/10.1103/PhysRevResearch.2.013323}
}

@misc{aaronson2016complexityquantumstatestransformations,
      title={The Complexity of Quantum States and Transformations: From Quantum Money to Black Holes}, 
      author={Scott Aaronson},
      year={2016},
      eprint={1607.05256},
      archivePrefix={arXiv},
      primaryClass={quant-ph},
      %url={https://arxiv.org/abs/1607.05256}, 
}

@article{PhysRevLett.128.010603,
  title = {Fate of measurement-induced phase transition in long-range interactions},
  author = {Minato, Takaaki and Sugimoto, Koudai and Kuwahara, Tomotaka and Saito, Keiji},
  journal = {Phys. Rev. Lett.},
  volume = {128},
  pages = {010603},
  numpages = {7},
  year = {2022},
  publisher = {American Physical Society},
  doi = {10.1103/PhysRevLett.128.010603},
  %url = {https://link.aps.org/doi/10.1103/PhysRevLett.128.010603}
}

@article{sang2022ultrafast,
  title = {Ultrafast entanglement dynamics in monitored quantum circuits},
  author = {Sang, Shengqi and Li, Zhi and Hsieh, Timothy H. and Yoshida, Beni},
  journal = {PRX Quantum},
  volume = {4},
  pages = {040332},
  numpages = {20},
  year = {2023},
  publisher = {American Physical Society},
  doi = {10.1103/PRXQuantum.4.040332},
  %url = {https://link.aps.org/doi/10.1103/PRXQuantum.4.040332}
}

@article{PhysRevLett.127.220503,
  title = {Quantum circuits assisted by local operations and classical communication: transformations and phases of matter},
  author = {Piroli, Lorenzo and Styliaris, Georgios and Cirac, J. Ignacio},
  journal = {Phys. Rev. Lett.},
  volume = {127},
  pages = {220503},
  numpages = {6},
  year = {2021},
  publisher = {American Physical Society},
  doi = {10.1103/PhysRevLett.127.220503},
  %url = {https://link.aps.org/doi/10.1103/PhysRevLett.127.220503}
}

@inproceedings{Susskind2018PiTP_three,
  title = {Three lectures on complexity and black holes},
  booktitle = {2018 Prospects in Theoretical Physics  summer school},
  author = {Susskind, Leonard},
  year = {2018},
doi = {
https://doi.org/10.48550/arXiv.1810.11563}
}

@article{Belin2022PRL_anything,
  title = {Does complexity equal anything?},
  author = {Belin, Alexandre and Myers, Robert C. and Ruan, Shan-Ming and Sárosi, Gábor and Speranza, Antony J.},
  year = {2022},
  journal = {Phys. Rev. Lett.},
volume=128, 
  pages = {081602},
  doi = {10.1103/PhysRevLett.128.081602}
}

@article{PhysRevX.8.041019,
  title = {Solution of a minimal model for many-body quantum chaos},
  author = {Chan, Amos and De Luca, Andrea and Chalker, John T.},
  journal = {Phys. Rev. X},
  volume = {8},
  pages = {041019},
  numpages = {17},
  year = {2018},
  publisher = {American Physical Society},
  doi = {10.1103/PhysRevX.8.041019},
  %url = {https://link.aps.org/doi/10.1103/PhysRevX.8.041019}
}

@article{PhysRevA.62.062311,
  title = {Quantum to classical phase transition in noisy quantum computers},
  author = {Aharonov, Dorit},
  journal = {Phys. Rev. A},
  volume = {62},
  pages = {062311},
  numpages = {13},
  year = {2000},
  publisher = {American Physical Society},
  doi = {10.1103/PhysRevA.62.062311},
  %url = {https://link.aps.org/doi/10.1103/PhysRevA.62.062311}
}

@article{SLOCC2000,
  title = {Exact and asymptotic measures of multipartite pure-state entanglement},
  author = {Bennett, Charles H. and Popescu, Sandu and Rohrlich, Daniel and Smolin, John A. and Thapliyal, Ashish V.},
  journal = {Phys. Rev. A},
  volume = {63},
  pages = {012307},
  numpages = {12},
  year = {2000},
  publisher = {American Physical Society},
  doi = {10.1103/PhysRevA.63.012307},
  %url = {https://link.aps.org/doi/10.1103/PhysRevA.63.012307}
}

@article{Multipartite_Entanglement,
  title = {Three qubits can be entangled in two inequivalent ways},
  author = {D\"ur, Wolfgang  and Vidal, Guifre and Cirac, J. Ignacio},
  journal = {Phys. Rev. A},
  volume = {62},
  pages = {062314},
  numpages = {12},
  year = {2000},
  publisher = {American Physical Society},
  doi = {10.1103/PhysRevA.62.062314},
  %url = {https://link.aps.org/doi/10.1103/PhysRevA.62.062314}
}

@article{Elementary_gate,
  title = {Elementary gates for quantum computation},
  author = {Barenco, Adriano and Bennett, Charles H. and Cleve, Richard and DiVincenzo, David P. and Margolus, Norman and Shor, Peter and Sleator, Tycho and Smolin, John A. and Weinfurter, Harald},
  journal = {Phys. Rev. A},
  volume = {52},
  pages = {3457--3467},
  numpages = {0},
  year = {1995},
  publisher = {American Physical Society},
  doi = {10.1103/PhysRevA.52.3457},
  %url = {https://link.aps.org/doi/10.1103/PhysRevA.52.3457}
}

@article{tantivasadakarn2021long,
  title = {Long-range entanglement from measuring symmetry-protected topological phases},
  author = {Tantivasadakarn, Nathanan and Thorngren, Ryan and Vishwanath, Ashvin and Verresen, Ruben},
  journal = {Phys. Rev. X},
  volume = {14},
  pages = {021040},
  numpages = {22},
  year = {2024},
  publisher = {American Physical Society},
  doi = {10.1103/PhysRevX.14.021040},
  %url = {https://link.aps.org/doi/10.1103/PhysRevX.14.021040}
}

@article{PRXQuantum.3.040337,
  title = {Measurement as a shortcut to long-range entangled quantum matter},
  author = {Lu, Tsung-Cheng and Lessa, Leonardo A. and Kim, Isaac H. and Hsieh, Timothy H.},
  journal = {PRX Quantum},
  volume = {3},
  pages = {040337},
  numpages = {22},
  year = {2022},
  publisher = {American Physical Society},
  doi = {10.1103/PRXQuantum.3.040337},
  %url = {https://link.aps.org/doi/10.1103/PRXQuantum.3.040337}
}

@article{tantivasadakarn2022hierarchy,
  title = {Hierarchy of topological order from finite-depth unitaries, measurement, and feedforward},
  author = {Tantivasadakarn, Nathanan and Vishwanath, Ashvin and Verresen, Ruben},
  journal = {PRX Quantum},
  volume = {4},
  pages = {020339},
  numpages = {25},
  year = {2023},
  publisher = {American Physical Society},
  doi = {10.1103/PRXQuantum.4.020339},
  %url = {https://link.aps.org/doi/10.1103/PRXQuantum.4.020339}
}

@misc{haferkamp2023moments,
      title={On the moments of random quantum circuits and robust quantum complexity}, 
      author={Jonas Haferkamp},
      year={2023},
      journal={arXiv:2303.16944},
doi = {
https://doi.org/10.48550/arXiv.2303.16944}
}

@article{PhysRevA.69.010301,
  title = {Universal quantum circuit for two-qubit transformations with three controlled-NOT gates},
  author = {Vidal, Guifre and Dawson, Chris M.},
  journal = {Phys. Rev. A},
  volume = {69},
  pages = {010301},
  numpages = {4},
  year = {2004},
  publisher = {American Physical Society},
  doi = {10.1103/PhysRevA.69.010301},
  %url = {https://link.aps.org/doi/10.1103/PhysRevA.69.010301}
}

@article{PhysRevB.73.094423,
  title = {Matrix product states represent ground states faithfully},
  author = {Verstraete, Frank and Cirac, J. Ignacio},
  journal = {Phys. Rev. B},
  volume = {73},
  pages = {094423},
  numpages = {8},
  year = {2006},
  publisher = {American Physical Society},
  doi = {10.1103/PhysRevB.73.094423},
  %url = {https://link.aps.org/doi/10.1103/PhysRevB.73.094423}
}

@article{PhysRevLett.95.110503,
  title = {Sequential generation of entangled multiqubit states},
  author = {Sch\"on, Christian and Solano, Enrique and Verstraete, Frank and Cirac, J. Ignacio and Wolf, Michael M.},
  journal = {Phys. Rev. Lett.},
  volume = {95},
  pages = {110503},
  numpages = {4},
  year = {2005},
  publisher = {American Physical Society},
  doi = {10.1103/PhysRevLett.95.110503},
  %url = {https://link.aps.org/doi/10.1103/PhysRevLett.95.110503}
}

@article{cramer2010efficient,
  title={Efficient quantum state tomography},
  author={Cramer, Marcus and Plenio, Martin B and Flammia, Steven T and Somma, Rolando and Gross, David and Bartlett, Stephen D and Landon-Cardinal, Olivier and Poulin, David and Liu, Yi-Kai},
  journal={Nature Comm.},
  doi={10.1038/ncomms1147},
  volume={1},
  number={1},
  pages={149},
  year={2010},
  publisher={Nature Publishing Group UK London}
}

@Article{10.21468/SciPostPhys.7.2.024,
	title={{Entanglement in a fermion chain under continuous monitoring}},
	author={Xiangyu Cao and Antoine Tilloy and Andrea De Luca},
	journal={SciPost Phys.},
	volume={7},
	pages={024},
	year={2019},
	publisher={SciPost},
	doi={10.21468/SciPostPhys.7.2.024},
	%url={https://scipost.org/10.21468/SciPostPhys.7.2.024},
}

@article{RandomHamiltonians,
       title ={Mixing properties of stochastic quantum Hamiltonians},
     journal={Commun. Math. Phys.},
     year=2017,
     doi={10.1007/s00220-017-2950-6},
     volume=355,
     pages=905,
    author={E. Onorati and O. Buerschaper and M. Kliesch and W. Brown and A. H. Werner and J. Eisert}
    }

@article{PhysRevResearch.2.013022,
  title = {Measurement-induced phase transition: A case study in the nonintegrable model by density-matrix renormalization group calculations},
  author = {Tang, Qicheng and Zhu, Wei},
  journal = {Phys. Rev. Res.},
  volume = {2},
  pages = {013022},
  numpages = {7},
  year = {2020},
  publisher = {American Physical Society},
  doi = {10.1103/PhysRevResearch.2.013022},
  %url = {https://link.aps.org/doi/10.1103/PhysRevResearch.2.013022}
}

@article{niroula2023phase,
  title={Phase transition in magic with random quantum circuits},
  author={Niroula, Pradeep and White, Christopher David and Wang, Qingfeng and Johri, Sonika and Zhu, Daiwei and Monroe, Christopher and Noel, Crystal and Gullans, Michael J.},
  journal={Nature Physics},
  pages={1--7},
  year={2024},
  publisher={Nature Publishing Group UK London},
doi = {https://doi.org/10.1038/s41567-024-02637-3}
}

@article{Haferkamp2022randomquantum,
  doi = {10.22331/q-2022-09-08-795},
  %url = {https://doi.org/10.22331/q-2022-09-08-795},
  title = {Random quantum circuits are approximate unitary {$t$}-designs in depth {$O\left(nt^{5+o(1)}\right)$}},
  author = {Haferkamp, Jonas},
  journal = {{Quantum}},
  optissn = {2521-327X},
  publisher = {{Verein zur F{\"{o}}rderung des Open Access Publizierens in den Quantenwissenschaften}},
  volume = {6},
  pages = {795},
  year = {2022}}
\end{document}